\definecolor{lightblue}{rgb}{0.5,0.5,1.0}
\definecolor{darkred}{rgb}{0.5,0,0}
\definecolor{darkgreen}{rgb}{0,0.5,0}
\definecolor{darkblue}{rgb}{0,0,0.5}
\newcommand{\nauty}{\textsc{nauty}}
\DeclareMathOperator{\AND}{AND}
\DeclareMathOperator{\cost}{cost}
\DeclareMathOperator{\opt}{opt}
\newtheorem{theorem}{Theorem}
\newtheorem{lemma}{Lemma}
\newtheorem{corollary}[lemma]{Corollary}
\newtheorem*{problem*}{Problem}
\title{Comparative Design-Choice Analysis of Color Refinement Algorithms Beyond the Worst Case}
\newcommand\blfootnote[1]{%
  \begingroup
  \renewcommand\thefootnote{}\footnote{#1}%
  \addtocounter{footnote}{-1}%
  \endgroup
}
\begin{document}
\author{Markus Anders \and Pascal Schweitzer \and Florian Wetzels}
\maketitle
\begin{abstract}
	Color refinement is a crucial subroutine in symmetry detection in theory as well as practice. It has further applications in machine learning and in computational problems from linear algebra.
	
	While tight lower bounds for the worst case complexity are known [Berkholz, Bonsma, Grohe, ESA2013] no comparative analysis of design choices for color refinement algorithms is available.
	
	 We devise two models within which we can compare color refinement algorithms using formal methods, an online model and an approximation model. We use these to show that no online algorithm is competitive beyond a logarithmic factor and no algorithm can approximate the optimal color refinement splitting scheme beyond a logarithmic factor.
	 
	 We also directly compare strategies used in practice showing that, on some graphs, queue based strategies outperform stack based ones by a logarithmic factor and vice versa. Similar results hold for strategies based on priority queues.
\end{abstract}\blfootnote{The research leading to these results has received funding from the European Research Council (ERC) under the European Union's Horizon 2020 research and innovation programme (EngageS: grant agreement No.~{820148}).}
\thispagestyle{empty}
\newpage
\setcounter{page}{1}
\section{Introduction}
Color refinement, also known as 1-dimensional Weisfeiler-Leman algorithm, is a crucial cornerstone of symmetry detection in theory as well as practice. 
It emerged as a subroutine for algorithms solving the graph isomorphism problem and its efficiency remains to date one of the determining factors for the running time of practical isomorphism solvers. 
Modern, highly efficient implementations are based on Hopcroft's algorithm for automata minimization~\cite{hop71}, which was first adapted to color refinement by McKay in his widely used tool \nauty~\cite{practicaliso1}.
A more recent but also in the meantime large application area of color refinement can be found in machine learning. Specifically, color refinement is used in the Weisfeiler-Leman Kernel for graph classifications as a measure for similarity~\cite{DBLP:journals/jmlr/ShervashidzeSLMB11} and as the foundation of graph neural networks~\cite{DBLP:conf/aaai/0001RFHLRG19}.
The algorithm can also be applied to effectively reduce the size of linear equation systems~\cite{DBLP:conf/esa/GroheKMS14}.

Given a graph, color refinement iteratively recolors the vertices producing increasingly fine partitions of vertices into color classes. 
Starting with an initial, usually monochromatic coloring,
in each iteration the colors of the vertices are chosen to depend on the colors of the neighbors and their multiplicities.
If vertices differ in the number of neighbors they have in some color class, the algorithm \emph{splits} up the vertices accordingly by assigning them distinct colors.
This is done exhaustively until no further splits are possible.   

The applications mentioned above depend on highly engineered implementations of the algorithm.
This is the reason why modern implementations meticulously optimize the color refinement subroutine treating many special cases with tailored code~\cite{DBLP:conf/alenex/JunttilaK07,practicaliso2, doi:10.1137/1.9781611976472.6}.
Especially in machine learning applications 
it is crucial to achieve scalability for big data inputs~\cite{DBLP:journals/jmlr/ShervashidzeSLMB11}.
Overall, demand for fast implementations of color refinement is high.
Since color refinement has a quasilinear worst case running time, even small logarithmic or constant factors can have a crucial impact.

Indeed, the best known implementation of color refinement runs in time $\mathcal{O}(m \log(n))$ (see~\cite{tightLowerBound,DBLP:conf/icalp/KieferM20}). 
Remarkably, within a model with modest assumptions, a tight lower bound construction matching this upper bound was given in 2015~\cite{tightLowerBound}.
This result tells us that there are graphs for which color refinement, no matter how it is implemented, runs in $\Omega(m \log(n))$. However, the result does not make any comparative statements between various ways to implement color refinement. In fact, there are dramatic differences in the various implementations of color refinement. While all color refinement algorithms depend on performing the aforementioned splits, there is a lot of freedom as to which order we perform the splits in. A \emph{worklist} is usually employed to determine in what order these splits are performed. 
Common choices include a stack, queue, priority queue or combinations of these.

So far however, there has been no rigorous analysis as to whether one worklist choice is superior over another -- or how significant the order of splits actually is.
Going one step further, a natural question is whether there are efficient optimal solutions. If not the case, maybe there are at least solutions that are competitive with all other methods.

\textbf{Contribution.} This paper performs an in-depth comparative analysis of design choices for color refinement algorithms. The first challenge is to actually find a model within which we can compare color refinement algorithms with formal methods.
We employ a two-pronged approach.
We distinguish (1) algorithms that may only use information realistically collected during the color refinement process itself, and (2) algorithms that are allowed to compute additional information about the underlying graph.
Remarkably, our results in the two orthogonal models concur in their conclusion. Namely, that there is no design choice that is competitive beyond a logarithmic factor.

More specifically, in (1) we model algorithms that may only access information explored during the color refinement process itself. 
For this we define a formal online model within which, in fact, all practical algorithms operate.
In this model, the algorithmic decisions of when to refine with respect to what may solely depend on this information.
We prove that this information does not suffice to make optimal or even competitive choices, no matter the amount of computational power used. Specifically, we show no online algorithm is within a logarithmic factor of the offline optimum. We also investigate the direct relationship between practical (online) color refinement strategies. Each of strategies stack, queue, and priority queue, is outperformed by another of the strategies by a logarithmic factor on some graphs.

For (2), we define an ``offline'' version of the problem, which is essentially to compute an optimal split order for a given graph. 
Through a reduction from the set cover problem we prove an approximation hardness result.
Specifically, unless $\P = \NP$, no approximation factor in~$o(\log(n))$ can be achieved by polynomial-time algorithms.
This proves that unless $\P = \NP$, even when collecting more information about the underlying graph than current algorithms actually do, computing a competitive let alone optimal order of splits is intractable.

Overall, our results demonstrate that while the choice of worklist can indeed make a crucial difference, there is no clear optimal color refinement strategy. We conclude that users need to adapt color refinement algorithms to the specific type of graphs encountered in the algorithmic application area in mind.

\section{Color Refinement}
All graphs in this paper are simple, undirected graphs, unless stated otherwise.
The neighborhood of a vertex~$v$ is denoted~$N(v)$. For a set of vertices~$V'\subseteq V(G)$ the \emph{neighborhood} is the set~$N[V']\coloneqq (\cup_{v\in V'} N(v))\setminus V'$. A \emph{coloring} of a graph~$G$ is a map~$\pi\colon V(G)\rightarrow \mathcal{C}$ from the vertices to some set of colors. A (color) \emph{class} is a set~$\pi^{-1}(c)$ of vertices of the same color. 

We begin with a discussion of the color refinement algorithm itself.
Algorithm~\ref{alg:refine} describes a typical rendition of color refinement. The basic idea is as follows. If two vertices in some class~$X$ have a different number of neighbors in some class~$C$ then~$X$ can be split by partitioning it according to neighbor counts in~$C$. 
Whenever we split up a class $X$ according to its connections to another class $C$ in such a fashion (see Line~\ref{line:refine:wrt} and Line~\ref{line:refine:split}) we say that we \emph{refine $X$ with respect to $C$}. 
Specifically this means that after the split, two vertices have the same color precisely if they had the same color before the split and they have the same number of neighbors in~$X$.
We repeatedly split classes with respect to other classes until no further splits are possible. A partition not admitting further splits is called \emph{equitable}. 

\SetKwProg{Fn}{function}{}{end}
\SetKwFunction{Refine}{ColorRefinement}
\begin{algorithm}[t] 
	\SetAlgoLined
	\SetAlgoNoEnd
	\caption[Refinement Procedure]{A typical rendition of color refinement.}\label{alg:refine}
	\Fn{\Refine{G, $\pi$}}{
		\SetKwInOut{Input}{Input}
		\SetKwInOut{Output}{Output}
		\Input{graph $G$, coloring $\pi$}
		\Output{refined coloring $\pi$}
		initialize empty worklist $W$\;
		put all cells of $\pi$ into $W$\;
		\While{{\normalfont $W$ is non-empty}}{
			take a cell $C$ from $W$\;\label{line:extract:cell}
			\For{{\normalfont each cell $X$ containing a neighbor of a vertex in $C$}}{
				for each vertex in $X$ count its neighbors in $C$ \; \label{line:refine:wrt}
				split $X$ into $X_1, \dots{}, X_k$ in $\pi$, according to neighbor counts\; \label{line:refine:split}
				let $X_i$ be one of the largest cells of $X_1, \dots{}, X_k$\; \label{line:refine:largest}
				put all sets $X_1, \dots{}, X_k$ except $X_i$ into $W$\; \label{line:refine:putstack}
				\lIf{$X \in W$}{replace $X$ in $W$ with $X_i$}
			}
		}
		\Return{$\pi$}
	}
\end{algorithm}

Algorithm~\ref{alg:refine} maintains the classes with respect to which refinements still have to be performed in a worklist~$W$.
Note that the algorithm does not fully specify the internals of the worklist. Specifically, it does not state in Line~\ref{line:extract:cell} which cell is extracted from the worklist. We should emphasize that the final partition into color classes is independent of the choices of cells that are extracted, however the overall running time may depend on it.
Typical implementations use a stack, queue, priority queue or a similar data structure. All of these choices result in the same 
worst case running time of~$\Theta((n+m) (\log n))$ (see~\cite{tightLowerBound}). To achieve this running time it is crucial to prevent one largest cell (Line~\ref{line:refine:largest}) from being added to the worklist. Splits with respect to this class are already covered by the other classes.

Overall, the main design choice of the algorithm is the choice of when to split which class with which other class. To describe a general framework for the possible strategies of what to split when, we first need to understand what information is available to the algorithm for making its decision.

\subsection{Partial Quotient Graphs}
For an equitable partition, quotient graphs capture the information of how many neighbors vertices from one class have in another class. They are used in so-called individualization-refinement algorithms as pruning invariants (see~\cite{practicaliso2}). Typically, the quotient graph is computed on the fly during the execution of a color refinement algorithm. 

We now introduce the concept of \emph{partial quotient graphs}. 
These graphs are a tool to formalize the information gathered up to a certain point during the execution of color refinement algorithms. 
As we cannot precisely say which information an algorithm collects, 
the quotient graphs give an overapproximation of the available information and model all information that could have possibly been gathered. For the purpose of our lower bounds, overapproximating can only strengthen the conclusions.

The partial quotient graph of a colored graph $(G, \pi)$ is denoted by $P(G, \pi)$. 
Quotient graphs are directed and contain self-loops.
They include vertex labels $l_V$ as well as edge labels $l_E$. The vertex set of $P(G, \pi)$ is the set of all sets of colors of~$(G,\pi)$, i.e., $V(P(G, \pi)) := 2^{\pi(V(G))}$. 
A set of colors represents the class that is the union of the respective color classes.

Vertices of the partial quotient graph are labeled with the size of their corresponding set of vertices in $G$, i.e., for all sets of colors $c \in 2^{\pi(V(G))}$ we define $l_V(P(G, \pi))(c) := |\pi^{-1}(c)|$, where by~$\pi^{-1}(c)$ we denote the vertices whose color is in~$c$.
The edge set contains all connections between (unions of) color classes that would not cause a split.  
Thus there is an edge from~$c_1$ to~$c_2$ if~$\pi^{-1}(c_2)$ does not split~$\pi^{-1}(c_1)$.
Formally, this means
\begin{align*}
	E(P(G, \pi)) := \{(c_1, c_2) \;|\;& c_1, c_2 \in 2^{\pi(V(G))},\forall v,w \in \pi^{-1}(c_1):\; d_{\pi^{-1}(c_2)}(v) = d_{\pi^{-1}(c_2)}(w)\}.
\end{align*}
Edges only exist whenever the connection between unions of color classes are regular on one side, so we can label each edge with the corresponding degree, i.e., 
$l_E(P(G, \pi))((c_1, c_2)) := d_{\pi^{-1}(c_2)}(v)$, where $v \in \pi^{-1}(c_1)$ is arbitrary.

Let us justify the definition with an example. Suppose we split in a monochromatic graph the class of all vertices with itself. Then the new coloring partitions the vertices precisely by degree. That is, classes contain vertices of the same degree. An algorithm would know this degree, since it has counted the edges incident with each vertex, but it would not know how many neighbors a vertex has within a current color class. In the partial quotient graph, there is an edge from each new color classes to the union of all color classes.

The definition of partial quotient graphs contains many more vertices and edges and information on these than would truly be available while executing color refinement.
In fact, partial quotient graphs grow exponentially in size, since all possible unions of color classes are considered.
Common color refinement algorithms clearly gather much less information. Firstly, only connections of classes that are involved in a refinement are actually considered.
Secondly, only information about unions of colors that occurred as a color class in a previous step of the refinement is known.
Thus, usually color refinement algorithms only uncover a small, polynomial-sized portion of the partial quotient graphs defined above.

However, for our lower bounds, we assume that algorithms have access to the entire partial quotient graphs. 
We show that even if we generously allow such access, the information is not sufficient to derive a strategy with constant competitive ratio. 
For upper bounds, we only use information of the aforementioned polynomial-sized portion of partial quotient graphs. In fact, the upper bounds are based on a stack-based approach akin to Algorithm~\ref{alg:refine}. 
\subsection{Online Model}
We now define a model that bases the choice of which color classes to use for the next refinement solely on the information available through partial quotient graphs.  
Practical implementations such as a queue or a stack are naturally captured by this, but the model even allows for much more powerful choices. 
The goal is then to prove that no strategy based solely on information of partial quotient graphs is sufficient to make optimal choices.

Let us start by defining the concept of a \emph{strategy} $W: \mathcal{P}^* \to (2^\mathbb{N})^2$. A strategy is a function mapping a string of quotient graphs $P =P_1 \cdots{} P_k\in \mathcal{P}^*$ to two vertices of the last quotient graph $(C,X) \in V(P_k)^2$, that is, two unions of color classes. 
The string of graphs $P$ denotes all partial quotient graphs observed during execution of the algorithm up to step $k$. 
The pair $(C, X)$ denotes the choice of colors with which the algorithm continues in the next step: in step $k + 1$, the algorithm refines $X$ with respect to $C$. 

\begin{algorithm}[t] 
	\SetAlgoLined
	\SetAlgoNoEnd
	\caption[Refinement Procedure]{Corresponding color refinement for a strategy $W$.}\label{alg:refinecorr}
	\Fn{\Refine{G, $\pi$}}{
		\SetKwInOut{Input}{Input}
		\SetKwInOut{Output}{Output}
		\Input{graph $G$, coloring $\pi$}
		\Output{refined coloring $\pi$}
		create list $S$ containing~$P(G, \pi)$\;
		\While{{$\pi$ is not equitable}}{
			$(C, X) := W(S)$\;
			for each vertex in $X$ count its neighbors in $C$\;
			split $X$ into $X_1, \dots{}, X_k$ in $\pi$, according to neighbor counts\; \label{line:refinecorr:split}
			append $P(G, \pi)$ to~$S$\;	
		}
		\Return{$\pi$}
	}
\end{algorithm}

For a strategy $W$ we now define a \emph{corresponding color refinement implementation}. 
Assume we are working on $G$ and have already refined up to a coloring $\pi_k$ within $k$ steps.
Furthermore, let $P_1, \dots{}, P_k$ denote the partial quotient graphs corresponding to the execution.
Next, we compute $(C, X) = W(P_1\cdots{}P_k)$ and refine $X$ with respect to $C$.
The algorithm terminates whenever $\pi_k$ is equitable.
A formal definition is given in Algorithm~\ref{alg:refinecorr}.
We call $W$ a \emph{valid} strategy if the corresponding color refinement implementation is correct, i.e., if it terminates with an equitable partition in finite time on all finite graphs.

Throughout this paper, we measure the \emph{cost} of the strategy~$W$, denoted~$\cost(W, G)$, in terms of the number of edges that need to be considered to execute the refinements. 
Specifically, when refining $X$ with respect to $C$, we charge the algorithm the number of edges connecting $X$ with $C$. This is the same model as used in \cite{tightLowerBound} reflecting the actual running time of practical implementations (see~\cite{practicaliso1,practicaliso2}). We use the terms cost and time interchangeably.

\section{Graph Gadgets}\label{sec:gadgets}
Throughout the paper we construct graphs that cause color refinement to behave in particular manners. 
These graphs are mostly built using three types of graph gadgets, described next.

\textbf{And gadgets.}
Let us first discuss the $\AND_i$ gadgets as used by Berkholz et al.~\cite{tightLowerBound}. There is set~$B$ of $2^{i}$ in-vertices that come in pairs and $2$ out-vertices.
The goal of the gadget is that whenever all pairs of \emph{in-vertices} have been split, a split of two \emph{out-vertices} $a_0$ and $a_1$ is induced, but not before. 

The $\AND_2$ gadget (see Figure~\ref{fig:basicGadgets}) is the well known CFI-gadget~\cite{CFI}, where two gates form the in-vertices $B$ and the third one the out-vertices $a_0, a_1$.

The $\AND_i$ gadget is constructed recursively using $\AND_2$ gadgets. 
For $i>2$, the $\AND_i$ gadget is constructed by taking the union of two $\AND_{i-1}$ and one $\AND_2$ gadget. The four out-vertices of the $\AND_{i-1}$ gadgets are then connected to the four in-vertices of the $\AND_2$ gadget. 
Figure~\ref{fig:basicGadgets} shows how the $\AND_3$ gadget can be constructed using three $\AND_2$ gadgets.

The important property is that in an $\AND_i$ gadget, all pairs $b_{2j},b_{2j+1}$ with $j \in \{0,...,2^{i-1}\}$ need to be distinguished to induce a split of $a_0$ and $a_1$.  We should also record a property for the opposite direction: if $a_0$ and $a_1$ are distinguished, no split on $B$ should be induced.

\textbf{Unidirectional gadgets.}
We now describe the \emph{undirectional gadget}. 
As the name suggests, it blocks the continuation of a split of pairs in one direction but allows it in the opposite direction. 
Figure~\ref{fig:basicGadgets} illustrates the gadget.

The gadget behaves as follows. 
Consider in-vertices $b_0$, $b_1$ and out-vertices $a_0$ and $a_1$. Distinguishing $b_0$ and $b_1$ should induce a split of $a_0$ and $a_1$. 
However, distinguishing $a_0$ and $a_1$ should \emph{not} cause a split of $b_0$ and $b_1$.
The gadget is obtained through a modification of the $\AND_2$ gadget. 
We use the fact that a split of out-vertices in $\AND_2$ does not cause a split of the pairs of in-vertices. 
Therefore, by connecting the in-vertices to new vertices $a_0$ and $a_1$, such that the $\AND_2$ gadget is activated by any of the two singletons, we get the desired property. 

Interestingly, the unidirectional gadget has also been used as a crucial building block in~\cite{DBLP:conf/mfcs/ArvindFKKR16} and~\cite{DBLP:conf/focs/Grohe96} to study the complexity of various problems closely related to color refinement.

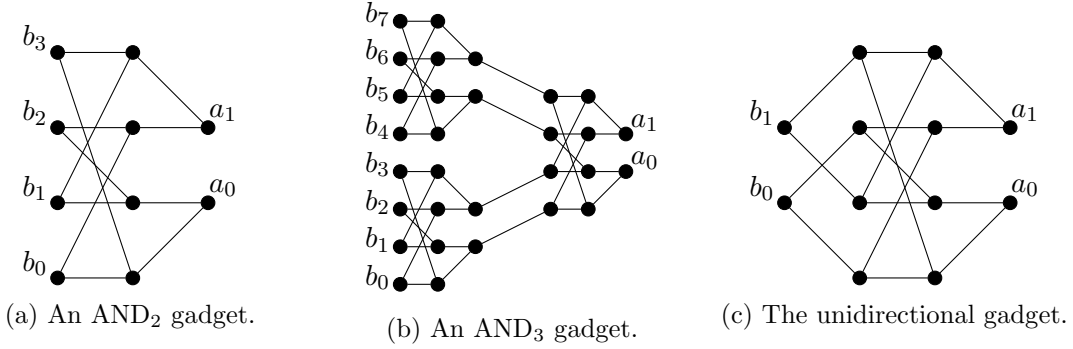
\begin{figure}[]
	
	\centering
	\begin{subfigure}{0.3\linewidth}
		\centering
		\begin{tikzpicture}
		\node[draw,circle,fill,scale=0.5] at (0,0) (b_0) {};
		\node[draw,circle,fill,scale=0.5] at (0,1) (b_1) {};
		\node[draw,circle,fill,scale=0.5] at (0,2) (b_2) {};
		\node[draw,circle,fill,scale=0.5] at (0,3) (b_3) {};
		
		\node[draw,circle,fill,scale=0.5] at (1,0) (c_0) {};
		\node[draw,circle,fill,scale=0.5] at (1,1) (c_1) {};
		\node[draw,circle,fill,scale=0.5] at (1,2) (c_2) {};
		\node[draw,circle,fill,scale=0.5] at (1,3) (c_3) {};
		
		\node[draw,circle,fill,scale=0.5] at (2,1) (a_0) {};
		\node[draw,circle,fill,scale=0.5] at (2,2) (a_1) {};
		
		\node[] at (-0.3,0.2) (b_0_label) {$b_0$};
		\node[] at (-0.3,1.2) (b_1_label) {$b_1$};
		\node[] at (-0.3,2.2) (b_2_label) {$b_2$};
		\node[] at (-0.3,3.2) (b_3_label) {$b_3$};
				
		\node[] at (2.2,1.2) (a_0_label) {$a_0$};
		\node[] at (2.2,2.2) (a_1_label) {$a_1$};
		
		\node[fill=none,stroke=none] at (0,3.5) (dummy) {};
		
		\draw (b_0) -- (c_0);
		\draw (b_0) -- (c_2);
		\draw (b_1) -- (c_1);
		\draw (b_1) -- (c_3);
		\draw (b_2) -- (c_1);
		\draw (b_2) -- (c_2);
		\draw (b_3) -- (c_0);
		\draw (b_3) -- (c_3);
		
		\draw (c_0) -- (a_0);
		\draw (c_1) -- (a_0);
		\draw (c_2) -- (a_1);
		\draw (c_3) -- (a_1);
		\end{tikzpicture}
		\caption{An $\AND_2$ gadget.}
		\label{fig:AND2}
	\end{subfigure}
	\begin{subfigure}{0.3\linewidth}
		\centering
		\begin{tikzpicture}[scale=0.5]
		\node[draw,circle,fill,scale=0.5] at (0,0) (b_0_0) {};
		\node[draw,circle,fill,scale=0.5] at (0,1) (b_1_0) {};
		\node[draw,circle,fill,scale=0.5] at (0,2) (b_2_0) {};
		\node[draw,circle,fill,scale=0.5] at (0,3) (b_3_0) {};
		
		\node[draw,circle,fill,scale=0.5] at (1,0) (c_0_0) {};
		\node[draw,circle,fill,scale=0.5] at (1,1) (c_1_0) {};
		\node[draw,circle,fill,scale=0.5] at (1,2) (c_2_0) {};
		\node[draw,circle,fill,scale=0.5] at (1,3) (c_3_0) {};
		
		\node[draw,circle,fill,scale=0.5] at (2,1) (a_0_0) {};
		\node[draw,circle,fill,scale=0.5] at (2,2) (a_1_0) {};
		
		\node[draw,circle,fill,scale=0.5] at (0,4) (b_4_0) {};
		\node[draw,circle,fill,scale=0.5] at (0,5) (b_5_0) {};
		\node[draw,circle,fill,scale=0.5] at (0,6) (b_6_0) {};
		\node[draw,circle,fill,scale=0.5] at (0,7) (b_7_0) {};
		
		\node[draw,circle,fill,scale=0.5] at (1,4) (c_4_0) {};
		\node[draw,circle,fill,scale=0.5] at (1,5) (c_5_0) {};
		\node[draw,circle,fill,scale=0.5] at (1,6) (c_6_0) {};
		\node[draw,circle,fill,scale=0.5] at (1,7) (c_7_0) {};
		
		\node[draw,circle,fill,scale=0.5] at (2,5) (a_2_0) {};
		\node[draw,circle,fill,scale=0.5] at (2,6) (a_3_0) {};

		\node[draw,circle,fill,scale=0.5] at (4,2) (b_0_1) {};
		\node[draw,circle,fill,scale=0.5] at (4,3) (b_1_1) {};
		\node[draw,circle,fill,scale=0.5] at (4,4) (b_2_1) {};
		\node[draw,circle,fill,scale=0.5] at (4,5) (b_3_1) {};
		
		\node[draw,circle,fill,scale=0.5] at (5,2) (c_0_1) {};
		\node[draw,circle,fill,scale=0.5] at (5,3) (c_1_1) {};
		\node[draw,circle,fill,scale=0.5] at (5,4) (c_2_1) {};
		\node[draw,circle,fill,scale=0.5] at (5,5) (c_3_1) {};
		
		\node[draw,circle,fill,scale=0.5] at (6,3) (a_0_1) {};
		\node[draw,circle,fill,scale=0.5] at (6,4) (a_1_1) {};
		
		\node[] at (-0.5-0.1,0.2) (b_0_label) {$b_0$};
		\node[] at (-0.5-0.1,1.2) (b_1_label) {$b_1$};
		\node[] at (-0.5-0.1,2.2) (b_2_label) {$b_2$};
		\node[] at (-0.5-0.1,3.2) (b_3_label) {$b_3$};
		\node[] at (-0.5-0.1,4.2) (b_4_label) {$b_4$};
		\node[] at (-0.5-0.1,5.2) (b_5_label) {$b_5$};
		\node[] at (-0.5-0.1,6.2) (b_6_label) {$b_6$};
		\node[] at (-0.5-0.1,7.2) (b_7_label) {$b_7$};
		
		\node[] at (6.3+0.2,3.3) (a_0_label) {$a_0$};
		\node[] at (6.3+0.2,4.3) (a_1_label) {$a_1$};
		
		\draw (b_0_0) -- (c_0_0);
		\draw (b_0_0) -- (c_2_0);
		\draw (b_1_0) -- (c_1_0);
		\draw (b_1_0) -- (c_3_0);
		\draw (b_2_0) -- (c_1_0);
		\draw (b_2_0) -- (c_2_0);
		\draw (b_3_0) -- (c_0_0);
		\draw (b_3_0) -- (c_3_0);
		
		\draw (b_4_0) -- (c_4_0);
		\draw (b_4_0) -- (c_6_0);
		\draw (b_5_0) -- (c_5_0);
		\draw (b_5_0) -- (c_7_0);
		\draw (b_6_0) -- (c_5_0);
		\draw (b_6_0) -- (c_6_0);
		\draw (b_7_0) -- (c_4_0);
		\draw (b_7_0) -- (c_7_0);
		
		\draw (c_0_0) -- (a_0_0);
		\draw (c_1_0) -- (a_0_0);
		\draw (c_2_0) -- (a_1_0);
		\draw (c_3_0) -- (a_1_0);
		
		\draw (c_4_0) -- (a_2_0);
		\draw (c_5_0) -- (a_2_0);
		\draw (c_6_0) -- (a_3_0);
		\draw (c_7_0) -- (a_3_0);

		\draw (b_0_1) -- (c_0_1);
		\draw (b_0_1) -- (c_2_1);
		\draw (b_1_1) -- (c_1_1);
		\draw (b_1_1) -- (c_3_1);
		\draw (b_2_1) -- (c_1_1);
		\draw (b_2_1) -- (c_2_1);
		\draw (b_3_1) -- (c_0_1);
		\draw (b_3_1) -- (c_3_1);
		
		\draw (c_0_1) -- (a_0_1);
		\draw (c_1_1) -- (a_0_1);
		\draw (c_2_1) -- (a_1_1);
		\draw (c_3_1) -- (a_1_1);

		\draw (a_0_0) -- (b_0_1);
		\draw (a_1_0) -- (b_1_1);
		\draw (a_2_0) -- (b_2_1);
		\draw (a_3_0) -- (b_3_1);
		\end{tikzpicture}
		\caption{An $\AND_3$ gadget.}
		\label{fig:AND3}
	\end{subfigure}
	\begin{subfigure}{0.3\linewidth}
		\centering
		\begin{tikzpicture}
		\node[draw,circle,fill,scale=0.5] at (0,0) (b_0) {};
		\node[draw,circle,fill,scale=0.5] at (0,1) (b_1) {};
		\node[draw,circle,fill,scale=0.5] at (0,2) (b_2) {};
		\node[draw,circle,fill,scale=0.5] at (0,3) (b_3) {};
		
		\node[draw,circle,fill,scale=0.5] at (1,0) (c_0) {};
		\node[draw,circle,fill,scale=0.5] at (1,1) (c_1) {};
		\node[draw,circle,fill,scale=0.5] at (1,2) (c_2) {};
		\node[draw,circle,fill,scale=0.5] at (1,3) (c_3) {};
		
		\node[draw,circle,fill,scale=0.5] at (2,1) (a_0) {};
		\node[draw,circle,fill,scale=0.5] at (2,2) (a_1) {};
		
		\node[draw,circle,fill,scale=0.5] at (-1,1) (a_0') {};
		\node[draw,circle,fill,scale=0.5] at (-1,2) (a_1') {};
		
		\node[] at (2.2,1.2) (a_0_label) {$a_0$};
		\node[] at (2.2,2.2) (a_1_label) {$a_1$};
		
		\node[] at (-1.3,1.2) (a_0'_label) {$b_0$};
		\node[] at (-1.3,2.2) (a_1'_label) {$b_1$};
		
		\node[fill=none,stroke=none] at (0,3.5) (dummy) {};
		
		\draw (b_0) -- (c_0);
		\draw (b_0) -- (c_2);
		\draw (b_1) -- (c_1);
		\draw (b_1) -- (c_3);
		\draw (b_2) -- (c_1);
		\draw (b_2) -- (c_2);
		\draw (b_3) -- (c_0);
		\draw (b_3) -- (c_3);
		
		\draw (c_0) -- (a_0);
		\draw (c_1) -- (a_0);
		\draw (c_2) -- (a_1);
		\draw (c_3) -- (a_1);
		
		\draw (b_0) -- (a_0');
		\draw (b_2) -- (a_0');
		\draw (b_1) -- (a_1');
		\draw (b_3) -- (a_1');
		\end{tikzpicture}
		\caption{The unidirectional gadget.}
		\label{fig:unidirectional_gadget}
	\end{subfigure}
	
	\caption{Basic gadget constructions as used throughout the paper. Vertices labeled with $b_i$ always denote in-vertices, while $a_i$ denotes out-vertices.}
	\label{fig:basicGadgets}
\end{figure}

\textbf{Concealer gadgets.}
We conclude our discussion of gadgets with the \emph{concealer gadgets}. 
Similar to the $\AND_i$ gadget, a concealer gadget $C_i$ of level $i$ has $2^i$ in-vertices $B$ and $2$ out-vertices $a_0, a_1$. 
Whereas in the $\AND$ gadget, \emph{all} input pairs need to be distinguished, the concealer gadget only includes \emph{one} specific pair that causes a split of the out-vertices.
We call the pair causing the split of out-vertices the \emph{correct pair}, while all other pairs not causing the split are called \emph{dead end pairs}.

The idea is that the correct pair can not be located easily by color refinement algorithms. 
Hence, the gadget \emph{conceals} where refinement can be continued. 

To achieve this behavior, the gadget consists of $2^{i-1}$ unidirectional gadgets and the out-vertices $a_0, a_1$.
We modify all but one of the unidirectional gadgets so that the connection of the in-gate agrees with the one of the out-gate.
This causes these gadgets to become dead ends -- activating any of these gadgets has no effect on the out-vertices. 
The last, unmodified unidirectional gadget is the only one that can actually split the out-vertices and is therefore the only correct gadget.

The out-vertices of the entire concealer gadget are then connected to the out-vertices of all the unidirectional gadgets so that activating the correct pair causes a split of the out-vertices. 
Figure~\ref{fig:concealer_gadget} shows a concealer gadget $C_3$.

Since we did not specify which of the pairs is the correct pair, there are several concealer gadgets for each $i \in \mathbb{N}$. Abusing notation we denote all of them by $C_i$.
The concealer gadgets have two crucial properties. First, as long as the correct pair has not been split (and the neighbors of a correct pair have not been split) the partial quotient graphs of two concealer gadgets on the same size are isomorphic. Second, the correct pair can only be split from outside the gadget.
We formalize these properties in the following.

Consider two colored concealer gadgets $(C_i, \pi), (C_i', \pi')$ of the same order. Suppose $\{b_s,b_{s+1}\}$ is the correct pair in~$(C_i, \pi)$ and~$\{b_t,b_{t+1}\}$ is the correct pair in~$(C'_i, \pi')$. We say the two graphs still \emph{concur} if the colors for the vertices agree (note that the two graphs have the same vertex set) and in both graphs neither the correct pairs nor their neighbors have been split. Specifically, we require that
	\begin{itemize}
		\item the vertex colorings agree, (i.e.,~$\pi(v) = \pi'(v)$ for every~$v\in V(C_i)=V(C'_i)$),
	
		\item  the correct pairs have not been distinguished (i.e.,~$\pi(b_s)=\pi(b_{s+1})$ and~$\pi'(b_t)=\pi'(b_{t+1})$),
			
		\item the neighbors of the correct pairs have not been distinguished (i.e.,~$\pi(v) = \pi(v')$ for all~$v,v'\in N_{C_i}(b_s)\cup N_{C_i}(b_{s+1})$ and~$\pi(v) = \pi(v')$ for all~$v,v'\in  N_{C'_i}(b_t)\cup N_{C'_i}(b_{t+1})$).
	\end{itemize}

\begin{lemma} \label{lem:concealer} 
	Suppose $(C_i, \pi)$ and~$(C_i', \pi')$ are colored concealer gadgets that concur. Then the graphs have the same partial quotient graphs, i.e., $P(C_i, \pi) = P(C_i', \pi')$.
\end{lemma}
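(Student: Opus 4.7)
The plan is to reduce the claimed equality $P(C_i, \pi) = P(C_i', \pi')$ to the existence of a color-preserving graph isomorphism $\sigma\colon (C_i, \pi) \to (C_i', \pi')$. Since the two concealer gadgets share the same vertex set and the colorings $\pi$ and $\pi'$ agree as functions, both partial quotient graphs automatically have the same vertex set $2^{\pi(V(C_i))}$ and the same vertex labels $c \mapsto |\pi^{-1}(c)|$. The content of the lemma therefore lies entirely in the edges and edge labels: for every pair of color-subsets $c_1, c_2$, the vertices of $\pi^{-1}(c_1)$ must be uniformly regular toward $\pi^{-1}(c_2)$ in $C_i$ precisely when they are in $C_i'$, and with the same common degree.

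A color-preserving graph isomorphism $\sigma$ implies this: it permutes $\pi^{-1}(c_1)$ within itself and carries edges of $C_i$ bijectively to edges of $C_i'$, so the multiset of degrees from $\pi^{-1}(c_1)$ into $\pi^{-1}(c_2)$ is the same in both graphs. Regularity, together with the common degree, therefore transfers verbatim between the partial quotient graphs.

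To construct $\sigma$, I would exploit that $C_i$ and $C_i'$ agree on their shared vertex set and differ only in which of the $2^{i-1}$ component unidirectional gadgets is correct: position $s$ in $C_i$ and position $t$ in $C_i'$. By the construction of the concealer gadget, the dead-end variant of a unidirectional gadget is obtained from the correct one by a local edge modification that can be realized as an involution of that gadget's vertex set exchanging its in-pair together with the interior vertices swapped along with it. I would compose two such local involutions -- one at position $s$ turning the correct gadget of $C_i$ into the dead-end gadget of $C_i'$, and a reverse one at position $t$ -- and extend by the identity elsewhere. By construction, the resulting $\sigma$ is a graph isomorphism from $C_i$ to $C_i'$. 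The concurrence hypotheses guarantee that each correct pair together with its neighborhood is monochromatic, so every vertex actually moved by $\sigma$ lies in a monochromatic color class, making $\sigma$ color-preserving with respect to $\pi = \pi'$.

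The hardest part is the last step: precisely identifying the local involution that relates the correct and dead-end variants of a unidirectional gadget, and showing that it is color-preserving under the concurrence assumptions. This requires unpacking the modification defining dead-end gadgets and arguing that the only vertices inside a unidirectional gadget whose colors could have been distinguished from those of their twins are the out-vertices connecting to the concealer's $a_0, a_1$ -- vertices on which the involution must act trivially -- while the in-pair and its immediate neighbors are governed directly by the concurrence conditions. Once this local statement is in place, composing two copies of the involution into $\sigma$ and invoking the degree-counting argument above yields $P(C_i, \pi) = P(C_i', \pi')$.
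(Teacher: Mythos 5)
There is a genuine gap, and it sits exactly where you flagged ``the hardest part'': the local involution you need does not exist. The correct and the dead-end variants of the unidirectional gadget are \emph{not} isomorphic as graphs. In the unmodified (correct) gadget, an in-vertex is adjacent to one vertex from each inner pair, and those two neighbours share a common middle vertex, so the in-pair lies on $4$-cycles; in the dead-end variant an in-vertex is adjacent to both vertices of one inner pair, which have no common neighbour, and one checks that the dead-end gadget contains no $4$-cycle at all. Hence no bijection of the gadget's vertex set (involution or otherwise) carries the correct edge set to the dead-end edge set, so the map $\sigma$ you propose to build from two such local maps cannot be a graph isomorphism from $C_i$ to $C_i'$. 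This non-isomorphy is not an accident: it is precisely what makes the gadget behave differently once the pair is split. More fundamentally, the lemma cannot be reduced to the existence of a color-preserving isomorphism at all. Concurrence constrains only the two correct pairs and their neighbourhoods; the colors of the middle and out-vertices of the gadgets at positions $s$ and $t$ are unconstrained and may differ between the two positions. For instance, give the four middle vertices of gadget $s$ a color occurring nowhere in gadget $t$: the colorings still concur, but any color-preserving isomorphism would have to map gadget $s$'s interior to itself and hence the correct variant onto the dead-end variant, which is impossible; yet the lemma's conclusion still holds. So the conclusion is strictly weaker than colored-graph isomorphism, and any proof through isomorphism is doomed in general.

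The paper's proof avoids this entirely by a direct counting argument. The two graphs have identical edge sets except for edges joining the correct pairs $\{b_s,b_{s+1}\}$, $\{b_t,b_{t+1}\}$ to their neighbourhoods. By concurrence each of the four sets $\{b_s,b_{s+1}\}$, $\{b_t,b_{t+1}\}$, $N[\{b_s,b_{s+1}\}]$, $N[\{b_t,b_{t+1}\}]$ is monochromatic, so any union of color classes $X$ contains each of them either entirely or not at all; moreover every vertex is incident with the same number of these ``differing'' edges in $C_i$ as in $C_i'$ (each correct-pair vertex has two such edges, each neighbour has one, in both variants). Consequently every vertex has the same number of neighbours in $\pi^{-1}(X)$ in both graphs, which gives equality of the edge sets, edge labels, and vertex labels of the partial quotient graphs. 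If you want to salvage your write-up, replace the isomorphism construction by this comparison of neighbour counts restricted to the symmetric difference of the edge sets; the first part of your argument (that equal counts force $P(C_i,\pi)=P(C_i',\pi')$) can stay.
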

\begin{proof}
	Suppose for a vertex~$v$ we want to count the number of neighbors that~$v$ has in a union of color classes~$X$. We claim that this number is the same in~$C_i$ and~$C'_i$. Indeed, we only need to consider edges incident with~$v$ that have one endpoint in~$M=\{b_s,b_{s+1},b_t,b_{t+1}\}$ and one endpoint in~$N[M]$ (the neighborhood of~$M$). Let~$E'$ be the set of these edges and let~$E'_v$ be the set of these edges incident with~$v$.
	
	Note that for each of the four sets~$\{b_s,b_{s+1}\}$,~$\{b_t,b_{t+1}\}$,~$N[\{b_s,b_{s+1}\}]$, and~$N[\{b_t,b_{t+1}\}]$ either~$X$ contains the set entirely or not at all. 
	
	If~$v$ is in~$M$ then either all edges of~$E'_v$ have an endpoint in~$X$ or no such edge does. 
	
	Likewise if~$v$ is in~$N[M]$ then either all edges of~$E'_v$ have an endpoint in~$X$ or no such edge does.

	Moreover, in either case, whether all such edges are or no such edge is contained does not depend on whether we consider~$C_i$ or~$C'_i$.
	
	This implies that the number of edges counted in the refinement  (i.e., those incident with~$v$ and having an endpoint in~$X$) is the same in~$C_i$ and~$C'_i$.
\end{proof}

\begin{lemma} \label{lem:concealer:refine} 
	For concealer gadgets $(C_i, \pi)$ and~$(C_i', \pi')$ suppose~$\pi=\pi'$ so that
	\begin{itemize}
	\item vertices in an input pair that is correct in one of the graphs have the same color and
	\item all vertices that are not in an input pair have the same color.
	\end{itemize}
Then $(C_i, \pi)$ and~$(C_i', \pi')$ concur. After an arbitrary sequence of splits to both graphs the resulting graphs still concur and neither correct input pairs nor the out pair are split.
\end{lemma}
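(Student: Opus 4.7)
The plan is to verify initial concurrence directly from the two hypotheses and then preserve concurrence by induction on the number of applied refinement steps, with the structural symmetries of the concealer doing the heavy lifting. For the base case, $\pi(v)=\pi'(v)$ is immediate from $\pi=\pi'$; the first bullet of the hypothesis puts both correct input pairs into single color classes; and for the third concurrence condition, the neighbors in the concealer of the in-vertices of the correct unidirectional subgadget are internal middle vertices of that subgadget, none of which belong to any input pair of the concealer, so the second bullet forces them into a common color class.

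For the inductive step, suppose the two gadgets concur after $k$ refinements and consider the refinement of some color-class union $X$ with respect to $C$ at step $k+1$. The argument inside the proof of Lemma~\ref{lem:concealer} in fact establishes the stronger statement that for every vertex $v$ and every union $X$ of color classes, the number of neighbors of $v$ in $\pi^{-1}(X)=\pi'^{-1}(X)$ coincides in $C_i$ and $C'_i$. Consequently the refinement of $X$ with respect to $C$ produces the same new color partition in both graphs, so the first concurrence condition persists and the two colorings remain literally equal.

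The remaining conditions -- together with the additional claim that the out pair stays together -- follow from an automorphism argument. Each concealer $C_i$ admits a subgroup $G\le\Aut(C_i)$ generated by (a) the external swap of the correct unidirectional subgadget, which exchanges $b_s\leftrightarrow b_{s+1}$, mirrors the internal $\AND_2$, and induces the exchange $a_0\leftrightarrow a_1$ at the concealer's out pair; and (b) the swap of the two in-pairs of the $\AND_2$ gate of the correct unidirectional subgadget. Under $G$, the correct pair, the out pair, and the set $N_{C_i}(b_s)\cup N_{C_i}(b_{s+1})$ each form a single orbit, while every other vertex is fixed. By the two hypotheses the initial coloring is $G$-invariant, since every $G$-orbit is either a correct pair, an out pair, or a set of non-input-pair vertices, all of which are monochromatic to start. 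Because color refinement preserves invariance of the coloring under every graph automorphism, every subsequent coloring remains $G$-invariant, which keeps the correct pair, the out pair, and the set $N_{C_i}(b_s)\cup N_{C_i}(b_{s+1})$ each inside a single color class. Running the analogous argument inside $C'_i$ with its own correct pair closes the induction.

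The main obstacle is pinning down the generators of $G$: it must be large enough to gather $N(b_s)\cup N(b_{s+1})$ into one orbit and to swap the concealer's $a_0\leftrightarrow a_1$, yet each generator must be a genuine graph automorphism of the concealer under the concrete wiring from Section~\ref{sec:gadgets}. This is routine but fiddly gadget-level bookkeeping; everything else in the proof reduces to the quotient-graph agreement already supplied by Lemma~\ref{lem:concealer} and the general principle that color refinement respects graph automorphisms.
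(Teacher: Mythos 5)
Your base case and the first half of the inductive step are fine: once the two gadgets concur, the degree-count argument inside the proof of Lemma~\ref{lem:concealer} indeed gives identical neighbor counts for every vertex and every union of color classes, so a refinement step produces literally the same coloring in both graphs. The genuine gap is in the automorphism argument, specifically in generator (a). The swap of the correct pair $b_s\leftrightarrow b_{s+1}$ extends to an automorphism of the concealer only by flipping \emph{both} in-pairs of the correct gadget's internal $\AND_2$; by the CFI parity rule (automorphisms flip an even number of gates) this automorphism \emph{fixes} that gadget's out-pair, and hence fixes the concealer's out-pair as well. An automorphism that exchanges the concealer's out-vertices $a_0\leftrightarrow a_1$ would have to restrict to an automorphism of the correct unidirectional gadget exchanging its $\AND_2$ out-pair while mapping its in-pair $\{b_s,b_{s+1}\}$ to itself, and no such automorphism exists: exchanging the out-pair means flipping the $Z$-gate, so by parity exactly one of the two in-gates must also be flipped, and neither choice is compatible with the asymmetric attachment of $b_s,b_{s+1}$ (which are joined to one vertex of each in-gate, unlike in the dead-end wiring); one can check this exhaustively, e.g.\ by viewing the gadget as an $8$-cycle on the $b$- and $c$-vertices with pendant attachments. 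So the out-pair does not lie in a nontrivial orbit of any (color-preserving) automorphism group of the concealer, and invariance under automorphisms cannot deliver the conclusion ``the out pair is never split'' -- which is an essential part of the statement. This is not a fixable bookkeeping detail: such an automorphism \emph{must not} exist, since in the full construction the out-pair is split as soon as the correct pair is split from outside.

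For that part of the claim you need the combinatorial ``functionality'' argument the paper's induction appeals to: in a dead-end gadget a split of its in-vertices can at most separate the two in-gates of its $\AND_2$ from one another, never within a gate, so its out-pair (and hence the concealer's out-pair) stays together; and in the correct gadget nothing splits at all because its in-pair is never split -- the latter is exactly what your (corrected) automorphisms $\sigma$ (flip both in-gates, swapping $b_s\leftrightarrow b_{s+1}$) and $\tau$ (exchange the two in-gates) do establish, and together they also keep $N(b_s)\cup N(b_{s+1})$ monochromatic. So roughly half of your argument is sound and close in spirit to the paper, but the out-pair statement requires the explicit gadget-propagation invariant rather than symmetry.
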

\begin{proof}
	This follows by induction on the number of steps observing that the functionality of the unidirectional gadget ensures that the output pair is never split, and thus vertices inside correct gadgets are never split.
\end{proof}

The two lemmas show that unless a correct pair is split, the gadgets always concur and an algorithm in the online model will have to perform splits consistently on both graphs. Moreover, the output pair is never split. 

Intuitively this means that in the online model, an algorithm can only guess which pair is the correct pair. 
Therefore, when faced with a concealer gadget, the algorithm potentially has to try all input pairs.

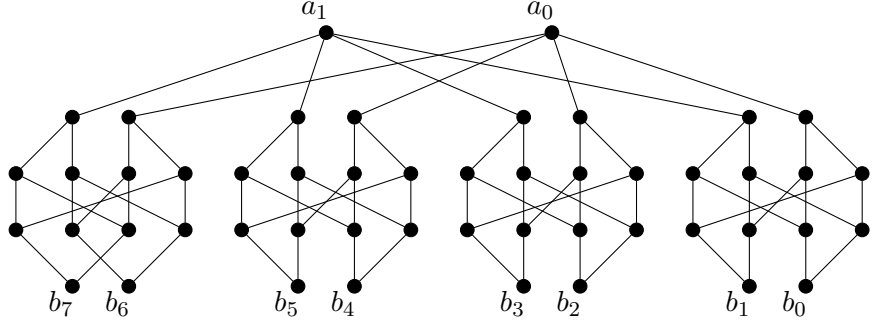
\begin{figure}[]
	\centering
	\begin{tikzpicture}[rotate=90,scale=0.75]
	
	\node[draw,circle,fill,scale=0.5] at (4-0.5,5.5) (a_0) {};
	\node[draw,circle,fill,scale=0.5] at (4-0.5,9.5) (a_1) {};
	
	\foreach \i in {0,...,2}{
	
		\node[draw,circle,fill,scale=0.5] at (0,4*\i+0) (ud\i_b_0) {};
		\node[draw,circle,fill,scale=0.5] at (0,4*\i+1) (ud\i_b_1) {};
		\node[draw,circle,fill,scale=0.5] at (0,4*\i+2) (ud\i_b_2) {};
		\node[draw,circle,fill,scale=0.5] at (0,4*\i+3) (ud\i_b_3) {};
		
		\node[draw,circle,fill,scale=0.5] at (1,4*\i+0) (ud\i_c_0) {};
		\node[draw,circle,fill,scale=0.5] at (1,4*\i+1) (ud\i_c_1) {};
		\node[draw,circle,fill,scale=0.5] at (1,4*\i+2) (ud\i_c_2) {};
		\node[draw,circle,fill,scale=0.5] at (1,4*\i+3) (ud\i_c_3) {};
		
		\node[draw,circle,fill,scale=0.5] at (2,4*\i+1) (ud\i_a_0) {};
		\node[draw,circle,fill,scale=0.5] at (2,4*\i+2) (ud\i_a_1) {};
		
		\node[draw,circle,fill,scale=0.5] at (-1,4*\i+1) (ud\i_a_0') {};
		\node[draw,circle,fill,scale=0.5] at (-1,4*\i+2) (ud\i_a_1') {};
		
		\draw (ud\i_b_0) -- (ud\i_c_0);
		\draw (ud\i_b_0) -- (ud\i_c_2);
		\draw (ud\i_b_1) -- (ud\i_c_1);
		\draw (ud\i_b_1) -- (ud\i_c_3);
		\draw (ud\i_b_2) -- (ud\i_c_1);
		\draw (ud\i_b_2) -- (ud\i_c_2);
		\draw (ud\i_b_3) -- (ud\i_c_0);
		\draw (ud\i_b_3) -- (ud\i_c_3);
		
		\draw (ud\i_c_0) -- (ud\i_a_0);
		\draw (ud\i_c_1) -- (ud\i_a_0);
		\draw (ud\i_c_2) -- (ud\i_a_1);
		\draw (ud\i_c_3) -- (ud\i_a_1);
		
		\draw (ud\i_b_0) -- (ud\i_a_0');
		\draw (ud\i_b_1) -- (ud\i_a_0');
		\draw (ud\i_b_2) -- (ud\i_a_1');
		\draw (ud\i_b_3) -- (ud\i_a_1');
		
		\draw (a_0) -- (ud\i_a_0);
		\draw (a_1) -- (ud\i_a_1);
	
	}

	\node[draw,circle,fill,scale=0.5] at (0,4*3+0) (ud3_b_0) {};
	\node[draw,circle,fill,scale=0.5] at (0,4*3+1) (ud3_b_1) {};
	\node[draw,circle,fill,scale=0.5] at (0,4*3+2) (ud3_b_2) {};
	\node[draw,circle,fill,scale=0.5] at (0,4*3+3) (ud3_b_3) {};
	
	\node[draw,circle,fill,scale=0.5] at (1,4*3+0) (ud3_c_0) {};
	\node[draw,circle,fill,scale=0.5] at (1,4*3+1) (ud3_c_1) {};
	\node[draw,circle,fill,scale=0.5] at (1,4*3+2) (ud3_c_2) {};
	\node[draw,circle,fill,scale=0.5] at (1,4*3+3) (ud3_c_3) {};
	
	\node[draw,circle,fill,scale=0.5] at (2,4*3+1) (ud3_a_0) {};
	\node[draw,circle,fill,scale=0.5] at (2,4*3+2) (ud3_a_1) {};
	
	\node[draw,circle,fill,scale=0.5] at (-1,4*3+1) (ud3_a_0') {};
	\node[draw,circle,fill,scale=0.5] at (-1,4*3+2) (ud3_a_1') {};
	
	\draw (ud3_b_0) -- (ud3_c_0);
	\draw (ud3_b_0) -- (ud3_c_2);
	\draw (ud3_b_1) -- (ud3_c_1);
	\draw (ud3_b_1) -- (ud3_c_3);
	\draw (ud3_b_2) -- (ud3_c_1);
	\draw (ud3_b_2) -- (ud3_c_2);
	\draw (ud3_b_3) -- (ud3_c_0);
	\draw (ud3_b_3) -- (ud3_c_3);
	
	\draw (ud3_c_0) -- (ud3_a_0);
	\draw (ud3_c_1) -- (ud3_a_0);
	\draw (ud3_c_2) -- (ud3_a_1);
	\draw (ud3_c_3) -- (ud3_a_1);
	
	\draw (ud3_b_0) -- (ud3_a_0');
	\draw (ud3_b_2) -- (ud3_a_0');
	\draw (ud3_b_1) -- (ud3_a_1');
	\draw (ud3_b_3) -- (ud3_a_1');
	
	\draw (a_0) -- (ud3_a_0);
	\draw (a_1) -- (ud3_a_1);

	\node[] at (-1.3,1.2) (a_0'_label) {$b_0$};
	\node[] at (-1.3,2.2) (a_1'_label) {$b_1$};
	\node[] at (-1.3,5.2) (a_2'_label) {$b_2$};
	\node[] at (-1.3,6.2) (a_3'_label) {$b_3$};
	\node[] at (-1.3,9.2) (a_4'_label) {$b_4$};
	\node[] at (-1.3,10.2) (a_5'_label) {$b_5$};
	\node[] at (-1.3,13.2) (a_6'_label) {$b_6$};
	\node[] at (-1.3,14.2) (a_7'_label) {$b_7$};
	
	\node[] at (4.3-0.4,5.7) (a_0_label) {$a_0$};
	\node[] at (4.3-0.4,9.7) (a_1_label) {$a_1$};
	
	\end{tikzpicture}
	\caption{A concealer gadget $C_3$. Vertices $b_6, b_7$ form the correct pair; other pairs are dead ends.}
	\label{fig:concealer_gadget}
\end{figure}

\section{Competitive Ratio}\label{sec:competitiveRatio}
We prove the non-existence of a $c$-competitive strategy in the online model.
In particular, in this section, we prove the following theorem:
\begin{theorem}\label{thm:competitive:ratio} \label{thm:competitive} For every strategy $W$ of the online model, there is an infinite family of graphs $G_k$ ($k \in \mathbb{N}$) such that  
	$\cost(W, G_k) \in \Omega(\opt(G_k) \cdot \log(\opt(G_k)))$,
where $\opt(G_k)\in \Theta(|G_k|)$ is the minimal cost of a strategy on $G_k$.
\end{theorem}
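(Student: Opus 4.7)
The plan is to construct, for any given online strategy~$W$, a family of adversarial graphs $G_k$ assembled from concealer gadgets. The guiding intuition is that Lemmas~\ref{lem:concealer} and~\ref{lem:concealer:refine} together imply that in the online model, the partial quotient graphs an algorithm observes while processing a concealer do not depend on which input pair is designated ``correct''; the correct pair only matters once it is actually split, and this must happen from outside. Consequently, an adversary constructing $G_k$ can defer the choice of correct pair inside each concealer and always set it to the input pair that~$W$ attempts last.

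My building block is a concealer $C_i$ of size $\Theta(2^i)$ equipped with external ``trigger'' vertices that let the algorithm attempt to split one input pair at a time. For the overall graph $G_k$, I would take $n/\log n$ independent copies of such a concealer with $i=\Theta(\log n)$ input pairs; alternatively one can take a single concealer but force each trial to cost $\Theta(\log n)$ work through a layered trigger structure. In either arrangement, the total graph size is $\Theta(n)$, and the correct input pair in each concealer can be fixed in advance by the optimal strategy.

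To bound $\opt(G_k)$, I would describe an explicit offline strategy that, knowing the correct pair in each concealer, refines only the triggers connected to correct pairs, propagates the resulting splits through the unidirectional gadgets to the outputs, and then completes the equitable partition. This strategy processes each edge of $G_k$ a constant number of times, so $\opt(G_k)=\Theta(|G_k|)$. To bound $\cost(W,G_k)$ from below, I would run~$W$ against an adversary that progressively commits correct pairs: whenever~$W$ tries an as-yet-uncommitted input pair of some concealer, the adversary marks that pair as a dead end, provided at least one uncommitted pair remains; otherwise the last pair is forced to be correct. Lemmas~\ref{lem:concealer} and~\ref{lem:concealer:refine} ensure this is consistent with the partial quotient graphs~$W$ observes. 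This forces~$W$ to perform $\Omega(\log n)$ dead-end trials per concealer, each costing $\Omega(1)$ of charged work, for a total of $\Omega(n \log n) = \Omega(\opt(G_k)\cdot \log \opt(G_k))$.

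The main obstacle is tuning the sizing of concealers and triggers to obtain exactly a logarithmic gap: too few input pairs per concealer yields no gap, while too many input pairs push $\opt$ away from $\Theta(|G_k|)$ or require careful accounting to keep the per-trial cost tight. The adversarial commitment argument also needs a clean induction: after each commitment, the residual graph must still satisfy the concurrence hypothesis of Lemma~\ref{lem:concealer} with respect to the remaining concealers, which is delicate when refinements outside a concealer can interact with its triggers. Handling these interactions cleanly — likely by insulating each concealer's triggers so that outside refinements never distinguish uncommitted input pairs prematurely — is the technical core of the argument.
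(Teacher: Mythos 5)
Your high-level plan matches the paper's: use concealer gadgets, invoke Lemmas~\ref{lem:concealer} and~\ref{lem:concealer:refine} so an adversary can re-designate the correct pair to be whatever $W$ tries last, and exhibit an offline strategy that knows the correct pairs and runs in linear time. However, the concrete construction you propose does not deliver the logarithmic gap, and the missing piece is exactly the quantitative core of the argument. With $n/\log n$ independent concealers of $\Theta(\log n)$ input pairs and constant-cost trials, the forced cost is $(n/\log n)\cdot \Theta(\log n)\cdot O(1)=O(n)$, not $\Omega(n\log n)$; your stated total is an arithmetic slip. Your alternative fix, making each trial cost $\Theta(\log n)$ via a per-pair trigger structure, inflates $|G_k|$ by the same factor (those trigger edges are part of the graph and are charged to the optimum as well), so again $\cost(W,G_k)=\Theta(|G_k|)$ and no gap appears. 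There is also nothing in a flat arrangement preventing the algorithm from splitting many input pairs with a single refinement with respect to one large trigger class, which both the online algorithm and the optimum would pay for once.

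What the paper does instead, and what your sketch lacks, is a \emph{shared} expensive structure that gets re-charged once per level of a hierarchy: four middle layers $X,\mathcal{X},\mathcal{Y},Y$ of total size $\Theta(m)=\Theta(2^k k^2)$, with the level-$l$ concealer gadget gated so that its input pairs can only be split via the level-$l$ binary blocks of $Y$, and those blocks only become available after the level-$(l-1)$ correct pair has been found (this gating is the content of Lemma~\ref{lemma:n_A} in the appendix). Trying one input pair at level $l$ means propagating one block of size $2^{k-l}$ through the bipartite middle layers at cost $\Theta(2^{k-l}k^2)$; the adversary forces $W$ to try essentially all $2^l$ blocks, i.e.\ $\Theta(2^k k^2)=\Theta(m)$ per level and $\Theta(m\log n)$ over $k$ levels, while the offline strategy propagates only the single correct block per level, a geometric sum totalling $\Theta(m)=\Theta(|G_k|)$. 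The gap thus comes from the same $\Theta(m)$ edges being traversed $\log n$ times by $W$ but only $O(1)$ times (in aggregate) by the optimum --- a mechanism of cost re-use across a hierarchy of exponentially shrinking blocks that independent copies or per-trial dedicated triggers cannot reproduce. The ``insulation'' and ``sequential gating'' issues you flag as the technical core are resolved in the paper precisely by this block structure together with Lemmas~\ref{lem:concealer}, \ref{lem:concealer:refine} and \ref{lemma:n_A}, so your proposal as it stands is an intuition for the right theorem but not yet a proof.
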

The theorem implies that the information provided by partial quotient graphs is not sufficient to make competitive let alone optimal choices in color refinement algorithms.

\newif\ifcol
\newif\ifcoltwo
\coltrue
\coltwotrue
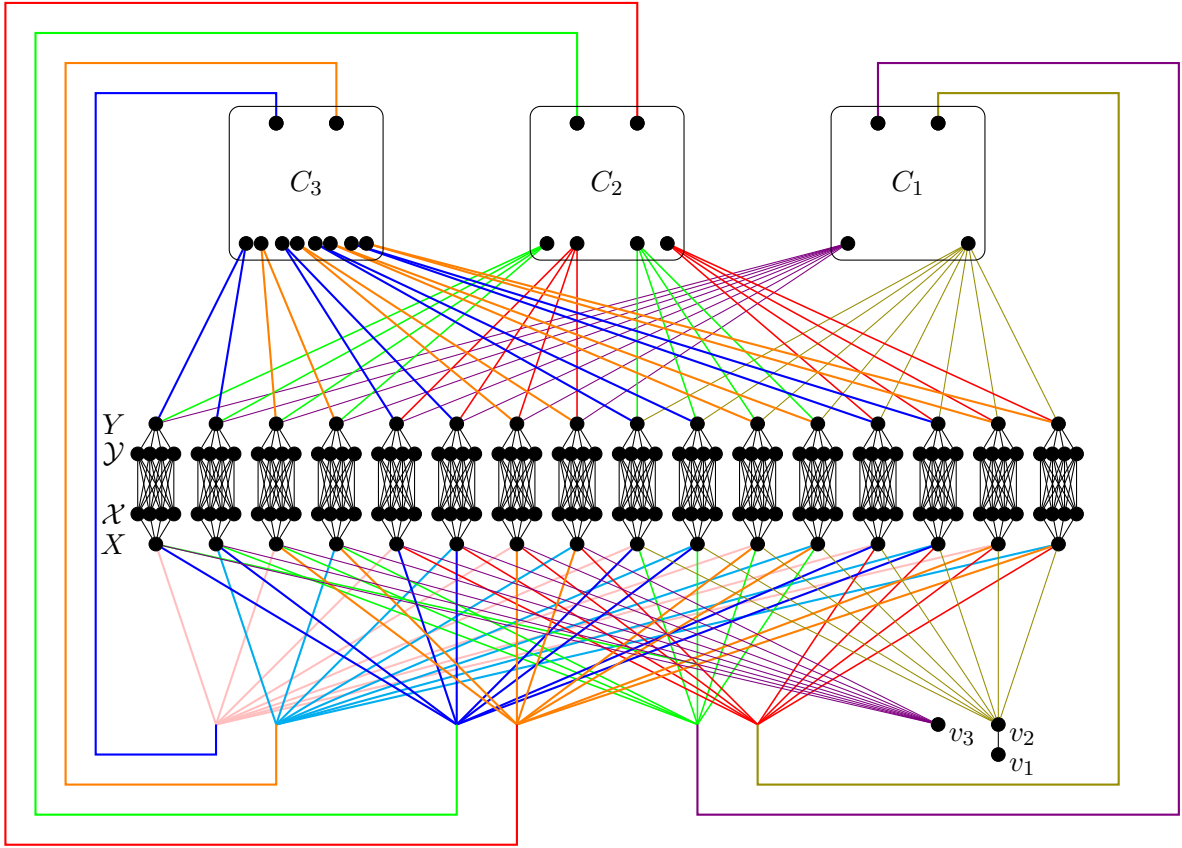
\begin{figure}[t]
	\centering
	\begin{tikzpicture}[rotate=90,scale=0.8]
	\node[draw,circle,fill,scale=0.5] at (-4, 3) (v_3) {};
	\node[] at (-4.2, 2.7-0.1) (v_3_label) {$v_3$};
	\node[draw,circle,fill,scale=0.5] at (-4, 2) (v_2) {};
	\node[] at (-4.2, 1.7-0.1) (v_2_label) {$v_2$};
	\node[draw,circle,fill,scale=0.5] at (-4.5, 2) (v_1) {};
	\node[] at (-4.7, 1.7-0.1) (v_1_label) {$v_1$};
	\draw[] (v_1) -- (v_2);
	
	\node[] at (-1, 16.6+0.1) (X_label) {$X$};
	\node[] at (-0.5, 16.6+0.1) (XX_label) {$\mathcal{X}$};
	\node[] at (0.5, 16.6+0.1) (YY_label) {$\mathcal{Y}$};
	\node[] at (1, 16.6+0.1) (Y_label) {$Y$};
	
	\node[draw,circle,fill,scale=0.5] at (6, 3) (a_1_1) {};
	\node[draw,circle,fill,scale=0.5] at (6, 4) (a_1_2) {};
	\draw \ifcoltwo [olive,thick] \else [] \fi (6,3) -- (6.5,3) -- (6.5,0) -- (-5,0) -- (-5,6) -- (-4,6);
	\draw \ifcoltwo [violet,thick] \else [] \fi (6,4) -- (7,4) -- (7,-1) -- (-5.5,-1) -- (-5.5,7) -- (-4,7);
	
	\node[draw,circle,fill,scale=0.5] at (6, 8) (a_2_1) {};
	\node[draw,circle,fill,scale=0.5] at (6, 9) (a_2_2) {};
	\draw \ifcoltwo [red,thick] \else [] \fi (6,8) -- (8,8) -- (8,18.5) -- (-6,18.5) -- (-6,10) -- (-4,10);
	\draw \ifcoltwo [green,thick] \else [] \fi (6,9) -- (7.5,9) -- (7.5,18) -- (-5.5,18) -- (-5.5,11) -- (-4,11);
	
	\node[draw,circle,fill,scale=0.5] at (6, 13) (a_3_1) {};
	\node[draw,circle,fill,scale=0.5] at (6, 14) (a_3_2) {};
	\draw \ifcoltwo [orange,thick] \else [] \fi (6,13) -- (7,13) -- (7,17.5) -- (-5,17.5) -- (-5,14) -- (-4,14);
	\draw \ifcoltwo [blue,thick] \else [] \fi (6,14) -- (6.5,14) -- (6.5,17) -- (-4.5,17) -- (-4.5,15) -- (-4,15);
	
	\node[draw,circle,fill,scale=0.5] at (4, 2.5) (a_1_i1) {};
	\node[draw,circle,fill,scale=0.5] at (4, 4.5) (a_1_i2) {};
	\node[draw,circle,fill,scale=0.5] at (6, 3) (a_1_o1) {};
	\node[draw,circle,fill,scale=0.5] at (6, 4) (a_1_o2) {};
	\node[draw,rounded corners,fit=(a_1_i1) (a_1_i2) (a_1_o1) (a_1_o2)] (a_1_gadget) {};
	\node[] at (5, 3.5) (a_1_gadget_label) {$C_1$};
	
	\node[draw,circle,fill,scale=0.5] at (4, 7.5) (a_2_i1) {};
	\node[draw,circle,fill,scale=0.5] at (4, 8) (a_2_i2) {};
	\node[draw,circle,fill,scale=0.5] at (4, 9) (a_2_i3) {};
	\node[draw,circle,fill,scale=0.5] at (4, 9.5) (a_2_i4) {};
	\node[draw,circle,fill,scale=0.5] at (6, 8) (a_2_o1) {};
	\node[draw,circle,fill,scale=0.5] at (6, 9) (a_2_o2) {};
	\node[draw,rounded corners,fit=(a_2_i1) (a_2_i4) (a_2_o1) (a_2_o2)] (a_2_gadget) {};
	\node[] at (5, 8.5) (a_1_gadget_label) {$C_2$};
	
	\node[draw,circle,fill,scale=0.5] at (4, 12.5) (a_3_i1) {};
	\node[draw,circle,fill,scale=0.5] at (4, 12.75) (a_3_i2) {};
	\node[draw,circle,fill,scale=0.5] at (4, 13.1) (a_3_i3) {};
	\node[draw,circle,fill,scale=0.5] at (4, 13.35) (a_3_i4) {};
	\node[draw,circle,fill,scale=0.5] at (4, 13.65) (a_3_i5) {};
	\node[draw,circle,fill,scale=0.5] at (4, 13.9) (a_3_i6) {};
	\node[draw,circle,fill,scale=0.5] at (4, 14.25) (a_3_i7) {};
	\node[draw,circle,fill,scale=0.5] at (4, 14.5) (a_3_i8) {};
	\node[draw,circle,fill,scale=0.5] at (6, 13) (a_3_o1) {};
	\node[draw,circle,fill,scale=0.5] at (6, 14) (a_3_o2) {};
	\node[draw,rounded corners,fit=(a_3_i1) (a_3_i8) (a_3_o1) (a_3_o2)] (a_3_gadget) {};
	\node[] at (5, 13.5) (a_1_gadget_label) {$C_3$};
	
	\foreach \i in {1,...,16}{
		\node[draw,circle,fill,scale=0.5] at (-1, \i) (x_\i) {};
		\node[draw,circle,fill,scale=0.5] at (1, \i) (y_\i) {};
		\foreach \j in {0,...,3}{
			\node[draw,circle,fill,scale=0.5] at (-0.5, \i+\j*0.2-0.3) (x_\i_\j) {};
			\node[draw,circle,fill,scale=0.5] at (0.5, \i+\j*0.2-0.3) (y_\i_\j) {};
			\draw[] (x_\i) -- (x_\i_\j);
			\draw[] (y_\i) -- (y_\i_\j);
		}
		\foreach \j in {0,...,3}{
			\foreach \k in {0,...,3}{
				\draw[] (x_\i_\k) -- (y_\i_\j);
			}
		}
	}
	\foreach \i in {1,...,8}{
		\draw \ifcol [olive] \else [] \fi (x_\i) -- (v_2);
	}
	\foreach \i in {9,...,16}{
		\draw \ifcol [violet] \else [] \fi (x_\i) -- (v_3);
	}
	
	\begin{pgfonlayer}{bg}
		\draw \ifcol [olive] \else [] \fi (y_1) -- (a_1_i1);
		\draw \ifcol [olive] \else [] \fi (y_2) -- (a_1_i1);
		\draw \ifcol [olive] \else [] \fi (y_3) -- (a_1_i1);
		\draw \ifcol [olive] \else [] \fi (y_4) -- (a_1_i1);
		\draw \ifcol [olive] \else [] \fi (y_5) -- (a_1_i1);
		\draw \ifcol [olive] \else [] \fi (y_6) -- (a_1_i1);
		\draw \ifcol [olive] \else [] \fi (y_7) -- (a_1_i1);
		\draw \ifcol [olive] \else [] \fi (y_8) -- (a_1_i1);
		\draw \ifcol [violet] \else [] \fi (y_9) -- (a_1_i2);
		\draw \ifcol [violet] \else [] \fi (y_10) -- (a_1_i2);
		\draw \ifcol [violet] \else [] \fi (y_11) -- (a_1_i2);
		\draw \ifcol [violet] \else [] \fi (y_12) -- (a_1_i2);
		\draw \ifcol [violet] \else [] \fi (y_13) -- (a_1_i2);
		\draw \ifcol [violet] \else [] \fi (y_14) -- (a_1_i2);
		\draw \ifcol [violet] \else [] \fi (y_15) -- (a_1_i2);
		\draw \ifcol [violet] \else [] \fi (y_16) -- (a_1_i2);
		
		\draw \ifcol [red,semithick] \else [] \fi (y_1) -- (a_2_i1);
		\draw \ifcol [red,semithick] \else [] \fi (y_2) -- (a_2_i1);
		\draw \ifcol [red,semithick] \else [] \fi (y_3) -- (a_2_i1);
		\draw \ifcol [red,semithick] \else [] \fi (y_4) -- (a_2_i1);
		\draw \ifcol [green,semithick] \else [] \fi (y_5) -- (a_2_i2);
		\draw \ifcol [green,semithick] \else [] \fi (y_6) -- (a_2_i2);
		\draw \ifcol [green,semithick] \else [] \fi (y_7) -- (a_2_i2);
		\draw \ifcol [green,semithick] \else [] \fi (y_8) -- (a_2_i2);
		\draw \ifcol [red,semithick] \else [] \fi (y_9) -- (a_2_i3);
		\draw \ifcol [red,semithick] \else [] \fi (y_10) -- (a_2_i3);
		\draw \ifcol [red,semithick] \else [] \fi (y_11) -- (a_2_i3);
		\draw \ifcol [red,semithick] \else [] \fi (y_12) -- (a_2_i3);
		\draw \ifcol [green,semithick] \else [] \fi (y_13) -- (a_2_i4);
		\draw \ifcol [green,semithick] \else [] \fi (y_14) -- (a_2_i4);
		\draw \ifcol [green,semithick] \else [] \fi (y_15) -- (a_2_i4);
		\draw \ifcol [green,semithick] \else [] \fi (y_16) -- (a_2_i4);
		
		\draw \ifcol [orange,thick] \else [] \fi (y_1) -- (a_3_i1);
		\draw \ifcol [orange,thick] \else [] \fi (y_2) -- (a_3_i1);
		\draw \ifcol [blue,thick] \else [] \fi (y_3) -- (a_3_i2);
		\draw \ifcol [blue,thick] \else [] \fi (y_4) -- (a_3_i2);
		\draw \ifcol [orange,thick] \else [] \fi (y_5) -- (a_3_i3);
		\draw \ifcol [orange,thick] \else [] \fi (y_6) -- (a_3_i3);
		\draw \ifcol [blue,thick] \else [] \fi (y_7) -- (a_3_i4);
		\draw \ifcol [blue,thick] \else [] \fi (y_8) -- (a_3_i4);
		\draw \ifcol [orange,thick] \else [] \fi (y_9) -- (a_3_i5);
		\draw \ifcol [orange,thick] \else [] \fi (y_10) -- (a_3_i5);
		\draw \ifcol [blue,thick] \else [] \fi (y_11) -- (a_3_i6);
		\draw \ifcol [blue,thick] \else [] \fi (y_12) -- (a_3_i6);
		\draw \ifcol [orange,thick] \else [] \fi (y_13) -- (a_3_i7);
		\draw \ifcol [orange,thick] \else [] \fi (y_14) -- (a_3_i7);
		\draw \ifcol [blue,thick] \else [] \fi (y_15) -- (a_3_i8);
		\draw \ifcol [blue,thick] \else [] \fi (y_16) -- (a_3_i8);
		
		\draw \ifcol [cyan,thick] \else [] \fi (x_1) -- (-4,14);
		\draw \ifcol [pink,thick] \else [] \fi (x_2) -- (-4,15);
		\draw \ifcol [cyan,thick] \else [] \fi (x_3) -- (-4,14);
		\draw \ifcol [pink,thick] \else [] \fi (x_4) -- (-4,15);
		\draw \ifcol [cyan,thick] \else [] \fi (x_5) -- (-4,14);
		\draw \ifcol [pink,thick] \else [] \fi (x_6) -- (-4,15);
		\draw \ifcol [cyan,thick] \else [] \fi (x_7) -- (-4,14);
		\draw \ifcol [pink,thick] \else [] \fi (x_8) -- (-4,15);
		\draw \ifcol [cyan,thick] \else [] \fi (x_9) -- (-4,14);
		\draw \ifcol [pink,thick] \else [] \fi (x_10) -- (-4,15);
		\draw \ifcol [cyan,thick] \else [] \fi (x_11) -- (-4,14);
		\draw \ifcol [pink,thick] \else [] \fi (x_12) -- (-4,15);
		\draw \ifcol [cyan,thick] \else [] \fi (x_13) -- (-4,14);
		\draw \ifcol [pink,thick] \else [] \fi (x_14) -- (-4,15);
		\draw \ifcol [cyan,thick] \else [] \fi (x_15) -- (-4,14);
		\draw \ifcol [pink,thick] \else [] \fi (x_16) -- (-4,15);
		
		\draw \ifcol [red,semithick] \else [] \fi (x_1) -- (-4,6);
		\draw \ifcol [red,semithick] \else [] \fi (x_2) -- (-4,6);
		\draw \ifcol [red,semithick] \else [] \fi (x_3) -- (-4,6);
		\draw \ifcol [red,semithick] \else [] \fi (x_4) -- (-4,6);
		\draw \ifcol [green,semithick] \else [] \fi (x_5) -- (-4,7);
		\draw \ifcol [green,semithick] \else [] \fi (x_6) -- (-4,7);
		\draw \ifcol [green,semithick] \else [] \fi (x_7) -- (-4,7);
		\draw \ifcol [green,semithick] \else [] \fi (x_8) -- (-4,7);
		\draw \ifcol [red,semithick] \else [] \fi (x_9) -- (-4,6);
		\draw \ifcol [red,semithick] \else [] \fi (x_10) -- (-4,6);
		\draw \ifcol [red,semithick] \else [] \fi (x_11) -- (-4,6);
		\draw \ifcol [red,semithick] \else [] \fi (x_12) -- (-4,6);
		\draw \ifcol [green,semithick] \else [] \fi (x_13) -- (-4,7);
		\draw \ifcol [green,semithick] \else [] \fi (x_14) -- (-4,7);
		\draw \ifcol [green,semithick] \else [] \fi (x_15) -- (-4,7);
		\draw \ifcol [green,semithick] \else [] \fi (x_16) -- (-3,7);
		
		\draw \ifcol [orange,thick] \else [] \fi (x_1) -- (-4,10);
		\draw \ifcol [orange,thick] \else [] \fi (x_2) -- (-4,10);
		\draw \ifcol [blue,thick] \else [] \fi (x_3) -- (-4,11);
		\draw \ifcol [blue,thick] \else [] \fi (x_4) -- (-4,11);
		\draw \ifcol [orange,thick] \else [] \fi (x_5) -- (-4,10);
		\draw \ifcol [orange,thick] \else [] \fi (x_6) -- (-4,10);
		\draw \ifcol [blue,thick] \else [] \fi (x_7) -- (-4,11);
		\draw \ifcol [blue,thick] \else [] \fi (x_8) -- (-4,11);
		\draw \ifcol [orange,thick] \else [] \fi (x_9) -- (-4,10);
		\draw \ifcol [orange,thick] \else [] \fi (x_10) -- (-4,10);
		\draw \ifcol [blue,thick] \else [] \fi (x_11) -- (-4,11);
		\draw \ifcol [blue,thick] \else [] \fi (x_12) -- (-4,11);
		\draw \ifcol [orange,thick] \else [] \fi (x_13) -- (-4,10);
		\draw \ifcol [orange,thick] \else [] \fi (x_14) -- (-4,10);
		\draw \ifcol [blue,thick] \else [] \fi (x_15) -- (-4,11);
		\draw \ifcol [blue,thick] \else [] \fi (x_16) -- (-4,11);
	\end{pgfonlayer}
	\end{tikzpicture}
	\caption[]{A concealer graph from the class $\mathcal{G}_4$.}
	\label{fig:G_4_competitiveRatio}
\end{figure}

Towards this goal, we first define the class of \emph{concealer graphs}, which we denote with $\mathcal{G}_k$ ($k\in \mathbb{N}$).
Concealer graphs resemble the graphs of the lower bound construction in \cite{tightLowerBound} closely.
Essentially, we swap out $\AND_i$ gadgets in the original construction for concealer gadgets $C_i$.
A concealer graph of $\mathcal{G}_4$ is illustrated in Figure~\ref{fig:G_4_competitiveRatio}.

The main idea is that we can then speed-up or slow-down particular strategies by changing the position of the correct pairs within the concealer gadgets. This forces one strategy to extensively search for the correct pairs, while another strategy finds them immediately.

In the rest of this section we provide formal arguments for the above claims.
We start with a precise description of concealer graphs. 
Then, we show that for every concealer graph there exists a fast strategy.
Contrarily, we then provide a slow concealer graph for every strategy.
Together these two statements prove Theorem~\ref{thm:competitive}.

\subsection{Concealer Graphs} 
The first ingredient for the concealer graphs is a ``splitting scheme'' that results in the worst case running time of $\Omega(m \log(n))$.
Consider a vertex set of size $n=2^k$, on which the following refinements are performed. First, we split the set in halves, then quarters, then eighths and so on, until all vertices have their own distinct color.
This gives us $\log(n)$ rounds of refinements, each with a cost of $\Omega(n)$.
This results in total costs of $\Omega(n \log(n))$.
By ensuring that sufficiently many edges are involved, the running time can be increased to $\Omega(m \log(n))$.

Concealer graphs can be used to cause the splitting scheme just described. The graphs contain \emph{middle layers} ($X,\mathcal{X},\mathcal{Y},Y$) (see Figure~\ref{fig:G_4_competitiveRatio}) in which the splitting scheme can be forced.
The graph is constructed in a way such that splitting $Y$ into halves, quarters, eighths and so on, causes the next halving refinement on $X$.
The edge colors in Figure~\ref{fig:G_4_competitiveRatio} indicate the splitting scheme. While the halves (yellow and purple) of $Y$ lead to a split of $X$ into quarters (red and green), the quarters of $Y$ lead to eighths (blue and orange) of $X$ and so on.
By initially splitting $X$ in halves, any color refinement algorithm needs to cycle through these layers until $X$ is fully discrete.

The core idea of the general lower bound construction in~\cite{tightLowerBound} is that the $\AND_i$ gadget enforces refinements with respect to \emph{every} block of level~$i$, which in turn ensures costs of $2^k \cdot k^2 \in \Omega(m)$ for every level.

We modify the construction to suit our purposes as follows. In the concealer graphs, we swap for each~$i$ the $\AND_i$ gadget for a concealer gadget~$C_i$.
On a particular graph, the worst case behavior is therefore not enforced for all refinement strategies anymore.
However, a deterministic online algorithm cannot choose for \emph{all} possible concealer gadgets the correct pair in level~$i$ to allow it to continue with level~$i+1$.
Hence, an adversary can construct a graph that makes a specific color refinement slow, while keeping a ``shortcut'' for other algorithms that choose the correct pair directly.

We now formally define the  class $\mathcal{G}_k$ of concealer graphs.
Note that for every $k \in \mathbb{N}$, we define a set of graphs $\mathcal{G}_k$.
Essentially, we describe a graph $G_k \in \mathcal{G}_k$ based on concealer gadgets, and the set $\mathcal{G}_k$ then simply consists of all possible instantiations (i.e., positions of the correct pairs) for the included concealer gadgets.

At its core, a graph $G_k \in \mathcal{G}_k$ consists of the four middle layers of vertices $(X, \mathcal{X}, \mathcal{Y}, Y)$, that are interconnected using additional gadgets.
Formally, the vertex set of $G_k$ includes $X= \{x_0,...,x_{2^k-1}\}$, $\mathcal{X}= \{x_i^j \ | \ 0 \leq i < 2^k, 0 \leq j < k\}$, $\mathcal{Y}= \{y_i^j \ | \ 0 \leq i < 2^k, 0 \leq j < k\}$, $Y= \{y_0,...,y_{2^k-1}\}$, a simple starting gadget induced by only three vertices $v_1,v_2,v_3$ and $k-1$ concealer gadgets.
For $0 \leq l \leq k$ and $0 \leq q \leq 2^l-1$ let $\mathcal{B}_q^l = \{ q2^{k-l},...,(q+1)2^{k-l}-1 \}$ be the $q$-th binary block of level $l$. 
We use this notation on all sets of size $2^k$ for some $k \in \mathbb{N}$. 

Every $x_i$ is connected to a corresponding $y_i$ via a complete bipartite graph of size $k$ consisting of vertices in $\mathcal{X}$ and $\mathcal{Y}$ (see Figure~\ref{fig:G_4_competitiveRatio}).
Formally, each $x_i$ is connected to all $x_i^j$, $y_i$ to all $y_i^j$ and $x_i^j$ to all $y_i^{j'}$. 
For each level $l \in \{1,...,k-1\}$, the i-th binary block of level~$l$ is connected to the $i$-th in-vertex of the $l$-th concealer gadget. 
Furthermore, for each gadget $C_l$, we connect $a_0$ to all $X_i^l$ with $i$ even and $a_1$ to all $X_i^l$ with $i$ odd. 
The starting construction splits $X$ into the blocks $X_0^0$ and $X_1^0$.
We refer to the $i$-th in-vertex of the $l$-th concealer gadget as ${b}_i^l$ and to the $i$-th out-vertex as $a_i^l$.

Let us generally consider how a refinement strategy has to operate on $G_k$.
The algorithm starts with the monochromatic coloring of $G_k$.
The first refinement always distinguishes vertices by their degree, meaning we get the individualized starting gadget $\{v_1\},\{v_2\},\{v_3\}$, the distinct layers in the middle $X,\mathcal{X}\cup\mathcal{Y},Y$, the in- and out-vertices of the concealer gadgets $\bigcup_{l\in\{1,...,k-1\}}\{b_i^l, a_j^l \ | \ i\in\{0,...,2^l\},j\in\{0,1\} \}$, and the union of the inner vertices of the concealer gadgets. Next the middle layers are split in half. 
From this point onwards the splits that are possible depend on finding the correct pair in the gadgets. This can lead to fast or slow refinements, as discussed next.

\subsection{A Fast Strategy for Every Concealer Graph} \label{subsec:fastconcealer}
We now show that for every \emph{fixed} concealer graph $G_k \in \mathcal{G}_k$ we can define a linear time strategy.
We show this by providing an appropriate sequence of refinements.

For each concealer gadget~$C_l$ in $G_k$, let ${b}_{i_l}^l,{b}_{i_l+1}^l$ be the correct pair.
Now consider an online refinement strategy on such a graph.
After the first (and fixed) refinement, we refine $X$ with respect to $\{v_2\}$ or $\{v_3\}$.
We choose one half of $X_{i_1}^1$ for the next refinement and then $\mathcal{X}_{i_1}^1$, $\mathcal{Y}_{i_1}^1$ and $Y_{i_1}^1$ while propagating the split through the middle layers.
The important property is that $Y_{i_1}^1$  always splits the correct pair of the next concealer gadget. 
The concealer gadget then in turn splits $X$ into quarters.
Now, we continue with the quarters $X_{i_2}^2$, $\mathcal{X}_{i_2}^2$, $\mathcal{Y}_{i_2}^2$ and $Y_{i_2}^2$, such that the second concealer gadget is activated.
This splits $X$ in eighths.

We now repeat this scheme, such that for each level we only propagate the blocks corresponding to correct pairs through the layers and immediately continue with the next level after activating the concealer gadget.
When $X$ is discrete, we get the equitable coloring by refining with respect to each level~$k$ block of $X$, $\mathcal{X}$, $\mathcal{Y}$ and $Y$.

Now consider the cost of this strategy. 
While cycling through the layers, the most expensive refinements are those with respect to the blocks of $\mathcal{X}$ and $\mathcal{Y}$.
On level~$l$, they have cost $2^{k-l} \cdot k^2$, which means the total cost for all levels is $2^{k} \cdot k^2 = \Theta(m)$.
Once~$X$ is discrete the cost of the final refinements of $\mathcal{X}$, $\mathcal{Y}$ and $Y$ is also in~$\Theta(m)$.

Overall, the cost for an optimal solution for~$G_k$ is linear, i.e., $\opt(G_k) \in \Theta(m)$.
Note that since refinement is always continued with color classes that have just been created, the scheme actually follows a depth-first approach and can be implemented using a stack.  

\subsection{A Slow Concealer Graph for Every Strategy}  \label{subsec:slowconcealer}
For a \emph{fixed} strategy $W$, we now provide an infinite family of concealer graphs $G_k$ on which this strategy is slow, i.e., incurs super-linear cost.
The family is constructed by choosing for every $k \in \mathbb{N}$ one specific concealer graph $G_k \in \mathcal{G}_k$. 

We start with an arbitrary graph $G_k \in \mathcal{G}_k$.
We run $W$ on $G_k$ and observe which color classes are split within the concealer gadgets.
Say we are looking at concealer gadget $C_i$. 
If $W$ distinguishes the correct pair in~$G_k$, but there are still dead ends that have not been distinguished, then we replace~$G_k$ by the graph~$G'_k\in \mathcal{G}_k$ obtained from~$G_k$ by replacing the gadget~$C_i$ with another one so that a dead end not yet investigated becomes the correct pair.
Due to Lemma~\ref{lem:concealer} and Lemma~\ref{lem:concealer:refine} we know that up until the point where $W$ finds the correct pair in~$C_i$ for graph~$G_k$, the strategy $W$ performs the same sequence of splits when executed on~$G'_k$ as on~$G_k$.
Thus, by doing these transformations exhaustively, we ensure $W$ distinguishes all correct pairs in all the concealer gadgets last.
This causes $2^k \cdot k^2$ cost per level and hence $2^k \cdot k^3 = \Theta(m \log(n))$ total cost. 

Since the optimal solution for fixed $G_k$ only has linear cost (see Section~\ref{subsec:fastconcealer}), we in turn get that  
$\cost(W, G_k) \in \Omega(\opt(G_k) \cdot \log(\opt(G_k)))$. 
This in turn proves Theorem~\ref{thm:competitive}.
Further details and a more formal reasoning can be found in Appendix~\ref{subsec:slowconcealer_appendix}.
\section{Comparison of Practical Worklists}
We now compare specific, practical worklist data structures.
First, we compare stacks with queues. We show that either of the two can asymptotically outperform the other.
Note that it is also possible to show the same result for priority queues (see Appendix~\ref{sec:priorityQueues}).

\subsection{Stack Advantage over Queue} \label{sec:Stack>Queue}
To see how a stack worklist might outperform a queue, 
recall the fast strategies for concealer graphs of Section~\ref{sec:competitiveRatio}.
The specific, fast split scheme discussed there is realized by a worklist maintained as a stack. Indeed, whenever possible we continue with a ``newest'' class. 

We conclude from Theorem~\ref{thm:competitive:ratio} that there is a class of graphs on which a stack based worklist asymptotically outperforms a queue based worklist by a logarithmic factor.

We should remark that it is possible to prove the same result with a simpler construction that does not rely on concealer gadgets. 
We should also remark that the construction does not apply to all stack based worklists. However, it is possible to modify the construction such that a particular stack based worklist is optimal. 
For example this can be done for the worklist that choses smallest color classes first (see Appendix~\ref{sec:Stack>Queue_appendix}).

\subsection{Queue Advantage over Stack}
\label{sec:Queue>Stack}

Now, we construct a graph class, called the \emph{queue graphs}, for which a queue based worklist outperforms a stack based one by a logarithmic factor.
This complements the result of the previous section.
The construction is also based on the graph class of Berkholz et al.~\cite{tightLowerBound}.
It is an extension of these graphs, which allows queue worklists to finish quickly but maintains the slow behavior for stacks. We provide an intuitive description. A formal definition and a detailed analysis is given in Appendix~\ref{sec:Queue>Stack_appendix}.

\textbf{The starting gadget.}
We use a starting gadget (see Figure~\ref{fig:startingGadget_Queue>Stack}) that forces a stack worklist to perform certain splits before others, while a queue worklist behaves differently. 

Consider the gadget together with the coloring indicated in the figure.
Any color refinement eventually splits the pairs $\{p_1^{1,3},p_1^{2,3}\}$, $\{p_2^{1,3}, p_2^{2,3}\}$ and $\{a_0,a_1\}$.
However, a stack based worklist splits $\{p_1^{1,3},p_1^{2,3}\}$ or $\{p_2^{1,3}, p_2^{2,3}\}$ \emph{before} $\{a_0,a_1\}$, while a queue based one splits $\{a_0,a_1\}$ before the other two pairs.

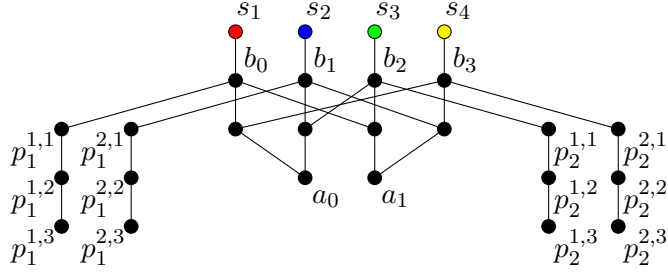
\begin{figure}
	\centering
	\begin{tikzpicture}[rotate=270,scale=0.925]
	\node[draw,circle,fill=red,scale=0.5] at (-0.7,0) (s_0) {};
	\node[draw,circle,fill=blue,scale=0.5] at (-0.7,1) (s_1) {};
	\node[draw,circle,fill=green,scale=0.5] at (-0.7,2) (s_2) {};
	\node[draw,circle,fill=yellow,scale=0.5] at (-0.7,3) (s_3) {};
	
	\node[draw,circle,fill,scale=0.5] at (0.7,-2.5) (p_1_1_1) {};
	\node[draw,circle,fill,scale=0.5] at (0.7,-1.5) (p_1_2_1) {};
	\node[draw,circle,fill,scale=0.5] at (0.7,4.5) (p_2_1_1) {};
	\node[draw,circle,fill,scale=0.5] at (0.7,5.5) (p_2_2_1) {};
	
	\node[draw,circle,fill,scale=0.5] at (1.4,-2.5) (p_1_1_2) {};
	\node[draw,circle,fill,scale=0.5] at (1.4,-1.5) (p_1_2_2) {};
	\node[draw,circle,fill,scale=0.5] at (1.4,4.5) (p_2_1_2) {};
	\node[draw,circle,fill,scale=0.5] at (1.4,5.5) (p_2_2_2) {};
	
	\node[draw,circle,fill,scale=0.5] at (2.1,-2.5) (p_1_1_3) {};
	\node[draw,circle,fill,scale=0.5] at (2.1,-1.5) (p_1_2_3) {};
	\node[draw,circle,fill,scale=0.5] at (2.1,4.5) (p_2_1_3) {};
	\node[draw,circle,fill,scale=0.5] at (2.1,5.5) (p_2_2_3) {};
	
	\node[draw,circle,fill,scale=0.5] at (0,0) (b_0) {};
	\node[draw,circle,fill,scale=0.5] at (0,1) (b_1) {};
	\node[draw,circle,fill,scale=0.5] at (0,2) (b_2) {};
	\node[draw,circle,fill,scale=0.5] at (0,3) (b_3) {};
	
	\node[draw,circle,fill,scale=0.5] at (0.7,0) (c_0) {};
	\node[draw,circle,fill,scale=0.5] at (0.7,1) (c_1) {};
	\node[draw,circle,fill,scale=0.5] at (0.7,2) (c_2) {};
	\node[draw,circle,fill,scale=0.5] at (0.7,3) (c_3) {};
	
	\node[draw,circle,fill,scale=0.5] at (1.4,1) (a_0) {};
	\node[draw,circle,fill,scale=0.5] at (1.4,2) (a_1) {};
	
	\node[] at (-0.3,0.3) (b_0_label) {$b_0$};
	\node[] at (-0.3,1.3) (b_1_label) {$b_1$};
	\node[] at (-0.3,2.3) (b_2_label) {$b_2$};
	\node[] at (-0.3,3.3) (b_3_label) {$b_3$};
	
	\node[] at (1.7,1.3) (a_0_label) {$a_0$};
	\node[] at (1.7,2.3) (a_1_label) {$a_1$};
	
	\node[] at (-1,0.2) (s_1_label) {$s_1$};
	\node[] at (-1,1.2) (s_2_label) {$s_2$};
	\node[] at (-1,2.2) (s_3_label) {$s_3$};
	\node[] at (-1,3.2) (s_4_label) {$s_4$};
	
	\node[] at (1,-2.9) (p_0_1_label) {$p_1^{1,1}$};
	\node[] at (1,-1.9) (p_1_1_label) {$p_1^{2,1}$};
	\node[] at (1,4.9) (p_2_1_label) {$p_2^{1,1}$};
	\node[] at (1,5.9) (p_3_1_label) {$p_2^{2,1}$};
	
	\node[] at (1.7,-2.9) (p_0_2_label) {$p_1^{1,2}$};
	\node[] at (1.7,-1.9) (p_1_2_label) {$p_1^{2,2}$};
	\node[] at (1.7,4.9) (p_2_2_label) {$p_2^{1,2}$};
	\node[] at (1.7,5.9) (p_3_2_label) {$p_2^{2,2}$};
	
	\node[] at (2.4,-2.9) (p_0_3_label) {$p_1^{1,3}$};
	\node[] at (2.4,-1.9) (p_1_3_label) {$p_1^{2,3}$};
	\node[] at (2.4,4.9) (p_2_3_label) {$p_2^{1,3}$};
	\node[] at (2.4,5.9) (p_3_3_label) {$p_2^{2,3}$};
	
	\draw (b_0) -- (c_0);
	\draw (b_0) -- (c_2);
	\draw (b_1) -- (c_1);
	\draw (b_1) -- (c_3);
	\draw (b_2) -- (c_1);
	\draw (b_2) -- (c_2);
	\draw (b_3) -- (c_0);
	\draw (b_3) -- (c_3);
	
	\draw (c_0) -- (a_0);
	\draw (c_1) -- (a_0);
	\draw (c_2) -- (a_1);
	\draw (c_3) -- (a_1);
	
	\draw (s_0) -- (b_0);
	\draw (s_1) -- (b_1);
	\draw (s_2) -- (b_2);
	\draw (s_3) -- (b_3);
	
	\draw (p_1_1_1) -- (b_0);
	\draw (p_1_2_1) -- (b_1);
	\draw (p_2_1_1) -- (b_2);
	\draw (p_2_2_1) -- (b_3);
	
	\draw (p_1_1_1) -- (p_1_1_2);
	\draw (p_1_2_1) -- (p_1_2_2);
	\draw (p_2_1_1) -- (p_2_1_2);
	\draw (p_2_2_1) -- (p_2_2_2);
	
	\draw (p_1_1_2) -- (p_1_1_3);
	\draw (p_1_2_2) -- (p_1_2_3);
	\draw (p_2_1_2) -- (p_2_1_3);
	\draw (p_2_2_2) -- (p_2_2_3);
	\end{tikzpicture}
	\caption{An extension of $\AND_2$ gadget, which will be used as the starting gadget of the new construction. The starting vertices $s_1,...,s_4$ have been individualized.}
	\label{fig:startingGadget_Queue>Stack}
\end{figure}

\textbf{Graph class construction.} 
We start with the graphs from~\cite{tightLowerBound} as a main building block.
Recall that these graphs are the graphs from Section~\ref{sec:competitiveRatio} 
where the concealer gadgets are replaced by $\AND$ gadgets. 
As argued in~\cite{tightLowerBound} a worst case behavior is enforced for every refinement strategy: any refinement on these graphs has a cost of $\Omega(2^k \cdot k^3)=\Omega(m \log(n))$.

We now add ``shortcuts'' that allow queue based algorithms to bypass the construction. 
The core idea is to ensure the queue algorithm refines the set $X$ into a discrete set within a single level of its breadth first behavior. 
This causes $\mathcal{X}$, $\mathcal{Y}$ and $Y$ to completely split in subsequent rounds, thereby preventing the cycling behavior that causes superlinear cost.
Indeed, if $\mathcal{X}$ and $\mathcal{Y}$ are handled only once, then the total cost is in $\mathcal{O}(2^k \cdot k^2) = \mathcal{O}(m)$.

Simultaneously we force the stack into the typical cycling behavior. 
We do so by forcing it to make the same splits of $X$ as the simple starting gadget from Section~\ref{sec:competitiveRatio} would.

We apply the following changes to define queue graphs $G^{\text{Q}}_k$ (see Figure~\ref{fig:G_3_Queue>Stack}): 
	  we connect each vertex in $X$ to a path of length $k$. 
	  We add the new starting gadget described above. 
	  We extend the paths $p_1$ and $p_2$ to a length of $k+2$ and connect the ends to the old starting vertices through unidirectional gadgets. 
	  We also attach a third path $p_{\text{Q}}$ of length $k$ to $a_0$ and $a_1$ and connect the i-th pair to the level-$i$ blocks of the $i$-th vertices of the $X$-paths, again through unidirectional gadgets.
Note that the graph has still a size of $\mathcal{O}(2^k \cdot k^2)$.

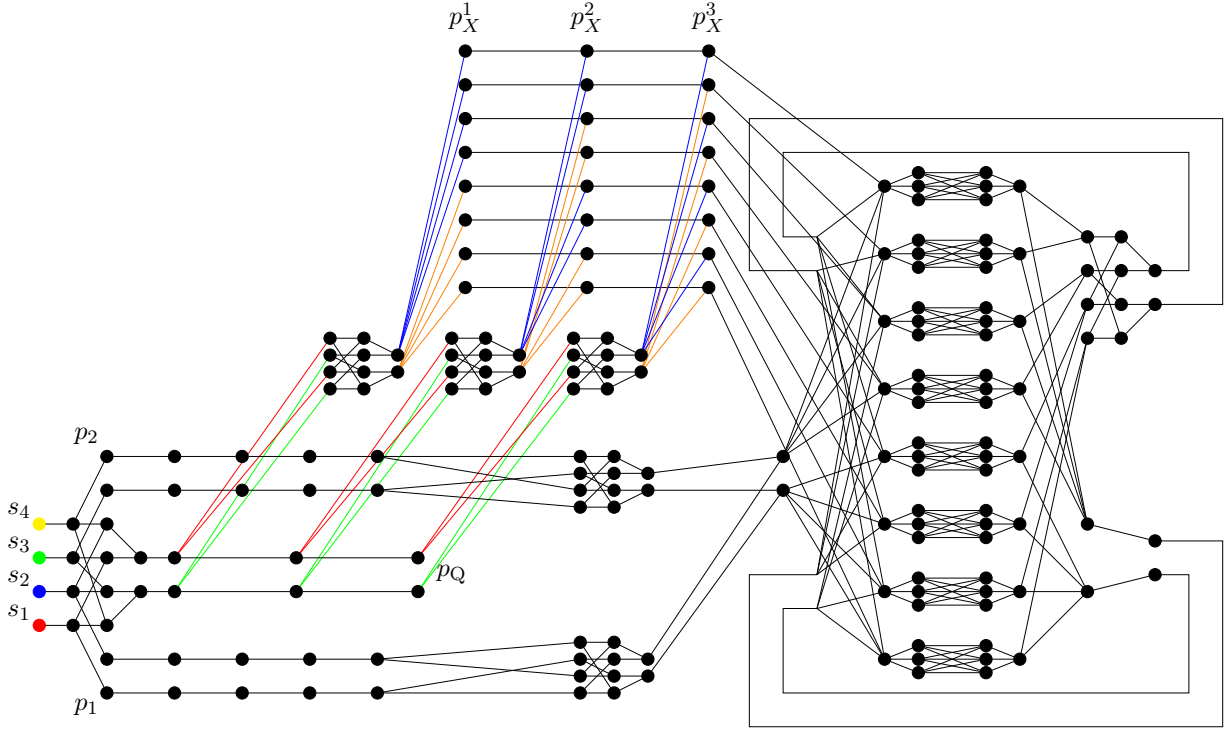
\begin{figure}[t]
	\centering
	\resizebox{1\textwidth}{!}
	{
		\begin{tikzpicture}[]
		
		\node[draw,circle,fill,scale=0.5] at (2,5.75) (b_2_0) {};
		\node[draw,circle,fill,scale=0.5] at (2,6.25) (b_2_1) {};
		\node[draw,circle,fill,scale=0.5] at (2,6.75) (b_2_2) {};
		\node[draw,circle,fill,scale=0.5] at (2,7.25) (b_2_3) {};
		
		\node[draw,circle,fill,scale=0.5] at (2.5,5.75) (c_2_0) {};
		\node[draw,circle,fill,scale=0.5] at (2.5,6.25) (c_2_1) {};
		\node[draw,circle,fill,scale=0.5] at (2.5,6.75) (c_2_2) {};
		\node[draw,circle,fill,scale=0.5] at (2.5,7.25) (c_2_3) {};
		
		\node[draw,circle,fill,scale=0.5] at (3,6.25) (a_2_0) {};
		\node[draw,circle,fill,scale=0.5] at (3,6.75) (a_2_1) {};
		
		\draw (b_2_0) -- (c_2_0);
		\draw (b_2_0) -- (c_2_2);
		\draw (b_2_1) -- (c_2_1);
		\draw (b_2_1) -- (c_2_3);
		\draw (b_2_2) -- (c_2_1);
		\draw (b_2_2) -- (c_2_2);
		\draw (b_2_3) -- (c_2_0);
		\draw (b_2_3) -- (c_2_3);
		
		\draw (c_2_0) -- (a_2_0);
		\draw (c_2_1) -- (a_2_0);
		\draw (c_2_2) -- (a_2_1);
		\draw (c_2_3) -- (a_2_1);
		
		\draw [] (3,6.25) -- (4,6.25) -- (4,9) -- (-3,9) -- (-3,6.75) -- (-2,6.75);
		\draw [] (3,6.75) -- (3.5,6.75) -- (3.5,8.5) -- (-2.5,8.5) -- (-2.5,7.25) -- (-2,7.25);

		\node[draw,circle,fill,scale=0.5] at (2, 2) (b_1_0) {};
		\node[draw,circle,fill,scale=0.5] at (2, 3) (b_1_1) {};
		\node[draw,circle,fill,scale=0.5] at (3, 2.25) (a_1_0) {};
		\node[draw,circle,fill,scale=0.5] at (3, 2.75) (a_1_1) {};
		
		\draw (b_1_0) -- (a_1_0);
		\draw (b_1_1) -- (a_1_1);
		
		\draw [] (3,2.25) -- (3.5,2.25) -- (3.5,0.5) -- (-2.5,0.5) -- (-2.5,1.75) -- (-2,1.75);
		\draw [] (3,2.75) -- (4,2.75) -- (4,0) -- (-3,0) -- (-3,2.25) -- (-2,2.25);
		
		\def \xStart{-13.5}
		\def \yStart{-1.5}
		\def \yStartXPath{-1.5}
		
		\node[draw,circle,fill,red,scale=0.5] at (\xStart,\yStart+3+0*0.5) (s_0) {};
		\node[draw,circle,fill,blue,scale=0.5] at (\xStart,\yStart+3+1*0.5) (s_1) {};
		\node[draw,circle,fill,green,scale=0.5] at (\xStart,\yStart+3+2*0.5) (s_2) {};
		\node[draw,circle,fill,yellow,scale=0.5] at (\xStart,\yStart+3+3*0.5) (s_3) {};
		\foreach \i in {0,...,3}{
			\node[draw,circle,fill,scale=0.5] at (\xStart+0.5,\yStart+3+\i*0.5) (b_s_\i) {};
			\node[draw,circle,fill,scale=0.5] at (\xStart+1,\yStart+3+\i*0.5) (c_s_\i) {};
		}
		\foreach \i in {1,...,5}{
			\node[draw,circle,fill,scale=0.5] at ({\xStart+1+(\i-1)*1},\yStart+2) (p_1_1_\i) {};
			\node[draw,circle,fill,scale=0.5] at ({\xStart+1+(\i-1)*1},\yStart+2.5) (p_1_2_\i) {};
			\node[draw,circle,fill,scale=0.5] at ({\xStart+1+(\i-1)*1},\yStart+5) (p_2_1_\i) {};
			\node[draw,circle,fill,scale=0.5] at ({\xStart+1+(\i-1)*1},\yStart+5.5) (p_2_2_\i) {};
		}
		\node[] at (\xStart+.7,\yStart+1.8) (p_1_label) {$p_1$};
		\node[] at (\xStart+.7,\yStart+5.8) (p_2_label) {$p_2$};
		\foreach \i/\j in {1/2,2/3,3/4,4/5}{
			\draw (p_1_1_\i) -- (p_1_1_\j);
			\draw (p_1_2_\i) -- (p_1_2_\j);
			\draw (p_2_1_\i) -- (p_2_1_\j);
			\draw (p_2_2_\i) -- (p_2_2_\j);
		}
		\node[draw,circle,fill,scale=0.5] at (\xStart+1.5,\yStart+3.5) (a_s_0) {};
		\node[draw,circle,fill,scale=0.5] at (\xStart+1.5,\yStart+4) (a_s_1) {};
		
		\node[] at (\xStart-.3,\yStart+3.2) (s_1_label) {$s_1$};
		\node[] at (\xStart-.3,\yStart+3.7) (s_2_label) {$s_2$};
		\node[] at (\xStart-.3,\yStart+4.2) (s_3_label) {$s_3$};
		\node[] at (\xStart-.3,\yStart+4.7) (s_4_label) {$s_4$};
		
		\draw (b_s_0) -- (c_s_0);
		\draw (b_s_0) -- (c_s_2);
		\draw (b_s_1) -- (c_s_1);
		\draw (b_s_1) -- (c_s_3);
		\draw (b_s_2) -- (c_s_1);
		\draw (b_s_2) -- (c_s_2);
		\draw (b_s_3) -- (c_s_0);
		\draw (b_s_3) -- (c_s_3);
		
		\draw (c_s_0) -- (a_s_0);
		\draw (c_s_1) -- (a_s_0);
		\draw (c_s_2) -- (a_s_1);
		\draw (c_s_3) -- (a_s_1);
		
		\draw (s_0) -- (b_s_0);
		\draw (s_1) -- (b_s_1);
		\draw (s_2) -- (b_s_2);
		\draw (s_3) -- (b_s_3);
		
		\draw (p_1_1_1) -- (b_s_0);
		\draw (p_1_2_1) -- (b_s_1);
		\draw (p_2_1_1) -- (b_s_2);
		\draw (p_2_2_1) -- (b_s_3);
		
		\node[draw,circle,fill,scale=0.5] at (\xStart+8,\yStart+2) (b_e1_0) {};
		\node[draw,circle,fill,scale=0.5] at (\xStart+8,\yStart+2.25) (b_e1_1) {};
		\node[draw,circle,fill,scale=0.5] at (\xStart+8,\yStart+2.5) (b_e1_2) {};
		\node[draw,circle,fill,scale=0.5] at (\xStart+8,\yStart+2.75) (b_e1_3) {};
		
		\node[draw,circle,fill,scale=0.5] at (\xStart+8.5,\yStart+2) (c_e1_0) {};
		\node[draw,circle,fill,scale=0.5] at (\xStart+8.5,\yStart+2.25) (c_e1_1) {};
		\node[draw,circle,fill,scale=0.5] at (\xStart+8.5,\yStart+2.5) (c_e1_2) {};
		\node[draw,circle,fill,scale=0.5] at (\xStart+8.5,\yStart+2.75) (c_e1_3) {};
		
		\node[draw,circle,fill,scale=0.5] at (\xStart+9,\yStart+2.25) (a_e1_0) {};
		\node[draw,circle,fill,scale=0.5] at (\xStart+9,\yStart+2.5) (a_e1_1) {};
		
		\draw (b_e1_0) -- (c_e1_0);
		\draw (b_e1_0) -- (c_e1_2);
		\draw (b_e1_1) -- (c_e1_1);
		\draw (b_e1_1) -- (c_e1_3);
		\draw (b_e1_2) -- (c_e1_1);
		\draw (b_e1_2) -- (c_e1_2);
		\draw (b_e1_3) -- (c_e1_0);
		\draw (b_e1_3) -- (c_e1_3);
		
		\draw (c_e1_0) -- (a_e1_0);
		\draw (c_e1_1) -- (a_e1_0);
		\draw (c_e1_2) -- (a_e1_1);
		\draw (c_e1_3) -- (a_e1_1);
		
		\node[draw,circle,fill,scale=0.5] at (\xStart+8,\yStart+4.75) (b_e2_0) {};
		\node[draw,circle,fill,scale=0.5] at (\xStart+8,\yStart+5) (b_e2_1) {};
		\node[draw,circle,fill,scale=0.5] at (\xStart+8,\yStart+5.25) (b_e2_2) {};
		\node[draw,circle,fill,scale=0.5] at (\xStart+8,\yStart+5.5) (b_e2_3) {};
		
		\node[draw,circle,fill,scale=0.5] at (\xStart+8.5,\yStart+4.75) (c_e2_0) {};
		\node[draw,circle,fill,scale=0.5] at (\xStart+8.5,\yStart+5) (c_e2_1) {};
		\node[draw,circle,fill,scale=0.5] at (\xStart+8.5,\yStart+5.25) (c_e2_2) {};
		\node[draw,circle,fill,scale=0.5] at (\xStart+8.5,\yStart+5.5) (c_e2_3) {};
		
		\node[draw,circle,fill,scale=0.5] at (\xStart+9,\yStart+5) (a_e2_0) {};
		\node[draw,circle,fill,scale=0.5] at (\xStart+9,\yStart+5.25) (a_e2_1) {};
		
		\draw (b_e2_0) -- (c_e2_0);
		\draw (b_e2_0) -- (c_e2_2);
		\draw (b_e2_1) -- (c_e2_1);
		\draw (b_e2_1) -- (c_e2_3);
		\draw (b_e2_2) -- (c_e2_1);
		\draw (b_e2_2) -- (c_e2_2);
		\draw (b_e2_3) -- (c_e2_0);
		\draw (b_e2_3) -- (c_e2_3);
		
		\draw (c_e2_0) -- (a_e2_0);
		\draw (c_e2_1) -- (a_e2_0);
		\draw (c_e2_2) -- (a_e2_1);
		\draw (c_e2_3) -- (a_e2_1);
		
		\node[draw,circle,fill,scale=0.5] at (\xStart+11,3.5) (pS_e_0) {};
		\node[draw,circle,fill,scale=0.5] at (\xStart+11,4) (pS_e_1) {};

		\draw (p_1_1_5) -- (b_e1_0);
		\draw (p_1_2_5) -- (b_e1_1);
		\draw (p_2_1_5) -- (b_e2_0);
		\draw (p_2_2_5) -- (b_e2_1);
		
		\draw (p_1_1_5) -- (b_e1_2);
		\draw (p_1_2_5) -- (b_e1_3);
		\draw (p_2_1_5) -- (b_e2_2);
		\draw (p_2_2_5) -- (b_e2_3);
		
		\draw (a_e1_0) -- (pS_e_0);
		\draw (a_e1_1) -- (pS_e_1);
		\draw (a_e2_0) -- (pS_e_0);
		\draw (a_e2_1) -- (pS_e_1);

		\foreach \j in {1,...,3}{
			\node[draw,circle,fill,scale=0.5] at (\xStart+2+\j*1.8-1.8,\yStart+3.5) (pQ_\j_1) {};
			\node[draw,circle,fill,scale=0.5] at (\xStart+2+\j*1.8-1.8,\yStart+4) (pQ_\j_2) {};
		}
		\draw[] (a_s_0) -- (pQ_1_1);
		\draw[] (a_s_1) -- (pQ_1_2);
		\draw[] (pQ_1_1) -- (pQ_2_1);
		\draw[] (pQ_1_2) -- (pQ_2_2);
		\draw[] (pQ_2_1) -- (pQ_3_1);
		\draw[] (pQ_2_2) -- (pQ_3_2);
		
		\node[] at (\xStart+2+2*1.8+0.5,\yStart+3.75) (p_q_label) {$p_{\text{Q}}$};
		
		\foreach \j in {1,...,3}{
			
			\node[draw,circle,fill,scale=0.5] at (\xStart+2.5+\j*1.8,\yStartXPath+6.5) (b_q\j_0) {};
			\node[draw,circle,fill,scale=0.5] at (\xStart+2.5+\j*1.8,\yStartXPath+6.75) (b_q\j_1) {};
			\node[draw,circle,fill,scale=0.5] at (\xStart+2.5+\j*1.8,\yStartXPath+7) (b_q\j_2) {};
			\node[draw,circle,fill,scale=0.5] at (\xStart+2.5+\j*1.8,\yStartXPath+7.25) (b_q\j_3) {};
			
			\node[draw,circle,fill,scale=0.5] at (\xStart+3+\j*1.8,\yStartXPath+6.5) (c_q\j_0) {};
			\node[draw,circle,fill,scale=0.5] at (\xStart+3+\j*1.8,\yStartXPath+6.75) (c_q\j_1) {};
			\node[draw,circle,fill,scale=0.5] at (\xStart+3+\j*1.8,\yStartXPath+7) (c_q\j_2) {};
			\node[draw,circle,fill,scale=0.5] at (\xStart+3+\j*1.8,\yStartXPath+7.25) (c_q\j_3) {};
			
			\node[draw,circle,fill,scale=0.5] at (\xStart+3.5+\j*1.8,\yStartXPath+6.75) (a_q\j_0) {};
			\node[draw,circle,fill,scale=0.5] at (\xStart+3.5+\j*1.8,\yStartXPath+7) (a_q\j_1) {};
			
			\draw (b_q\j_0) -- (c_q\j_0);
			\draw (b_q\j_0) -- (c_q\j_2);
			\draw (b_q\j_1) -- (c_q\j_1);
			\draw (b_q\j_1) -- (c_q\j_3);
			\draw (b_q\j_2) -- (c_q\j_1);
			\draw (b_q\j_2) -- (c_q\j_2);
			\draw (b_q\j_3) -- (c_q\j_0);
			\draw (b_q\j_3) -- (c_q\j_3);
			
			\draw (c_q\j_0) -- (a_q\j_0);
			\draw (c_q\j_1) -- (a_q\j_0);
			\draw (c_q\j_2) -- (a_q\j_1);
			\draw (c_q\j_3) -- (a_q\j_1);
			
		}
		\begin{pgfonlayer}{bg}
			\foreach \i in {1,...,3}{
				\draw [green] (pQ_\i_1) -- (b_q\i_0);
				\draw [green] (pQ_\i_1) -- (b_q\i_2);
				\draw [red] (pQ_\i_2) -- (b_q\i_1);
				\draw [red] (pQ_\i_2) -- (b_q\i_3);
			}
		\end{pgfonlayer}
		
		\foreach \i in {1,...,8}{
			\node[draw,circle,fill,scale=0.5] at (-1, \i) (x_\i) {};
			\node[draw,circle,fill,scale=0.5] at (1, \i) (y_\i) {};
			\foreach \j in {-1,...,1}{
				\node[draw,circle,fill,scale=0.5] at (-0.5, \i+\j*0.2) (x_\i_\j) {};
				\node[draw,circle,fill,scale=0.5] at (0.5, \i+\j*0.2) (y_\i_\j) {};
				\draw[] (x_\i) -- (x_\i_\j);
				\draw[] (y_\i) -- (y_\i_\j);
			}
			\foreach \j in {-1,...,1}{
				\foreach \k in {-1,...,1}{
					\draw[] (x_\i_\k) -- (y_\i_\j);
				}
			}
		}
	
		\foreach \i in {1,...,4}{
			\draw [] (x_\i) -- (pS_e_0);
		}
		\foreach \i in {5,...,8}{
			\draw [] (x_\i) -- (pS_e_1);
		}
		
		\draw [] (y_1) -- (b_1_0);
		\draw [] (y_2) -- (b_1_0);
		\draw [] (y_3) -- (b_1_0);
		\draw [] (y_4) -- (b_1_0);
		\draw [] (y_5) -- (b_1_1);
		\draw [] (y_6) -- (b_1_1);
		\draw [] (y_7) -- (b_1_1);
		\draw [] (y_8) -- (b_1_1);
		
		\draw [] (y_1) -- (b_2_0);
		\draw [] (y_2) -- (b_2_0);
		\draw [] (y_3) -- (b_2_1);
		\draw [] (y_4) -- (b_2_1);
		\draw [] (y_5) -- (b_2_2);
		\draw [] (y_6) -- (b_2_2);
		\draw [] (y_7) -- (b_2_3);
		\draw [] (y_8) -- (b_2_3);
		
		\draw [] (x_1) -- (-2,1.75);
		\draw [] (x_2) -- (-2,1.75);
		\draw [] (x_3) -- (-2,2.25);
		\draw [] (x_4) -- (-2,2.25);
		\draw [] (x_5) -- (-2,1.75);
		\draw [] (x_6) -- (-2,1.75);
		\draw [] (x_7) -- (-2,2.25);
		\draw [] (x_8) -- (-2,2.25);
		
		\draw [] (x_1) -- (-2,6.75);
		\draw [] (x_2) -- (-2,7.25);
		\draw [] (x_3) -- (-2,6.75);
		\draw [] (x_4) -- (-2,7.25);
		\draw [] (x_5) -- (-2,6.75);
		\draw [] (x_6) -- (-2,7.25);
		\draw [] (x_7) -- (-2,6.75);
		\draw [] (x_8) -- (-2,7.25);
		
		\foreach \i in {1,...,8}{
			\foreach \j in {1,...,3}{
				\node[draw,circle,fill,scale=0.5] at (-9+\j*1.8, \yStartXPath+7.5+\i*0.5) (pX_\i_\j) {};
			}
		}
		\foreach \j in {1,...,3}{
			\node[] at (-9+\j*1.8, 7.5+\yStartXPath+9*0.5) (p_x_\j_label) {$p_X^\j$};
		}

		\begin{pgfonlayer}{bg}
			\foreach \i in {1,...,8}{
				\draw[] (pX_\i_1) -- (pX_\i_2);
				\draw[] (pX_\i_2) -- (pX_\i_3);
				\draw[] (pX_\i_3) -- (x_\i);
			}
			
			\foreach \i in {1,...,4}{
				\draw [orange] (pX_\i_1) -- (a_q1_0);
			}
			\foreach \i in {5,...,8}{
				\draw [blue] (pX_\i_1) -- (a_q1_1);
			}
			
			\foreach \i in {1,2,5,6}{
				\draw [orange] (pX_\i_2) -- (a_q2_0);
			}
			\foreach \i in {3,4,7,8}{
				\draw [blue] (pX_\i_2) -- (a_q2_1);
			}
			
			\foreach \i in {1,3,5,7}{
				\draw [orange] (pX_\i_3) -- (a_q3_0);
			}
			\foreach \i in {2,4,6,8}{
				\draw [blue] (pX_\i_3) -- (a_q3_1);
			}
		\end{pgfonlayer}
		\end{tikzpicture}
	}
	\caption{The graph $G^{\text{Q}}_3$ for the queue advantage.}
	\label{fig:G_3_Queue>Stack}
\end{figure}

\textbf{Queue behavior.}
Consider a queue based color refinement.
It splits the pairs within the paths $p_1,p_2,p_{\text{Q}}$ layer by layer. 
The splits of the $i$-th pair of $p_{\text{Q}}$ induce a split of the $i$-th vertices of the $X$-paths into the binary blocks of level $i$.
After $k+7$ many rounds, this leads to a split of $X$ into the blocks of level $k$.
Note that at the same time $\{p_{\text{end}}^0\}$ or $\{p_{\text{end}}^1\}$ will be able to split $X$.
Therefore, we know that $X$ will be fully discrete before any subset of $\mathcal{X}$ can be considered by the worklist.

\textbf{Stack behavior.}
Any stack based color refinement running on $G^{\text{Q}}_k$ splits one of the paths $p_1,p_2$ in the starting gadget before splitting the path $p_{\text{Q}}$. This induces the worst case cycling behavior of the construction from~\cite{tightLowerBound}. The unidirectional gadgets and depth first strategy hinder the algorithm from distinguishing anything else in the starting gadget before the rest of the graph has been distinguished. Therefore, no ``shortcut'' can be applied and a stack based color refinement on $G^{\text{Q}}_k$ has costs of at least $\Omega(m \log(n))$.

\section{Approximation Hardness}\label{sec:approx:hard}
Complementing our previous results, we now provide an approximation hardness result for computing optimal color refinement strategies. 
We begin by defining the optimal refinement worklist problem:
\begin{problem*}[Refinement Worklist Problem] Given a colored graph $(G, \pi)$, compute a minimal cost sequence of pairs of color classes $W=(C_1,X_1),\ldots,(C_t,X_t)$ such that:
\begin{enumerate}
	\item Refining with respect to $W$ results in the stable coloring $\pi^\infty$.
	\item For all prefixes $(C_1,X_1),\ldots,(C_s,X_s)$, the partial quotient graph obtained after refining~$C_i$ w.r.t.~$X_i$ for~$i=1,\ldots,s-1$ contains $C_s$ and $X_x$ (as unions of color classes).
\end{enumerate}
The cost of a sequence $W$ is the sum of the costs for refining with respect to all $(C_i, X_i) \in W$. 
\end{problem*}
\begin{figure}[t]
	\centering
	\begin{subfigure}{0.49\linewidth}
		\centering
	\scalebox{0.9}{\begin{tikzpicture}[scale=1,every node/.style={}]
		\node[draw=black,circle,inner sep=0,outer sep=0,minimum size=5.5mm] (a) at (0,0) {$a$};
		\node[draw=black,circle,inner sep=0,outer sep=0,minimum size=5.5mm] (b) at (1,0) {$b$};
		\node[draw=black,circle,inner sep=0,outer sep=0,minimum size=5.5mm] (c) at (2,0) {$c$};
		\node[draw=black,circle,inner sep=0,outer sep=0,minimum size=5.5mm] (d) at (3,0) {$d$};

		\node[inner sep=0,outer sep=0,dashed, circle, rounded corners, draw=black, thick, fit=(d)](FIt1) {};
		\node[outer sep=0,dashed, circle, rounded corners, draw=black, thick, fit=(b) (c) (d)](FIt2) {};
		\node[outer sep=0,dashed, circle, rounded corners, draw=black, thick, fit=(a) (b)](FIt3) {};

		\node[draw=none,circle] (u1) at (3,-1.79) {\footnotesize$U_1$};
		\node[draw=none,circle] (u2) at (2,-1.79) {\footnotesize$U_2$};
		\node[draw=none,circle] (u3) at (0.5,-1.79) {\footnotesize$U_3$};

	\end{tikzpicture}
	}	
	\caption{Set cover instance.}
	\end{subfigure}	
	\begin{subfigure}{0.5\linewidth}
		\centering
	\scalebox{0.9}{\begin{tikzpicture}[scale=0.66,every node/.style={}]
	\foreach \i in {1,...,5}{
		\node[draw=none,circle,scale=0.5] (a\i) at (1,-0.5+0.5*\i) {};
		\node[draw=none,circle,scale=0.5] (b\i) at (3,-0.5+0.5*\i) {};
		\node[draw=none,circle,scale=0.5] (c\i) at (5,-0.5+0.5*\i) {};
		\node[draw=none,circle,scale=0.5] (d\i) at (7,-0.5+0.5*\i) {};
	}
	
	\node[draw,circle,fill,scale=0.5] (U1) at (0,-2) {};
	\node[draw=none] (U1_label) at (0,-2-0.65) {\footnotesize$U_1$};

	\node[draw,circle,fill,scale=0.5] (U2) at (4,-2) {};
	\node[draw=none] (U2_label) at (4,-2-0.65) {\footnotesize$U_2$};

	\node[draw,circle,fill,scale=0.5] (U3) at (8,-2) {};
	\node[draw=none] (U3_label) at (8,-2-0.65) {\footnotesize$U_3$};

	\node[draw=none] (U1_label) at (8,1) {\footnotesize$X$};

	\foreach \i in {2,...,5}{
		\foreach \j in {1,...,3}{
		\path [ultra thin,black!50,-] (U\j) edge node[left] {} (a\i);
		\path [ultra thin,black!50,-] (U\j) edge node[left] {} (b\i);
		\path [ultra thin,black!50,-] (U\j) edge node[left] {} (c\i);
		\path [ultra thin,black!50,-] (U\j) edge node[left] {} (d\i);
		}
	}

	\node[dashed, rounded corners, draw=black, thick, fit=(a1) (b1) (c1) (d1) (a2) (b2) (c2) (d2) (a3) (b3) (c3) (d3) (a4) (b4) (c4) (d4) (a5) (b5) (c5) (d5)](FIt1) {};

	\path [thick,orange,-] (U3) edge node[left] {} (c1);
	\path [thick,orange,-] (U3) edge node[left] {} (d1);
	\path [thick,orange,-] (U2) edge node[left] {} (a1);
	\path [thick,orange,-] (U1) edge node[left] {} (a1);
	\path [thick,orange,-] (U1) edge node[left] {} (b1);
	\path [thick,orange,-] (U1) edge node[left] {} (c1);

	\foreach \i in {1,...,5}{
		\node[draw,circle,fill,scale=0.5] (a\i) at (1,-0.5+0.5*\i) {};
		\node[draw,circle,fill,scale=0.5] (b\i) at (3,-0.5+0.5*\i) {};
		\node[draw,circle,fill,scale=0.5] (c\i) at (5,-0.5+0.5*\i) {};
		\node[draw,circle,fill,scale=0.5] (d\i) at (7,-0.5+0.5*\i) {};
	}

	\end{tikzpicture}}
	\caption{Result of the reduction.}
	\end{subfigure}
\caption{Reduction of the set cover instance $S = \{a, b, c, d\}$ and $\mathcal{U} = \{\{d\}, \{b, c, d\}, \{a, b\}\}$. Orange lines indicate connections to elements of $S$, all other edges are connections to dummy elements.} \label{fig:worklistsetcover}
\end{figure}
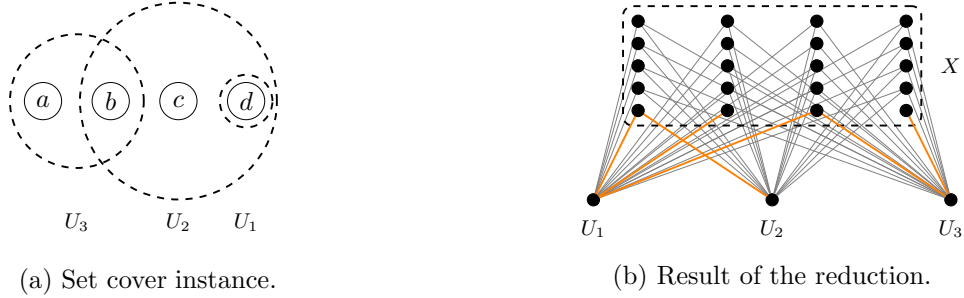
The approximation hardness result is based on a reduction from the set cover problem.
The set cover problem takes a finite universe $S$ and a set of subsets of $S$, i.e., $\mathcal{U} \subseteq 2^S$. 
The decision variant then asks whether there exists a selection of $k$ subsets in $\mathcal{U}$ whose union equals $S$. 
For simplicity, we assume $\bigcup_{U \in \mathcal{U}} U = S$.
Set cover is well-known to be $\NP$-complete. 

The optimization variant requires a minimal selection of subsets that cover $S$, i.e., a solution that minimizes $k$. 
This problem is known to be $\NP$-hard. 
More specifically, it is known that unless $\P = \NP$, polynomial-time algorithm can only reach an approximation factor of $\Omega(\log(n))$ \cite{DBLP:conf/stoc/RazS97}. 
\begin{theorem}\label{thm:approx:hard}
	Unless $\P = \NP$, polynomial-time algorithms may only reach an approximation factor of $\Omega(\log(n))$ for the optimal refinement problem.
\end{theorem}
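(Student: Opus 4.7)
The plan is to give a gap-preserving reduction from minimum set cover to the refinement worklist problem and appeal to the Raz--Safra $\Omega(\log n)$-inapproximability theorem~\cite{DBLP:conf/stoc/RazS97}. Given a set cover instance $(S,\mathcal{U})$ with $|S|=n$ and $|\mathcal{U}|=m$, the construction sketched in Figure~\ref{fig:worklistsetcover} would be as follows: the vertex set of $G$ is $X\cup\{u_1,\ldots,u_m\}$ where $X=S\cup D$ and $D$ is an independent padding set of $N:=n^3$ dummy vertices; in $\pi$ every $u_i$ is individualized and $X$ is monochromatic; the edges are $u_i\sim v$ iff $v\in D$ or $v\in S\setminus U_i$. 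I assume (after duplicate removal) that the elements of $S$ have pairwise distinct incidence patterns in $\mathcal{U}$. Since all dummies have identical adjacencies, $D$ is an invariant color class under any refinement, and the stable coloring $\pi^\infty$ makes each $s\in S$ a singleton while keeping $D$ as a single class.

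For the upper bound direction, a cover $\mathcal{U}'=\{U_{i_1},\ldots,U_{i_k}\}$ translates into a refinement sequence that successively refines the current class containing most of $X$ with respect to $\{u_{i_1}\},\ldots,\{u_{i_k}\}$, each step costing at most $N+n$ edges. After these $k$ steps every $s\in S$ is split off from $D$ thanks to the covering property, and a cleanup of $O(nm+n^2)$-cost refinements on the resulting small classes reaches $\pi^\infty$. Hence $\mathrm{OPT}(G,\pi)\leq k^\star\cdot N\cdot(1+o(1))$, where $k^\star$ is the minimum set cover size.

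The crux is the matching lower bound. I would prove a structural lemma stating that, because $X$ is independent and $D$ is invariant, a refinement $(C,X')$ can separate an element $s\in S\cap X'$ from a dummy in $X'$ only if $C$ contains some $u_i$ with $s\in U_i$ (dummy counts are always $|C\cap\{u_j\}_j|$, and $s$'s count differs iff some $u_i\in C$ is non-adjacent to $s$). Let $R_s=(C_s,X'_s)$ be the first refinement that separates $s$ from $D$, and let $\mathcal{R}$ be the set of distinct such first-separating refinements. For each $R\in\mathcal{R}$ and every $s$ it first-separates, $C_R$ must contain some $u_i$ with $s\in U_i$; hence $\{U_i:u_i\in C_R \text{ for some } R\in\mathcal{R}\}$ is a cover of $S$ of size at most $\sum_{R\in\mathcal{R}}|C_R\cap\{u_1,\ldots,u_m\}|$. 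At the moment of $R$ we have $D\subseteq X'_R$ (since $D$ has not yet been separated from the $s$ it is about to split), and each $u_i\in C_R$ contributes $|D|=N$ edges to $X'_R$, so the cost of $R$ is at least $|C_R\cap\{u_1,\ldots,u_m\}|\cdot N$. Summing yields $\mathrm{OPT}(G,\pi)\geq k^\star\cdot N$.

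Combining both bounds gives $\mathrm{OPT}(G,\pi)=k^\star\cdot N\cdot(1+o(1))$, and $|V(G)|=\Theta(N)=\Theta(n^3)$, so $\log|V(G)|=\Theta(\log n)$. Any polynomial-time $o(\log|V(G)|)$-approximation for the refinement worklist problem would, via this reduction, produce an $o(\log n)$-approximation for set cover, contradicting~\cite{DBLP:conf/stoc/RazS97} unless $\P=\NP$. The main obstacle is the lower bound argument: one must rule out that a clever adversary uses union refiners, split-then-refine cascades on subclasses of $X$, or reorderings to separate $S$ from $D$ more cheaply than $k^\star\cdot N$. The three pillars that carry the argument are that $D$ stays one color class by symmetry, that refiners contained in $X$ induce zero counts everywhere because $X$ is independent (so only the $u_i$-components of refiners drive any $S$-vs-$D$ separation), and that refinement cost is additive over the $u_i$-components of the refiner, each of which contributes at least $N$ edges to the dummies trapped inside $X'_R$.
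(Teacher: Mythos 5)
Your proposal is correct and follows essentially the same route as the paper's proof: the identical dummy-padded complement construction with individualized set-vertices and monochromatic $X$, the same cost accounting in which every set-vertex used as a refiner incurs $\Theta(N)$ cost through the dummies while all remaining refinements are lower-order, and the same transfer of the Raz--Safra $\Omega(\log n)$ set-cover hardness, using that the polynomial size blow-up only changes the logarithm by a constant factor. Your explicit lower-bound lemma ruling out cheap separations via union refiners is in fact spelled out more carefully than in the paper (which treats only singleton refiners from $\mathcal{U}$); the only point to tidy is choosing the padding $N$ polynomial in the \emph{total} instance size $n+m$ (as the paper does), so that the $O(nm+n^2)$ cleanup term and the vertex count $|V(G)|$ behave as claimed even when $|\mathcal{U}|$ is a large polynomial in $|S|$.
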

\begin{proof}
	We reduce the optimization variant of the set cover problem to the refinement worklist problem. 
	More specifically, we reduce it in a manner which allows control of the parameters, so that the approximation hardness result of set cover immediately transfers to refinement worklists. The reduction is illustrated in Figure~\ref{fig:worklistsetcover}.
	
	Given a set cover instance $(S, \mathcal{U})$ we define a related colored graph $(G, \pi)$. 
	We create one large color class $X$ containing all elements of the universe $S$, as well as $n^2$ \emph{dummy elements} (where $n$ is the size of the set cover instance).
	Hence, the size of $X$ is $n^2 + |S|$.
	
	We add a singleton color class for each subset $U \in \mathcal{U}$, i.e., we add vertex $U$ with color $U$. 
	We connect the vertex $U$ with all vertices of $X$ \emph{except} for the elements that are contained in $U$.
	Formally, we define the edges $E(G) := \{\{U, x\} \; | x \in X \wedge x \notin U \; \}$.
	Note that $U$ has $n^2 + |S| - |U|$ connections to $X$.
	
	In the constructed graph, all elements of the universe are eventually distinguished from the dummy elements in $X$. 
	Refining $X$ with respect to $X$ is not productive, since there are no edges present and no splits occur. 
	The only way to distinguish elements of $X$ is to refine $X$ with respect to an element of $\mathcal{U}$.
	Doing so always distinguishes all the elements contained in $U \in \mathcal{U}$ from the dummy elements and other remaining elements of $X$.
	Overall, we need to refine $X$ with a subset of $\mathcal{U}$ that forms a set cover of $S$.
	
	After that, assuming all elements of $S$ have been distinguished from the dummy elements, it might be possible to split the resulting classes further through their connections to $\mathcal{U}$. 
	However, the total cost for these further refinements is bounded by $c \cdot n^2$ for some fixed constant $c$.
	
	The cost for refining $X$ with respect to $U$ is $n^2 + m$, where $m$ is the number of remaining elements of $S$ in $X$ \emph{after} the elements of $U$ have been removed.
	Since we need to choose at most $|S|$ subsets in a reasonable solution (otherwise we could remove redundant elements from the solution), and each time $X$ gets smaller by at least one element, the cost incurred by $m$ over all subsets is at most $|S|^2 \leq n^2$.
	Ignoring the cost of $m$, we get that each subset incurs additional cost of $n^2$ through the dummy elements. 
	
	Hence, the final cost is upper bounded by $c \cdot n^2 + (N_U+1) \cdot n^2 = (N_U + c + 1)n^2$ and lower bounded by $N_U n^2$, where $N_U$ is the number of chosen subsets.

	We finish our arguments with a proof by contradiction. 
	Assume there is a polynomial-time algorithm with an approximation factor in $o(\log(n))$. 
	Given a set cover instance $(S, \mathcal{U})$, we apply the polynomial-time reduction stated above. 
	Assume now we get an approximate solution with cost $x \cdot n^2$.
	We know that this implies a set cover solution with cost at most $x$.

	The optimal set cover solution with cost $x'$ would imply a worklist solution with cost at most $(x' + c)n^2$ (for a fixed $c$). 
	Hence, we know that the worklist solution also approximates the optimal solution of the original set cover instance with a factor in $o(\log(n))$.

	The set cover instance has a size in the 3rd root of the size of the refinement worklist problem.
	But since $o(\log(n)) = o(\log(\sqrt[3]{n}))$, we get a contradiction to the approximation hardness result of set cover.
	\end{proof}

\newpage
\bibliographystyle{plainurl}
\bibliography{main.bib}

\begin{thebibliography}{10}

\bibitem{doi:10.1137/1.9781611976472.6}
Markus Anders and Pascal Schweitzer.
\newblock Engineering a fast probabilistic isomorphism test.
\newblock In {\em 2021 Proceedings of the Symposium on Algorithm Engineering
  and Experiments (ALENEX)}, pages 73--84. {SIAM}, 2021.
\newblock \href {https://doi.org/10.1137/1.9781611976472.6}
  {\path{doi:10.1137/1.9781611976472.6}}.

\bibitem{DBLP:conf/mfcs/ArvindFKKR16}
Vikraman Arvind, Frank Fuhlbr{\"{u}}ck, Johannes K{\"{o}}bler, Sebastian
  Kuhnert, and Gaurav Rattan.
\newblock The parameterized complexity of fixing number and vertex
  individualization in graphs.
\newblock In Piotr Faliszewski, Anca Muscholl, and Rolf Niedermeier, editors,
  {\em 41st International Symposium on Mathematical Foundations of Computer
  Science, {MFCS} 2016, August 22-26, 2016 - Krak{\'{o}}w, Poland}, volume~58
  of {\em LIPIcs}, pages 13:1--13:14. Schloss Dagstuhl - Leibniz-Zentrum
  f{\"{u}}r Informatik, 2016.
\newblock \href {https://doi.org/10.4230/LIPIcs.MFCS.2016.13}
  {\path{doi:10.4230/LIPIcs.MFCS.2016.13}}.

\bibitem{tightLowerBound}
Christoph Berkholz, Paul~S. Bonsma, and Martin Grohe.
\newblock Tight lower and upper bounds for the complexity of canonical colour
  refinement.
\newblock {\em Theory Comput. Syst.}, 60(4):581--614, 2017.
\newblock \href {https://doi.org/10.1007/s00224-016-9686-0}
  {\path{doi:10.1007/s00224-016-9686-0}}.

\bibitem{CFI}
Jin{-}yi Cai, Martin F{\"{u}}rer, and Neil Immerman.
\newblock An optimal lower bound on the number of variables for graph
  identifications.
\newblock {\em Comb.}, 12(4):389--410, 1992.
\newblock \href {https://doi.org/10.1007/BF01305232}
  {\path{doi:10.1007/BF01305232}}.

\bibitem{DBLP:conf/focs/Grohe96}
Martin Grohe.
\newblock Equivalence in finite-variable logics is complete for polynomial
  time.
\newblock In {\em 37th Annual Symposium on Foundations of Computer Science,
  {FOCS} '96, Burlington, Vermont, USA, 14-16 October, 1996}, pages 264--273.
  {IEEE} Computer Society, 1996.
\newblock \href {https://doi.org/10.1109/SFCS.1996.548485}
  {\path{doi:10.1109/SFCS.1996.548485}}.

\bibitem{DBLP:conf/esa/GroheKMS14}
Martin Grohe, Kristian Kersting, Martin Mladenov, and Erkal Selman.
\newblock Dimension reduction via colour refinement.
\newblock In Andreas~S. Schulz and Dorothea Wagner, editors, {\em Algorithms -
  {ESA} 2014 - 22th Annual European Symposium, Wroclaw, Poland, September 8-10,
  2014. Proceedings}, volume 8737 of {\em Lecture Notes in Computer Science},
  pages 505--516. Springer, 2014.
\newblock \href {https://doi.org/10.1007/978-3-662-44777-2\_42}
  {\path{doi:10.1007/978-3-662-44777-2\_42}}.

\bibitem{hop71}
J.E. Hopcroft.
\newblock An n log n algorithm for minimizing states in a finite automaton.
\newblock In Z.~Kohavi and A.~Paz, editors, {\em Theory of Machines and
  Computations}, pages 189--196. Academic Press, 1971.

\bibitem{DBLP:conf/alenex/JunttilaK07}
Tommi~A. Junttila and Petteri Kaski.
\newblock Engineering an efficient canonical labeling tool for large and sparse
  graphs.
\newblock In {\em Proceedings of the Nine Workshop on Algorithm Engineering and
  Experiments, {ALENEX} 2007, New Orleans, Louisiana, USA, January 6, 2007}.
  {SIAM}, 2007.
\newblock \href {https://doi.org/10.1137/1.9781611972870.13}
  {\path{doi:10.1137/1.9781611972870.13}}.

\bibitem{DBLP:conf/icalp/KieferM20}
Sandra Kiefer and Brendan~D. McKay.
\newblock The iteration number of colour refinement.
\newblock In Artur Czumaj, Anuj Dawar, and Emanuela Merelli, editors, {\em 47th
  International Colloquium on Automata, Languages, and Programming, {ICALP}
  2020, July 8-11, 2020, Saarbr{\"{u}}cken, Germany (Virtual Conference)},
  volume 168 of {\em LIPIcs}, pages 73:1--73:19. Schloss Dagstuhl -
  Leibniz-Zentrum f{\"{u}}r Informatik, 2020.
\newblock \href {https://doi.org/10.4230/LIPIcs.ICALP.2020.73}
  {\path{doi:10.4230/LIPIcs.ICALP.2020.73}}.

\bibitem{practicaliso1}
Brendan~D. McKay.
\newblock Practical graph isomorphism.
\newblock In {\em 10th. Manitoba Conference on Numerical Mathematics and
  Computing (Winnipeg, 1980)}, pages 45--87, 1981.

\bibitem{practicaliso2}
Brendan~D. McKay and Adolfo Piperno.
\newblock Practical graph isomorphism, {II}.
\newblock {\em J. Symb. Comput.}, 60:94--112, 2014.
\newblock \href {https://doi.org/10.1016/j.jsc.2013.09.003}
  {\path{doi:10.1016/j.jsc.2013.09.003}}.

\bibitem{DBLP:conf/aaai/0001RFHLRG19}
Christopher Morris, Martin Ritzert, Matthias Fey, William~L. Hamilton, Jan~Eric
  Lenssen, Gaurav Rattan, and Martin Grohe.
\newblock Weisfeiler and leman go neural: Higher-order graph neural networks.
\newblock In {\em The Thirty-Third {AAAI} Conference on Artificial
  Intelligence, {AAAI} 2019, The Thirty-First Innovative Applications of
  Artificial Intelligence Conference, {IAAI} 2019, The Ninth {AAAI} Symposium
  on Educational Advances in Artificial Intelligence, {EAAI} 2019, Honolulu,
  Hawaii, USA, January 27 - February 1, 2019}, pages 4602--4609. {AAAI} Press,
  2019.
\newblock \href {https://doi.org/10.1609/aaai.v33i01.33014602}
  {\path{doi:10.1609/aaai.v33i01.33014602}}.

\bibitem{DBLP:conf/stoc/RazS97}
Ran Raz and Shmuel Safra.
\newblock A sub-constant error-probability low-degree test, and a sub-constant
  error-probability {PCP} characterization of {NP}.
\newblock In Frank~Thomson Leighton and Peter~W. Shor, editors, {\em
  Proceedings of the Twenty-Ninth Annual {ACM} Symposium on the Theory of
  Computing, El Paso, Texas, USA, May 4-6, 1997}, pages 475--484. {ACM}, 1997.
\newblock \href {https://doi.org/10.1145/258533.258641}
  {\path{doi:10.1145/258533.258641}}.

\bibitem{DBLP:journals/jmlr/ShervashidzeSLMB11}
Nino Shervashidze, Pascal Schweitzer, Erik~Jan van Leeuwen, Kurt Mehlhorn, and
  Karsten~M. Borgwardt.
\newblock Weisfeiler-lehman graph kernels.
\newblock {\em J. Mach. Learn. Res.}, 12:2539--2561, 2011.
\newblock URL: \url{http://dl.acm.org/citation.cfm?id=2078187}.

\end{thebibliography}

\appendix

\newpage

{\noindent \Huge{\textbf{Appendix}}}\\

\noindent The appendix contains missing proofs and details for some constructions that were not presented in the main part of the paper.

\section{Details for Subsection \ref{subsec:slowconcealer} (Slow Concealer Graphs)}
\label{subsec:slowconcealer_appendix}

We now argue the correctness of the concealer graphs in more detail.
Let us first discuss some general behavior of color refinement on concealer graphs.
There are two core properties that hold for every color refinement algorithm.
The first one is that in each level~$l$, we split $X$ completely into the blocks of this level, $X_0^l,...,X_{2^l}^l$.
The other layers can only be split by the blocks of $X$, so we know that their partitions are always coarser than the one of $X$.
The second property is that out-vertices of the level~$l$ concealer gadget have to be distinguished to partition $X$ into the blocks of level~$l+1$.
Thus, also the correct input pair in this gadget has to be split.

For a coloring~$\alpha$ we denote by~$\pi_{\alpha}$ the partition induced by the coloring. The notation~$\pi_{\alpha}[X]$ indicates the restriction of the partition to a set~$X$ and we use~$\pi_{\alpha} \preceq \pi_{\alpha'}$ to indicate that the former partition is at least as fine as the latter.
Abusing notation we compare partitions of the layers of the graphs, as the are related by direct connections.

\begin{lemma}
\label{lemma:n_A}

For any coloring $\alpha_i$ with $i \geq 0$ there is a number $n_A$, such that
\begin{itemize}
	\item for all $j \leq n_A$: $\alpha_i(a_0^j) \neq \alpha_i(a_1^j)$, and
	\item for all $j > n_A$: $\alpha_i(a_0^j) = \alpha_i(a_1^j)$.
\end{itemize}

\noindent
In addition, we have either
\begin{itemize}
	\item $\pi_{\alpha_i}[X] = \{ X_q^{n_A+1} \bigm| 0 \leq q \leq 2^{n_A+1}-1 \}$ or
	\item $\pi_{\alpha_i}[X] = \{ X_q^{n_A} \bigm| 0 \leq q \leq 2^{n_A}-1 \}$.
\end{itemize}
Furthermore, it holds that $\pi_{\alpha_i}[X] \preceq \pi_{\alpha_i}[\mathcal{X}], \pi_{\alpha_i}[\mathcal{Y}], \pi_{\alpha_i}[Y]$.

\end{lemma}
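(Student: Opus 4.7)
The plan is to prove the lemma by induction on the step number $i$. For the base case $i=0$, where $\alpha_0$ is the monochromatic coloring (after the initial degree-based split), we pick $n_A = 0$; since the concealer gadgets are indexed $j \in \{1,\dots,k-1\}$, the condition ``for all $j \le n_A$'' is vacuous while ``for all $j > n_A$'' holds trivially, and $\pi_{\alpha_0}[X] = \{X\} = \{X_0^0\}$ is the level-$0$ partition. The comparisons with $\mathcal{X}$, $\mathcal{Y}$, $Y$ are also trivial. Strictly speaking one could also verify this for the coloring immediately after the first global refinement, which separates the layers but leaves $X$ monochromatic.

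For the inductive step, suppose the statement holds for $\alpha_i$ with value $n_A = n$. The single refinement $(C,X_{\text{class}})$ that produces $\alpha_{i+1}$ is analyzed by case distinction, but all cases rely on one structural dependency chain dictated by the graph construction. To split the out-pair $\{a_0^j,a_1^j\}$ of concealer gadget $C_j$, the correct input pair of $C_j$ must already be distinguished, which by Lemma~\ref{lem:concealer:refine} can only happen through an external split. That split is transmitted via the edges connecting the correct pair to a level-$j$ block of $Y$; such a block exists in $\pi_{\alpha_i}[Y]$ only once $\pi_{\alpha_i}[X]$ has reached level $j$, since splits propagate from $X$ through the complete bipartite bridges in $\mathcal{X}, \mathcal{Y}$ to $Y$. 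Conversely, refining $X$ from level $l$ to level $l+1$ requires the out-vertices $a_0^l,a_1^l$ to be distinguished, as these are attached to even/odd level-$(l+1)$ blocks of $X$.

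Using this chain, the possible refinement steps are: (i) splits internal to a gadget $C_j$ with $j > n+1$ cannot pass through the gadget and leave $n_A$, the out-vertex layer, and the $X$-partition unchanged; (ii) a split that separates $\{a_0^{n+1}, a_1^{n+1}\}$ increases $n_A$ to $n+1$ but leaves $\pi_{\alpha_{i+1}}[X]$ still at level $n_A$ in the new numbering, matching the first option of the dichotomy; (iii) a split that propagates the now-distinguished out-vertices through $\mathcal{X}, \mathcal{Y}, Y$ advances $\pi_{\alpha_{i+1}}[X]$ from level $n_A$ to level $n_A + 1$, matching the second option; (iv) splits occurring inside $\mathcal{X}, \mathcal{Y}, Y$ can only refine these layers up to the current partition of $X$, so $\pi_{\alpha_{i+1}}[X] \preceq \pi_{\alpha_{i+1}}[\mathcal{X}], \pi_{\alpha_{i+1}}[\mathcal{Y}], \pi_{\alpha_{i+1}}[Y]$ is preserved.

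The main obstacle is ensuring the two advances are genuinely coupled, i.e.\ that no refinement step can jump $X$ by two levels at once or skip a concealer gadget. Jumping $X$ by two levels would require the out-vertices of $C_{n+2}$ to be distinguished, but by Lemma~\ref{lem:concealer:refine} this requires the correct pair of $C_{n+2}$ to be split, which in turn requires a level-$(n+2)$ block of $Y$ — impossible since the inductive hypothesis forces $\pi_{\alpha_i}[Y]$ to be coarser than or equal to the level-$(n+1)$ partition. A symmetric argument rules out advancing out-vertex levels non-sequentially. Verifying that these structural constraints survive every admissible refinement step, using Lemmas~\ref{lem:concealer} and~\ref{lem:concealer:refine} to argue about concealer-internal invariants, constitutes the bulk of the case analysis.
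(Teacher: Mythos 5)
Your proposal is correct and follows essentially the same route as the paper's own proof: an induction over the refinement steps with a case analysis of the possible splits, resting on the invariants that $\pi_{\alpha_i}[X]$ stays at least as fine as $\pi_{\alpha_i}[\mathcal{X}],\pi_{\alpha_i}[\mathcal{Y}],\pi_{\alpha_i}[Y]$ and that gadget $C_l$ can only be activated by a level-$l$ block of $Y$ (via Lemmas~\ref{lem:concealer} and~\ref{lem:concealer:refine}), so the out-vertex level and the level of $\pi_{\alpha_i}[X]$ can only advance in lockstep. The only cosmetic slip is in your case (ii): after $n_A$ increases, $\pi_{\alpha_{i+1}}[X]$ being at level $n_A$ in the new numbering corresponds to the second bullet of the dichotomy rather than the first, which does not affect the argument.
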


\begin{proof}

For $i=0$ this is obviously true with $n_A = 0$.

For later iterations, we consider the possible splits. The start vertices $v_1$, $v_2$ and $v_3$ are not able split $X$ any further after the first iteration. Any part of $X$ in $\pi_{\alpha_i}$ can split $\mathcal{X}$, but not $A$, since $a_0^j$ and $a_1^j$ are equally connected to all $X_q^{n_A}$ and $X_q^{n_A-1}$ for all $j > n_A$.

Due to the simple one-to-$k$ connection from $X$ to $\mathcal{X}$ and because $X$ is already finer, none of these splits makes $\mathcal{X}$ finer than $X$. Since $\mathcal{X}$ is coarser than $X$, $\pi_{\alpha_i}[X]$ will not be changed by subsets of $\mathcal{X}$. With the same argument, $\mathcal{X}$ does not make $\mathcal{Y}$ coarser than $X$ and vice versa. The same holds for $\mathcal{Y}$ and $Y$.

Now consider the case that $\pi_{\alpha_i}[X] = \{ X_q^{n_A} \bigm| 0 \leq q \leq 2^{n_A}-1 \}$. Since $Y$ is coarser, no part of $Y$ can yield to an activated gadget $C_{n_A}$. Also, any other splits of concealer gadgets (which do not split input pairs) do not change the claimed property. Only distinguished out-vertices of the concealer gadgets can cause splits of $X$, which are all the $a_0^j$, $a_1^j$ with $j \leq n_A$. Out of those, only $a^{n_A}_0$ and $a^{n_A}_1$ can further split $X$ into the blocks of level $n_A+1$, not changing the claimed property. The other levels can only split $X$ into blocks of lower level, which has already been done.

Otherwise, i.e., if $\pi_{\alpha_i}[X] = \{ X_q^{n_A+1} \bigm| 0 \leq q \leq 2^{n_A+1}-1 \}$, there can be a part $Y'$ of $Y$ in $\pi_{\alpha_i}[Y]$ with $Y' = Y_q^{n_A+1}$ for some $q$. Then $Y'$ can activate $C_{n_A+1}$, which leads to a split of $a^{n_A+1}_0$ from $a^{n_A+1}_1$. This will increase $n_A$ by $1$ (since $X$ was partitioned into blocks of level $n_A+1$, the claimed property is preserved). A split of $X$ from $a^{n_A}_0$ or $a^{n_A}_1$ cannot happen in this case.

A split from the in-vertices of the concealer gadgets to $Y$ will also never make $Y$ finer than $X$, since $C_{n_A}$, the gadget which can split $Y$ into the finest partitions under all the pairs in $A$, is connected to the $n_A$-th level of $Y$, and $X$ has already been split into the blocks of level at least $n_A$.
\end{proof}

With this lemma, we can define an adversary that constructs a graph such that a specific strategy shows worst case behavior.
It should be noted that the lemma can also be stated with $a_0^j$/$a_1^j$ exchanged by $b_{i_j}^j$/$b_{i_j+1}^j$, since the split of the former is directly dependent on the split of the latter. The lemma is stated in this way so that we can reuse it at a later point.

Let $\mathcal{A}$ be the corresponding color refinement to some strategy $W$.
We construct an infinite family of graphs on which $\mathcal{A}$ has costs of $\Omega(m \log(n))$. In the family there is for each $k \in \mathbb{N}$ a graph $G_k \in \mathcal{G}_k$.
We start with a concealer graph $G \in \mathcal{G}_k$ and then successively specify the position of the correct pairs. 

Let $\pi_t$ be the partitions of $G$ that $\mathcal{A}$ produces in step $t$.
Consider an arbitrary step $t$ in the execution of $\mathcal{A}$, where $\pi_t[X] = \{ X_q^{n_A+1} \bigm| 0 \leq q \leq 2^{n_A+1}-1 \}$ for the unique $n_A$ from the previous lemma, but the correct pair $b_{i_{n_A}}^{n_A},b_{i_{n_A}+1}^{n_A}$ of the level~$n_A$ concealer gadget has not been distinguished.
We know that $\mathcal{A}$ needs to split $b_{i_{n_A}}^{n_A},b_{i_{n_A}+1}^{n_A}$ to continue to the next level.
Let $t_{\text{next}}$ be the largest $t'$ such that $\pi_{t'}[\{b_0^{n_A},...,b_{2^{n_A}-1}^{n_A}\}]=\pi_t[\{b_0^{n_A},...,b_{2^{n_A}-1}^{n_A}\}]$, i.e., the next point in time where the in-vertices of the current concealer gadget are split.
We assume w.l.o.g.\ that this is a split of an input pair.
Let $b_{i_{\text{next}}}^{n_A},b_{i_{\text{next}}+1}^{n_A}$ be the in-vertex pair which is distinguished at $t_{\text{next}}$.
An adversary can choose the concealer gadget of level~$n_A$ such that $b_{i_{\text{next}}}^{n_A},b_{i_{\text{next}}+1}^{n_A}$ is a dead end pair.

Due to Lemma~\ref{lem:concealer} and Lemma~\ref{lem:concealer:refine}, the behavior of $\mathcal{A}$ until step $t_{\text{next}}$ stays the same, no matter what concealer gadget is used in $G$.
With the previous lemma, we know that after distinguishing $b_{i_{\text{next}}}^{n_A},b_{i_{\text{next}}+1}^{n_A}$, we are in the same situation as before, i.e.\ $X$ is partitioned into the blocks of level~$n_A$, but the correct pair $b_{i_{n_A}}^{n_A},b_{i_{n_A}+1}^{n_A}$ has not been split (thus, another split of an input pair is needed to increase $n_A$).

If at step $t$ we are in the case that $\pi_t[X] = \{ X_q^{n_A} \bigm| 0 \leq q \leq 2^{n_A}-1 \}$, no important splits happen, since for all levels $l \leq n_A$, every input pair of $C_l$ has already been split and for all $l \geq n_A$, no splits of input pairs are possible.

Thus, we can repeat this changes as often as necessary such that the correct pair is split after all dead end pairs.
By doing this for each level, we get a graph on which $\mathcal{A}$ has cost $\Omega(m \log(n))$.

\section{Details for Section \ref{sec:Stack>Queue} (Stack Advantage over Queue)}
\label{sec:Stack>Queue_appendix}

We now present the graph class on which the stack worklist can be faster than the queue worklist in more detail.
We show that on this class a stack based algorithm can have linear runtime, whereas the queue version always needs time $\Omega(m \log(n))$. Furthermore, a stack that always continues with one of the smallest new classes is also always fast on this graphs.

First, we formally define a class of graphs $G^{\text{S}}_k$ ($k \in \mathbb{N}$) and then show that a color refinement using a stack as worklist may only need linear time, as well as the lower bound for any of the queue-based color refinements.

For each $k \in \mathbb{N}$, the vertex set $V(G^{\text{S}}_k)$ consists of four layers 
$$X = \{ x_i \bigm| 1 \leq i \leq 2^k \}, Y = \{ y_i \bigm| 1 \leq i \leq 2^k \},$$
$$\mathcal{X} = \{ x_i^j \bigm| 0 \leq i \leq 2^k-1, 1 \leq j \leq k \}, \mathcal{Y} = \{ y_i^j \bigm| 0 \leq i \leq 2^k-1, 1 \leq j \leq k \},$$
a connection layer
$$A = \{ a_{i,j} \bigm| 1 \leq i \leq k-1, 0 \leq j \leq 1 \}$$
and a starting gadget $v_1,v_2,v_3$.
$E(G^{\text{S}}_k)$ consists of the following sets:
\begin{itemize}
	\item $\{ (v_1,v_2) \}$, $\{(v_2,x),(v_3,x') \bigm| x \in X_0^0, x' \in X_1^0 \}$,
	\item $\{ (x_i,x_i^j),(y_i,y_i^j) \bigm| 0 \leq i \leq 2^k-1, 1 \leq j \leq k \}$,
	\item $\{ (x_i^j,y_i^{j'}) \bigm| 0 \leq i \leq 2^k-1, 1 \leq j,j' \leq k \}$,
	\item $\{ (a_{i,0},y),(a_{i,1},y') \bigm| 1 \leq i \leq k-1, y \in Y_{2j}^i, Y' \in Y_{2j+1}^i, 0 \leq j \leq 2^{i-1}-1 \}$,
	\item $\{ (a_{i,0},x),(a_{i,1},x') \bigm| 1 \leq i \leq k-2, x \in X_{2j}^{i+1}, x' \in X_{2j+1}^{i+1}, 0 \leq j \leq 2^{i+1-1}-1 \}$.
\end{itemize}

\noindent
Note that the graph has a size of $\mathcal{O}(2^k \cdot k^2)$ with $\mathcal{O}(2^k \cdot k)$ vertices and $\mathcal{O}(2^k \cdot k^2)$ edges.

$G^{\text{S}}_3$ is shown as an example in Figure~\ref{fig:G_3_Stack>Queue}.
The layout illustrates the layered structure of the different vertex sets.
The colors show the splitting possibilities of the pairs in $A$ and the blocks of $Y$.
We refer to the pair $a_{l,0},a_{l,1}$ as the $l$-th level of $A$.
The starting gadget can split $X$ into the blocks of level $1$.
The $l$-th level vertices of $A$ can split $X$ into the blocks of level $l+1$ if it was already split into blocks of level $l$ (otherwise it would be a coarser partition of $X$).
Distinguishing two $Y$-blocks of level $l$ within one block of level $l-1$ leads to distinguishing the $A$-vertices of level $l$.
Therefore, we get the same overall scheme as in Section~\ref{sec:competitiveRatio} which is formalized with the following lemma.

\begin{lemma}
	\label{lemma:XgetsDiscrete_Stack>Queue}
	
	The coarsest stable coloring $\alpha_\infty$ distinguishes all blocks of level $k$ of $X$, i.e.
	
	$$X_0^k,...,X_{2^k-1}^k \in \pi_{\alpha_\infty}.$$
	
\end{lemma}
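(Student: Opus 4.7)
The plan is to prove by induction on $l \in \{1, \ldots, k\}$ that the stable coloring $\alpha_\infty$ separates all $2^l$ level-$l$ blocks $X_0^l, \ldots, X_{2^l-1}^l$ of $X$; the lemma is the case $l = k$.

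For the base case $l = 1$, I would invoke the starting gadget: the degrees $1$, $1 + 2^{k-1}$, and $2^{k-1}$ of $v_1, v_2, v_3$ are pairwise distinct, so all three are individualized by the first degree-based refinement, and a subsequent refinement of $X$ with respect to $\{v_2\}$ produces the partition $\{X_0^1, X_1^1\}$.

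For the inductive step, assume $\alpha_\infty$ separates $X$ into its $2^l$ level-$l$ blocks, and trace this information up through the middle layers. Since $x_i^j \in \mathcal{X}$ has $x_i$ as its unique neighbor in $X$, the stable coloring refines $\mathcal{X}$ at least by the level-$l$ block of $i$. Via the complete bipartite subgraph of column $i$ between $\mathcal{X}$ and $\mathcal{Y}$, this propagates to $\mathcal{Y}$, and via the attachment $y_i \leftrightarrow y_i^{j'}$ it further propagates to $Y$, so $Y$ is also separated at level $l$. Then refining the class containing $\{a_{l,0}, a_{l,1}\}$ with respect to any single level-$l$ block $Y_{2j}^l$ yields neighbor counts $2^{k-l}$ and $0$, splitting the pair. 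A final refinement of $X$ with respect to $\{a_{l,0}\}$ partitions each level-$l$ block $X_q^l = X_{2q}^{l+1} \cup X_{2q+1}^{l+1}$ into its two level-$(l+1)$ halves, according to the edge definition for the $A$–$X$ connections.

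The main obstacle is justifying the propagation through the middle layers in the presence of substantial column-internal symmetry: a priori, the stable coloring of $\mathcal{X}$ could pick up refinements beyond the level-$l$ information from $X$, and one must confirm that no such extra distinction desynchronizes the cascade to $\mathcal{Y}$, $Y$, and the $a$-pair. The cleanest route is to observe that after the first degree-refinement the classes $X$, $Y$, $\mathcal{X}$, $\mathcal{Y}$, and the three singletons $\{v_1\}, \{v_2\}, \{v_3\}$ are pairwise separated, and each middle-layer vertex $x_i^j$ receives from outside its column only the color of $x_i$; hence its stable color is determined by $i$ together with its intra-column role, which is precisely what the cascade transmits.
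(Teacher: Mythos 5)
Your cascade (level-$l$ blocks of $X$ propagate through $\mathcal{X}$ and $\mathcal{Y}$ to $Y$, a level-$l$ block of $Y$ splits $a_{l,0}$ from $a_{l,1}$, and $\{a_{l,0}\}$ splits $X$ to level $l+1$) is exactly the argument the paper sketches before the lemma, but two of your supporting claims do not hold in $G^{\text{S}}_k$. First, $v_3$ is \emph{not} individualized by the initial degree refinement: $\deg(v_3)=2^{k-1}=\deg(a_{k-1,0})=\deg(a_{k-1,1})$, since the level-$(k-1)$ vertices of $A$ have no edges into $X$; only $\{v_1\}$ and $\{v_2\}$ are degree-singletons. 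Second, and this is the substantive gap, $X$- and $Y$-vertices also share the same degree $2k-1$, so classes mixing $X$ with $Y$ really occur (for instance the class $X_3^2\cup Y_1^1$ arises after the first cascade, and $a_{1,1}$ is adjacent to every vertex of both parts, so it cannot resolve it). Consequently your induction at best shows that vertices of $X$ in different level-$l$ blocks receive different colors; it does not give the lemma's literal statement that the level-$k$ blocks, which are singletons $\{x_q\}$, are \emph{classes} of $\pi_{\alpha_\infty}$, because nothing in your argument prevents some $x_q$ from staying merged with a $Y$-vertex. The same issue touches the inductive step itself: the count ``$2^{k-l}$ versus $0$'' for separating $a_{l,0}$ from $a_{l,1}$ tacitly assumes the class containing $Y_{2j}^l$ contains no $X$-vertices; a priori it could be of the form $Y_{2j}^l\cup S$ with $S\subseteq X$, and then both $a$-vertices pick up further neighbors from $S$.

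The missing piece is an explicit separation of $X$ from $Y$ in $\alpha_\infty$. Since $\{v_2\}$ is a singleton by degree, $X_0^1$ is separated from $Y$ (and from $X_1^1$); cascading once through the layers shows $\mathcal{X}_0^1$, $\mathcal{Y}_0^1$ and then $Y_0^1$ occur as intermediate classes, hence are unions of classes of $\alpha_\infty$ that are pure in $Y$. Any such class containing a neighbor of $a_{k-1,0}$ (resp.\ $a_{k-1,1}$) then distinguishes these vertices from $v_3$, which has no neighbors in $Y$; so $\{v_3\}$ is a class of $\alpha_\infty$, and refining with respect to $\{v_2\}$ and $\{v_3\}$ separates all of $X$ from all of $Y$. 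After that, every class appearing in your induction lies inside a single layer, your neighbor counts are valid (your middle-layer observation that $x_i^j$ sees only $x_i$ outside its column is fine), and the level-$k$ blocks are indeed singleton classes. For comparison, the paper gives no detailed proof either: it states the same informal cascade and then reuses the invariant of Lemma~\ref{lemma:n_A} (the partition of $X$ is exactly a block partition and the other layers are coarser), which is precisely the bookkeeping your write-up omits.
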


\noindent
Analogously to Section~\ref{subsec:slowconcealer_appendix}, also in $G^{\text{S}}_k$ the pair $a_0^l,a_1^l$ has to be distinguished for $X$ to be split into the blocks of level $l+1$. Hence, we can apply Lemma~\ref{lemma:n_A} here to, which leads to the following corollary:

\begin{corollary}
	
	Apart from the first split into $X_0^1$ and $X_1^1$, which is done by $v_2$ or $v_3$, $X$ is always split by subsets of $A$. More precisely, the class which splits $X$ is either $\{a_{n_A,0}\}$ or $\{a_{n_A,1}\}$, with $n_A$ being the unique number from the previous lemma.
	
	\label{cor:AsplitsX}
\end{corollary}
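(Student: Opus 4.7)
The plan is to enumerate the candidate color classes that could induce a split of $X$, and show that after the initial refinement only the singletons $\{a_{n_A,0}\}$ or $\{a_{n_A,1}\}$ can do so. Since a class only splits $X$ if it contains neighbors of some $X$-vertex, I first observe that $N(X) \subseteq \{v_2,v_3\} \cup \mathcal{X} \cup A$ directly from the edge list of $G^{\text{S}}_k$. Thus it suffices to verify the claim for color classes drawn from each of these three pieces in turn.

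First I would handle the starting gadget. By the edge definition, $v_2$ is adjacent exactly to the vertices of $X_0^1$ and $v_3$ exactly to those of $X_1^1$, and both are individualized after the initial degree-refinement. Hence the singletons $\{v_2\},\{v_3\}$ split the (monochromatic) $X$ into $X_0^1$ and $X_1^1$, after which both halves already have constant neighbor-counts (namely $1$ or $0$) in either singleton. So these singletons cannot split $X$ any further.

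Next I would treat the $\mathcal{X}$ case by means of a structural invariant: for any fixed $i$, the vertices $x_i^1,\ldots,x_i^k$ share the same closed neighborhood $\{x_i,y_i^1,\ldots,y_i^k\}$ and are therefore never separated by color refinement. Consequently every color class $M \subseteq \mathcal{X}$ is a union of sets of the form $\{x_i^1,\ldots,x_i^k\}$, and by the inequality $\pi_{\alpha_i}[X] \preceq \pi_{\alpha_i}[\mathcal{X}]$ from Lemma~\ref{lemma:n_A} the index set $\{i : \{x_i^1,\ldots,x_i^k\} \subseteq M\}$ is a union of current $X$-classes. Therefore every $x_i \in X$ has either count $k$ or count $0$ in $M$, and the value depends only on the current $X$-class of $x_i$; refining $X$ with respect to any class in $\mathcal{X}$ induces no further split.

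The remaining case, classes in $A$, is where Lemma~\ref{lemma:n_A} does the main work, and this is the step I expect to require the most care. By that lemma there is a unique $n_A$ with the $A$-pairs distinguished for $l \le n_A$ and undistinguished for $l > n_A$. For $l > n_A$, the joint class $\{a_{l,0},a_{l,1}\}$ is, by the edge set, adjacent to every $x\in X$ exactly once, producing a uniform count $1$ and no split. For $l < n_A$ the singleton $\{a_{l,0}\}$ is adjacent to $\bigcup_j X^{l+1}_{2j}$, a union of level-$(l+1)$ blocks; since Lemma~\ref{lemma:n_A} also says that $\pi_{\alpha_i}[X]$ is partitioned into blocks of level $n_A$ or $n_A+1$, which are strictly refinements of the level-$(l+1)$ partition, every current $X$-class is wholly contained in such a union or wholly disjoint from it, so again no split is induced. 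Only at $l=n_A$ and only when $\pi_{\alpha_i}[X]=\{X_q^{n_A}\}$, refining by $\{a_{n_A,0}\}$ or $\{a_{n_A,1}\}$ genuinely splits each $X_q^{n_A}$ into $X_{2q}^{n_A+1}$ and $X_{2q+1}^{n_A+1}$. Combining the three cases establishes both statements of the corollary.
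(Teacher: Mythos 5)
Your proof is correct and follows essentially the same route as the paper: the corollary is obtained there by applying Lemma~\ref{lemma:n_A}, whose proof is exactly your enumeration (the start vertices cannot re-split $X$ after the first halving, classes meeting $\mathcal{X}$ stay coarser than $\pi[X]$ via the one-to-$k$ connection, and among the $A$-classes only the distinguished level-$n_A$ out-vertices can refine the current block partition of $X$). One negligible slip: $v_3$ need not be a singleton after the degree refinement, since it can share its degree $2^{k-1}$ with $a_{k-1,0},a_{k-1,1}$, but as those vertices have no neighbors in $X$ this changes none of your neighbor counts.
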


\newif\ifcol
\newif\ifcoltwo
\coltrue
\coltwotrue
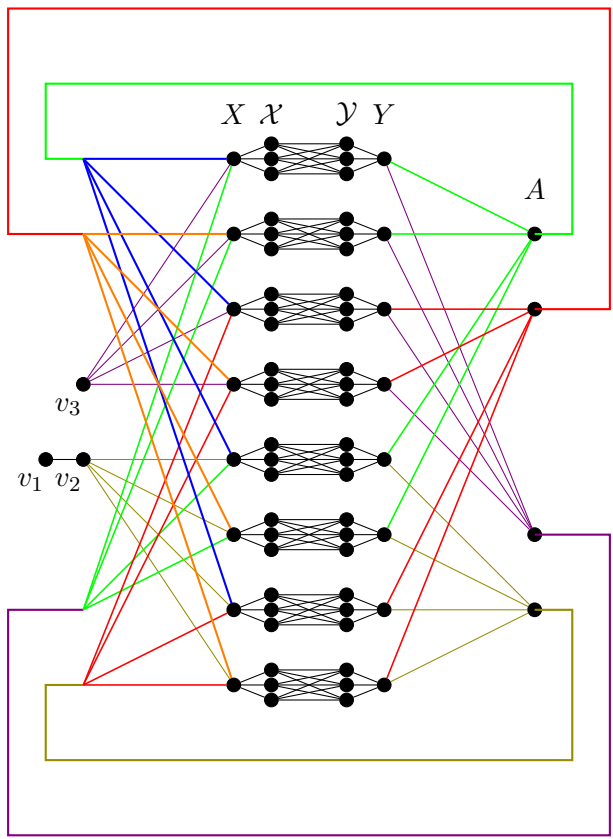
\begin{figure}
	\centering
	\begin{tikzpicture}
	\node[draw,circle,fill,scale=0.5] at (-3, 5) (v_3) {};
	\node[] at (-3.2, 4.7) (v_3_label) {$v_3$};
	\node[draw,circle,fill,scale=0.5] at (-3, 4) (v_2) {};
	\node[] at (-3.2, 3.7) (v_2_label) {$v_2$};
	\node[draw,circle,fill,scale=0.5] at (-3.5, 4) (v_1) {};
	\node[] at (-3.7, 3.7) (v_1_label) {$v_1$};
	\draw[] (v_1) -- (v_2);
	
	\node[] at (-1, 8.6) (X_label) {$X$};
	\node[] at (-0.5, 8.6) (XX_label) {$\mathcal{X}$};
	\node[] at (0.5, 8.6) (YY_label) {$\mathcal{Y}$};
	\node[] at (1, 8.6) (Y_label) {$Y$};
	\node[] at (3, 7.6) (A_label) {$A$};
	
	\node[draw,circle,fill,scale=0.5] at (3, 2) (a_1_1) {};
	\node[draw,circle,fill,scale=0.5] at (3, 3) (a_1_2) {};
	\draw \ifcoltwo [olive,thick] \else [] \fi (3,2) -- (3.5,2) -- (3.5,0) -- (-3.5,0) -- (-3.5,1) -- (-3,1);
	\draw \ifcoltwo [violet,thick] \else [] \fi (3,3) -- (4,3) -- (4,-1) -- (-4,-1) -- (-4,2) -- (-3,2);
	\node[draw,circle,fill,scale=0.5] at (3, 6) (a_2_1) {};
	\node[draw,circle,fill,scale=0.5] at (3, 7) (a_2_2) {};
	\draw \ifcoltwo [red,thick] \else [] \fi (3,6) -- (4,6) -- (4,10) -- (-4,10) -- (-4,7) -- (-3,7);
	\draw \ifcoltwo [green,thick] \else [] \fi (3,7) -- (3.5,7) -- (3.5,9) -- (-3.5,9) -- (-3.5,8) -- (-3,8);
	
	\foreach \i in {1,...,8}{
		\node[draw,circle,fill,scale=0.5] at (-1, \i) (x_\i) {};
		\node[draw,circle,fill,scale=0.5] at (1, \i) (y_\i) {};
		\foreach \j in {-1,...,1}{
			\node[draw,circle,fill,scale=0.5] at (-0.5, \i+\j*0.2) (x_\i_\j) {};
			\node[draw,circle,fill,scale=0.5] at (0.5, \i+\j*0.2) (y_\i_\j) {};
			\draw[] (x_\i) -- (x_\i_\j);
			\draw[] (y_\i) -- (y_\i_\j);
		}
		\foreach \j in {-1,...,1}{
			\foreach \k in {-1,...,1}{
				\draw[] (x_\i_\k) -- (y_\i_\j);
			}
		}
	}
	\foreach \i in {1,...,4}{
		\draw \ifcol [olive] \else [] \fi (x_\i) -- (v_2);
	}
	\foreach \i in {5,...,8}{
		\draw \ifcol [violet] \else [] \fi (x_\i) -- (v_3);
	}
	
	\draw \ifcol [olive] \else [] \fi (y_1) -- (a_1_1);
	\draw \ifcol [olive] \else [] \fi (y_2) -- (a_1_1);
	\draw \ifcol [olive] \else [] \fi (y_3) -- (a_1_1);
	\draw \ifcol [olive] \else [] \fi (y_4) -- (a_1_1);
	\draw  \ifcol [violet] \else [] \fi (y_5) -- (a_1_2);
	\draw  \ifcol [violet] \else [] \fi (y_6) -- (a_1_2);
	\draw  \ifcol [violet] \else [] \fi (y_7) -- (a_1_2);
	\draw  \ifcol [violet] \else [] \fi (y_8) -- (a_1_2);
	
	\draw \ifcol [red,semithick] \else [] \fi (y_1) -- (a_2_1);
	\draw \ifcol [red,semithick] \else [] \fi (y_2) -- (a_2_1);
	\draw \ifcol [green,semithick] \else [] \fi (y_3) -- (a_2_2);
	\draw \ifcol [green,semithick] \else [] \fi (y_4) -- (a_2_2);
	\draw \ifcol [red,semithick] \else [] \fi (y_5) -- (a_2_1);
	\draw \ifcol [red,semithick] \else [] \fi (y_6) -- (a_2_1);
	\draw \ifcol [green,semithick] \else [] \fi (y_7) -- (a_2_2);
	\draw \ifcol [green,semithick] \else [] \fi (y_8) -- (a_2_2);
	
	\draw \ifcol [red,semithick] \else [] \fi (x_1) -- (-3,1);
	\draw \ifcol [red,semithick] \else [] \fi (x_2) -- (-3,1);
	\draw \ifcol [green,semithick] \else [] \fi (x_3) -- (-3,2);
	\draw \ifcol [green,semithick] \else [] \fi (x_4) -- (-3,2);
	\draw \ifcol [red,semithick] \else [] \fi (x_5) -- (-3,1);
	\draw \ifcol [red,semithick] \else [] \fi (x_6) -- (-3,1);
	\draw \ifcol [green,semithick] \else [] \fi (x_7) -- (-3,2);
	\draw \ifcol [green,semithick] \else [] \fi (x_8) -- (-3,2);
	
	\draw \ifcol [orange,thick] \else [] \fi (x_1) -- (-3,7);
	\draw \ifcol [blue,thick] \else [] \fi (x_2) -- (-3,8);
	\draw \ifcol [orange,thick] \else [] \fi (x_3) -- (-3,7);
	\draw \ifcol [blue,thick] \else [] \fi (x_4) -- (-3,8);
	\draw \ifcol [orange,thick] \else [] \fi (x_5) -- (-3,7);
	\draw \ifcol [blue,thick] \else [] \fi (x_6) -- (-3,8);
	\draw \ifcol [orange,thick] \else [] \fi (x_7) -- (-3,7);
	\draw \ifcol [blue,thick] \else [] \fi (x_8) -- (-3,8);
	\end{tikzpicture}
	\caption[]{The graph $G^{\text{S}}_3$. The vertices $x_0...x_7$, $x_0^1...x_7^3$, $y_0^1...y_7^3$ and $y_0...y_7$ are ordered vertically from top to bottom.
		The edges incident to $A$ have been colored according to their level for an easier visual separation, those from $A$ to $X$ have been bundled for a better overview.
		For example, a separation of the level~1 blocks of $Y$ leads to a split on the first level of $A$ (yellow and purple edges on the right), which causes a split of $X$ into the blocks of level~2 (green and red edges on the left).}
	\label{fig:G_3_Stack>Queue}
\end{figure}

It should be easy to see why a queue worklist shows worst case behavior in these graphs.
The core argument needed is that for each level~$l$, right after handling the $\Theta(2^l)$ level~$l$ blocks of $X$, all $\Theta(2^l)$ blocks of $\mathcal{X}$ are handled (due to the breadth first behavior) and therefore costs of $2^k \cdot k^2$ are needed before the $l$-th level of $A$ is split.

\subsection*{Stack Behavior}

After a refinement in a stack based color refinement, the new color classes can be pushed to the stack in any order.
This order determines which class is handled next.
Therefore, it heavily influences the overall behavior of the algorithm.
We will now show that by choosing one of the smallest new classes, the linear running time can be achieved.
It should be noted that such a worklist can be implemented in the desired running time.
Furthermore, we also assume that for each split the largest new class is dropped.

We consider what can be exploited by a stack based algorithm: to distinguish $\{a_{l,0}\}$ from $\{a_{l,1}\}$, it is not necessary to split $Y$ into all blocks of level $l$.
It is actually enough to distinguish a single level~$l$ block from the rest of $Y$ to achieve that split.
Since a stack based algorithm does not handle all the blocks of $X$, $\mathcal{X}$, $\mathcal{Y}$ and $Y$ layer by layer, as a queue based algorithm does, it is possible to only handle one block of each layer per level.
In particular, for level~$l$ it is enough to do the following refinements: one w.r.t.\ a level~$l$ block of $X$, then one w.r.t.\ a single block of $\mathcal{X}$ and one w.r.t.\ a block of $\mathcal{Y}$.
At last, a single refinement w.r.t.\ a level~$l$ block of $Y$ splits the $l$-th level of $A$ and we can continue with a level~$l+1$ block of $X$.
Hence, the running time for level $l$ is $\mathcal{O}(2^{k-l} \cdot k^2)$, which then gives a total running time of $\mathcal{O}(2^{k} \cdot k^2)$.
The considered stack does exactly this.
When propagating through the layers, the only classes to choose for continuation are blocks of the same level, as the larger ones are always dropped.
After the refinement w.r.t.\ $a_0^l$ or $a_1^l$, the new classes are level~$l$ blocks of $Y$ and level~$l+1$ blocks of $X$.
Since the level~$l+1$ blocks are smaller, one of them is chosen as the next set to refine with.
Hence, we get the desired behavior.

\section{Details for Section \ref{sec:Queue>Stack} (Queue Advantage over Stack)}
\label{sec:Queue>Stack_appendix}
We now present the graph class on which the queue worklist can be faster than the stack worklist in more detail. We start by proving the claimed properties of the starting gadget. Then, we give a full definition of the graph class that was shown in Figure~\ref{fig:G_3_Queue>Stack}. At last, we describe our arguments on the queue and stack behavior in more detail.

\subsection*{Starting gadget}
To construct the gadget shown in Figure~\ref{fig:startingGadget_Queue>Stack}, we add the following vertices and edges to an $\AND_2$ graph:
\begin{itemize}
	\item four starting vertices $s_1,...,s_4$, connected to the in-vertices $b_0,...,b_3$ by the edges\\
	$\{ (s_i,b_{i-1}) \bigm| 0 \leq i \leq 3 \}$
	\item two paths of pairs $p_1^{1,1},p_1^{2,1},p_1^{1,2},p_1^{2,2},p_1^{1,3},p_1^{2,3}$ and $p_2^{1,1},p_2^{2,1},p_2^{1,2},p_2^{2,2},p_2^{1,3},p_2^{2,3}$
	with the edges $\{ (p_i^{i',j},p_i^{i',j+1}) \bigm| i,i'\in\{1,2\}, j \in \{1,2\} \}$
	\item edges between $B$ and the two new paths:\\
	$\{ (b_0,p_1^{1,1}),(b_1,p_1^{2,1}),(b_2,p_2^{1,1}),(b_3,p_2^{2,1}) \}$
\end{itemize}

\noindent
To understand the functionality, we consider the behavior of Algorithm~\ref{alg:refine} on the gadget. Assume we have a coloring $\alpha$ with
\begin{itemize}
	\item $\alpha(b_i)=\alpha(b_j),\alpha(c_i)=\alpha(c_j)$ for all $i,j \in \{0,...,3\}$,
	\item $\alpha(a_0)=\alpha(a_1)$,
	\item $\alpha(p_1^{1,j})=\alpha(p_1^{2,j}),\alpha(p_2^{1,j})=\alpha(p_2^{2,j})$ for $j\in\{1,2,3\}$ and
	\item $\{s_1\},...,\{s_4\} \in \pi_\alpha$.
\end{itemize}
Any color refinement will eventually distinguish the pairs within the paths as well as the two out-vertices of the $\AND_2$ gadget, but a stack based one does it in a specific order.

We now consider a stack based algorithm that starts on this coloring. The intuitive idea is that any stack based color refinement will handle one of the paths before the vertices $a_0$ or $a_1$. It will first distinguish $p_1^{1,3}$ from $p_1^{2,3}$ or $p_2^{1,3}$ from $p_2^{2,3}$ before distinguishing $a_0$ from $a_1$, and, more importantly, it handles one of those new singleton classes $\{p_1^{1,3}\}$,$\{p_1^{2,3}\}$,$\{p_2^{1,3}\}$,$\{p_2^{2,3}\}$ before handling $\{a_0\}$ or $\{a_1$\}.

To see that, consider which splits are possible. The only classes for which a refinement will lead to splits of other classes are the singleton classes of the starting vertices. Each of them will individualize its partner in-vertex. So let us assume, w.l.o.g., that we start with $\{s_1\}$. This gives us two new classes $\{b_0\}$ and $\{b_1,b_2,b_3\}$, of which the larger one is dropped. So we continue by refining w.r.t.\ $\{b_0\}$. This leads to a split of the middle layer and also distinguishes $p_1^{1,1}$ and $p_1^{2,1}$, thus the first claimed property holds. Even if a new subclass of the middle layer is handled first, we know that it will not yield a split of $a_0$ and $a_1$, for which it is necessary to also distinguish $b_2$ and $b_3$. After the middle layer is handled, due to the stack's depth first behavior, $\{p_1^{1,1}\}$ or $\{p_1^{2,1}\}$ are dequeued, before any of the other starting vertices. Then the whole path $p_1$ is made discrete and handled step by step. Thus the second property also holds. By appending other graphs at the end of the paths, we can enforce certain splits to happen before $a_0$ and $a_1$ are distinguished and handled. 

For queue based algorithms, this is not the case. Due to the breadth first behavior, all starting vertices are directly handled and the $\AND$ gadget is activated. Thus, all induced splits happen in a layered fashion, where $a_0$ and $a_1$ are in the same layer as the vertices $p_1^{1,2}$, $p_1^{2,2}$, $p_2^{1,2}$ and $p_2^{2,2}$. They are therefore handled before the end vertices of the paths.

\subsection*{Definition of the Graph Class}

We now define the graph $G^{\text{Q}}_k$ for each $k \in \mathbb{N}$. At first, we simply list all vertices and edges, as the nomenclature is important for the arguments on the behavior of color refinement. We start with the vertex set, which consists of the following subsets:
\begin{itemize}
	\item The sets $X$ and $Y$: $ X = \{ x_i \bigm| i \in \mathcal{B}_k \}, Y = \{ y_i \bigm| i \in \mathcal{B}_k \} $
	\item The sets $\mathcal{X}$ and $\mathcal{Y}$: $ \mathcal{X} = \{ x_i^j \bigm| i \in \mathcal{B}_k, 1 \leq j \leq k \}, \mathcal{Y} = \{ y_i^j \bigm| i \in \mathcal{B}_k, 1 \leq j \leq k \} $
	\item An $\AND_l$ gadget for each $1 \leq l \leq k-1$ with in-vertices $b_l^0,...,b_l^{2^l-1}$ and out-vertices $a_l^0$ and $a_l^1$
	\item The start vertices $\{ s_1, s_2, s_3, s_4 \}$
	\item The $\AND$ gadget within the starting gadget: $\{ b_{\text{start}}^i,c_{\text{start}}^i \bigm| 0 \leq i \leq 3 \} \cup \{ a_{\text{start}}^0,a_{\text{start}}^1 \}$
	\item The $X$-paths $ \{ p_{x_i}^j \bigm| 0 \leq i \leq 2^k, 0 \leq j \leq k \} $
	\item The stack paths $ p_1 = \{ p_1^{1,i},p_1^{2,i} \bigm| 0 \leq i \leq k+2 \} $ and $ p_2 = \{ p_2^{1,i},p_2^{2,i} \bigm| 0 \leq i \leq k+2 \} $ with end vertices $\{p_{\text{end}}^1,p_{\text{end}}^2\}$
	\item The queue path $ p_{\text{Q}} = \{ p_{\text{Q}}^{i,0},p_{\text{Q}}^{i,1} \bigm| 0 \leq i \leq k \} $
	\item The unidirectional gadgets connecting the queue path with the $X$-paths:\\ $\{ b_{\text{Q}}^{l,i},c_{\text{Q}}^{l,i} \bigm| 0 \leq l \leq k, 0 \leq i \leq 3 \} \cup \{ a_{\text{Q}}^{l,0},a_{\text{Q}}^{l,1} \bigm| 0 \leq l \leq k \}$
	\item The unidirectional gadgets connecting the stack paths with their end vertices: $\{ b_{\text{end}}^{j,i},c_{\text{end}}^{j,i} \bigm| 1 \leq j \leq 2, 0 \leq i \leq 3 \} \cup \{ a_{\text{end}}^{j,0},a_{\text{end}}^{j,1} \bigm| 1 \leq j \leq 2 \}$
\end{itemize}
The graph has the following edges:
\begin{itemize}
	\item $\{ (x_i,x_i^j),(y_i,y_i^j) \bigm| i \in \mathcal{B}_k, 1 \leq j \leq k \}$
	\item $\{ (x_i^j,y_i^{j'}) \bigm| i \in \mathcal{B}_k, 1 \leq j,j' \leq k \}$
	\item The edges of the $\AND_l$ gadgets for each $1 \leq l \leq k-1$ (We will omit naming for better readability, as they are not important as long as the behavior of the gadget is known)
	\item $\{ (y,b_l^i) \bigm| y \in Y_i^l, 1 \leq l \leq k-1, 0 \leq i \leq 2^l-1 \}$
	\item $\{ (a_l^0,x),(a_l^1,x') \bigm| x \in X_{2i}^{l+1}, x' \in X_{2i+1}^{l+1}, 1 \leq l \leq k-1, 0 \leq i \leq 2^l-1 \}$
	
	\item $\{ (p_{x_i}^j,p_{x_i}^{j+1}) \bigm| 0 \leq i \leq 2^k, 0 \leq j \leq k-1 \}$
	\item $\{ (p_{x_i}^k,x_i) \bigm| 0 \leq i \leq 2^k \}$
	
	\item $\{ (s_i,b_{\text{start}}^{i-1}) \bigm| 1 \leq i \leq 4 \}$
	\item $\{ (b_{\text{start}}^0,c_{\text{start}}^0),(b_{\text{start}}^0,c_{\text{start}}^2),(b_{\text{start}}^1,c_{\text{start}}^1),(b_{\text{start}}^1,c_{\text{start}}^3)\}$ \\ $\{(b_{\text{start}}^2,c_{\text{start}}^1),(b_{\text{start}}^2,c_{\text{start}}^2),(b_{\text{start}}^3,c_{\text{start}}^0),(b_{\text{start}}^3,c_{\text{start}}^3) \}$
	\item $\{ (c_{\text{start}}^0,a_{\text{start}}^0),(c_{\text{start}}^1,a_{\text{start}}^0),(c_{\text{start}}^2,a_{\text{start}}^1),(c_{\text{start}}^3,a_{\text{start}}^1) \}$
	
	\item $\{ (b_{\text{Q}}^{l,0},c_{\text{Q}}^{l,0}),(b_{\text{Q}}^{l,0},c_{\text{Q}}^{l,2}),(b_{\text{Q}}^{l,1},c_{\text{Q}}^{l,1}),(b_{\text{Q}}^{l,1},c_{\text{Q}}^{l,3}),(b_{\text{Q}}^{l,2},c_{\text{Q}}^{l,1}),(b_{\text{Q}}^{l,2},c_{\text{Q}}^{l,2}),(b_{\text{Q}}^{l,3},c_{\text{Q}}^{l,0}),(b_{\text{Q}}^{l,3},c_{\text{start}}^{l,3}) \}$\\ for each $l$ in $\{0,...,k\}$
	\item $\{ (c_{\text{Q}}^{l,0},a_{\text{Q}}^{l,0}),(c_{\text{Q}}^{l,1},a_{\text{Q}}^{l,0}),(c_{\text{Q}}^{l,2},a_{\text{Q}}^{l,1}),(c_{\text{Q}}^{l,3},a_{\text{Q}}^{l,1}) \bigm| 0\leq l \leq k \}$
	
	\item $\{ (b_{\text{end}}^{i,0},c_{\text{end}}^{i,0}),(b_{\text{end}}^{i,0},c_{\text{end}}^{i,2}),(b_{\text{end}}^{i,1},c_{\text{end}}^{i,1}),(b_{\text{end}}^{i,1},c_{\text{end}}^{i,3}) \bigm| 1 \leq i \leq 2 \}$\\ $\{(b_{\text{end}}^{i,2},c_{\text{end}}^{i,1}),(b_{\text{end}}^{i,2},c_{\text{end}}^{i,2}),(b_{\text{end}}^{i,3},c_{\text{end}}^{i,0}),(b_{\text{end}}^{i,3},c_{\text{end}}^{i,3}) \bigm| 1 \leq i \leq 2 \}$
	\item $\{ (c_{\text{end}}^{i,0},a_{\text{end}}^{i,0}),(c_{\text{end}}^{i,1},a_{\text{end}}^{i,0}),(c_{\text{end}}^{i,2},a_{\text{end}}^{i,1}),(c_{\text{end}}^{i,3},a_{\text{end}}^{i,1}) \bigm| 1 \leq i \leq 2 \}$
	
	\item $ \{ (p_{\text{Q}}^{1,i},p_{\text{Q}}^{1,i+1}),(p_{\text{Q}}^{2,i},p_{\text{Q}}^{2,i+1}),(p_{\text{Q}}^{1,i},p_{\text{Q}}^{1,i+1}),(p_{\text{Q}}^{2,i},p_{\text{Q}}^{2,i+1}) \bigm| 0 \leq i \leq k-1 \}$
	
	\item $ \{ (p_1^{1,i},p_1^{1,i+1}),(p_1^{2,i},p_1^{2,i+1}),(p_2^{1,i},p_2^{1,i+1}),(p_2^{2,i},p_2^{2,i+1}) \bigm| 0 \leq i \leq k+1 \}$
	\item $ \{ (p_1^{1,k+2},b_{\text{end}}^{1,0}),(p_1^{1,k+2},b_{\text{end}}^{1,2}), (p_1^{2,k+2},b_{\text{end}}^{1,1}),(p_1^{2,k+2},b_{\text{end}}^{1,3})\}$\\
	$\{(p_2^{1,k+2},b_{\text{end}}^{2,0}),(p_2^{1,k+2},b_{\text{end}}^{2,2}),
	(p_2^{2,k+2},b_{\text{end}}^{2,1}),(p_2^{2,k+2},b_{\text{end}}^{2,3}) \} $
	\item $ \{ (a_{\text{end}}^{1,1},p_{\text{end}}^1),(a_{\text{end}}^{1,2},p_{\text{end}}^2),(a_{\text{end}}^{1,1},p_{\text{end}}^1),(a_{\text{end}}^{2,2},p_{\text{end}}^2) \} $
	
	\item $\{ (p_{\text{end}}^1,x),(p_{\text{end}}^2,x') \bigm| x \in X_0^1, x' \in X_1^1 \}$
	
	\item $\{ (a_{\text{Q}}^{l,1},p_{x_i}^l),(a_{\text{Q}}^{l,2},p_{x_i'}^l) \bigm| i \in \mathcal{B}_0^1, i' \in \mathcal{B}_1^1 \}$
\end{itemize}
Figure~\ref{fig:G_3_Queue>Stack} shows the example graph $G_3$. Again colors are used to get a better view on the splitting powers of certain classes. We will assume that $s_1,...,s_4$ are individualized initially. To achieve that, we can simply connect each one with a unique amount of leaf nodes. As the set of degrees in the graph is constant, this will not change the asymptotic size of the graph $G^{\text{Q}}_k$. Then it holds that after the first refinement every vertex is colored with its degree, including that the four start vertices are individualized. So it does not change anything to assume this coloring as the initial one and omit the leaves connected to the start vertices.

Starting with this coloring, it holds that all the vertex pairs on the paths will be distinguished eventually. This is easy to see by the following observations. The individualized start vertices activate the $\AND$ gadget in the starting gadget, which causes $p_{\text{Q}}$ to be split. It also leads to the pairs in the paths $p_1$ and $p_2$ to be distinguished, as those two paths are directly connected to the start vertices. Furthermore, in this graph class $X$ and $Y$ again become fully discrete and therefore the $X$-paths do so as well. This is caused by the splits on $p_1$ and $p_2$ (amongst others), since they yield the same initial split on $X$ as in the other constructions in Section~\ref{sec:Stack>Queue} and Section~\ref{sec:competitiveRatio}. This allows us to use Lemma~\ref{lemma:XgetsDiscrete_Stack>Queue} here as well to conclude that $X$ becomes discrete.

We formalize these results for the new graph class in the following lemma.

\begin{lemma}
	
	Let $\alpha$ be a coloring for $G^{\text{Q}}_k$ with $\alpha(s_i)=i$ for $1 \leq i \leq 4$ and $\alpha(v) = d(v)+4$ for all $v \in V(G^{\text{Q}}_k) \setminus \{s_1,s_2,s_3,s_4\}$. For $\alpha_\infty$ the following holds:
	\begin{itemize}
		\item $X_i^k,Y_i^k,\mathcal{X}_i^k,\mathcal{Y}_i^k \in \pi_{\alpha_\infty}$ for all $0 \leq i \leq 2^k-1$
		\item $\{p_{x_i}^{l}\} \in \pi_{\alpha_\infty}$ for all $0 \leq i \leq 2^k-1,1 \leq l \leq k$.
		\item $\alpha_\infty(p_i^{1,j}) \neq \alpha_\infty(p_i^{2,j})$ for all $1 \leq i \leq 2, 1 \leq j \leq k+2$
		\item $\alpha_\infty(p_{\text{Q}}^{1,j}) \neq \alpha_\infty(p_{\text{Q}}^{2,j})$ for all $1 \leq j \leq k$
		\item $\alpha_\infty(p_{\text{end}}^1) \neq \alpha_\infty(p_{\text{end}}^2)$ for all $1 \leq j \leq k$
	\end{itemize}
	
\end{lemma}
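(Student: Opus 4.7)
The plan is to exploit the fact that the stable coloring $\alpha_\infty$ is independent of the order in which refinements are performed, so it suffices to exhibit \emph{some} sequence of refinements that realizes each claimed distinction. I would organize the argument in three stages corresponding to the structural regions of $G^{\text{Q}}_k$.

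First I would handle the starting gadget and the three paths $p_1$, $p_2$, $p_{\text{Q}}$. The individualized singletons $\{s_1\},\ldots,\{s_4\}$ separate the four inputs $b_{\text{start}}^i$ of the starting $\AND_2$-subgadget, which by the defining $\AND$ functionality (Section~\ref{sec:gadgets}) separates $a_{\text{start}}^0$ from $a_{\text{start}}^1$ and, through the direct edges from the $b_{\text{start}}^i$ to the first vertices of $p_1$ and $p_2$, also separates the first pair of each of these two paths. A straightforward induction on position then shows that once the $j$-th pair of any of these pair-paths has been split, refining with respect to one of the newly created singletons splits the $(j{+}1)$-st pair. Applied to $p_1$, $p_2$, and (starting from $a_{\text{start}}^0, a_{\text{start}}^1$) to $p_{\text{Q}}$, this establishes the last three bullet points of the lemma.

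Next I would force the initial split of $X$ into $X_0^1$ and $X_1^1$. Once the end pairs $(p_i^{1,k+2}, p_i^{2,k+2})$ of $p_1$ and $p_2$ are distinguished, the attached unidirectional end-gadgets activate in the sense of their defining input-to-output property, which splits $a_{\text{end}}^{i,0}$ from $a_{\text{end}}^{i,1}$ and hence $p_{\text{end}}^1$ from $p_{\text{end}}^2$. Since $p_{\text{end}}^1$ is adjacent to $X_0^1$ and $p_{\text{end}}^2$ to $X_1^1$, one further refinement partitions $X$ into these two blocks, and the bipartite $K_{k,k}$-interconnections propagate the partition to $\mathcal{X}$, $\mathcal{Y}$, and $Y$.

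Finally I would invoke the cascade argument already used for Lemma~\ref{lemma:XgetsDiscrete_Stack>Queue} and Lemma~\ref{lemma:n_A}: given that $X$ is partitioned into level-$l$ blocks, the corresponding level-$l$ blocks of $Y$ separate the in-vertex pairs of the $\AND_l$ gadget, which by the $\AND$ property separates $a_l^0$ from $a_l^1$ and refines $X$ into level-$(l{+}1)$ blocks; iterating for $l=1,\ldots,k{-}1$ makes $X$ and $Y$ discrete, automatically yielding the classes $\mathcal{X}_i^k$ and $\mathcal{Y}_i^k$ via the middle-layer bipartite connections, and backward propagation from each now-singleton $x_i$ along the attached $X$-path individualizes every $p_{x_i}^l$. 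The main obstacle I anticipate is ruling out interference from the side structures that are absent in Section~\ref{sec:Stack>Queue}, namely the queue-path branches attached to the $X$-paths through unidirectional gadgets; however, the one-way blocking property of those gadgets guarantees that splits on the queue side can only refine further and never collapse existing distinctions, so the inductive step of Lemma~\ref{lemma:n_A} transfers to $G^{\text{Q}}_k$ essentially verbatim.
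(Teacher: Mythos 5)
Your argument is correct and follows essentially the same route as the paper, which only sketches this lemma informally: the individualized start vertices activate the starting $\AND$ gadget and thereby split $p_1$, $p_2$ and $p_{\text{Q}}$, the end gadgets then distinguish $p_{\text{end}}^1$ from $p_{\text{end}}^2$ to produce the initial halving of $X$, and the cascade of Lemma~\ref{lemma:XgetsDiscrete_Stack>Queue} (together with the fact that the stable partition refines any partition reachable by valid splits) yields the level-$k$ blocks and the discreteness of the $X$-paths. Your additional remarks (twin structure of the middle layers, harmlessness of the queue-side attachments) only make explicit what the paper leaves implicit.
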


\noindent
Now let us consider the size of $G^{\text{Q}}_k$. The old part of the graph still has $\mathcal{O}(2^k \cdot k)$ vertices and $\mathcal{O}(2^k \cdot k^2)$ edges. Each $\AND_l$ gadget has $\mathcal{O}(2^l)$ vertices and edges. There are $2^k \cdot k$ vertices $p_{x_i}^j$ and $2^k \cdot (k-1)$ edges connecting them. $p_1$, $p_2$ and $p_{\text{Q}}$ have size $\mathcal{O}(k)$ (edges and vertices). The starting gadget has constant size as well as all the gadget connecting the paths, of which we have $\mathcal{O}(k)$ many. This gives a total amount of $\mathcal{O}(2^k \cdot k)$ vertices and $\mathcal{O}(2^k \cdot k^2)$ edges.

\subsection*{General Observations}

We start with some observations about the splitting behavior of Algorithm~\ref{alg:refine} on $G^{\text{Q}}_k$, independent of the chosen worklist.

Let us consider the three paths of the starting gadget $p_1$, $p_2$ and $p_{\text{Q}}$. Some of the vertex pairs within these paths are connected to unidirectional gadgets. From Section~\ref{sec:gadgets} we know that those pairs will never be distinguished due to splits on other vertices of the unidirectional gadgets. So we know that $p_i^{1,k+2}$ and $p_i^{2,k+2}$ are only distinguished by a refinement w.r.t.\ $p_i^{1,k+1}$ or $p_i^{2,k+1}$ for $i \in \{1,2\}$. Therefore, the same of course holds for $p_i^{1,j}$ and $p_i^{2,j}$ with $j \in \{ 2,...,k+1 \}$, since these pairs are not connected to any other vertices. So we know that all of these splits are initialized by distinguishing the corresponding pair of start vertices.

With the same argument we get that the $p_{\text{Q}}^{i,j}$ can only be distinguished by the neighboring $p_{\text{Q}}^{i,j+1}$ or $p_{\text{Q}}^{i,j-1}$, as the connected unidirectional gadget will not cause a split. With this knowledge we can conclude that splits of vertex pairs on this path are all initialized by distinguishing the out-vertices in the starting $\AND$ gadget, i.e.\ $a_{\text{start}}^0$ and $a_{\text{start}}^1$.

In conclusion, this formally means that if we remove the starting $\AND$ gadget and $s_1,...,s_4$, the corresponding paths will never be split.

\begin{lemma}
	\label{lemma:Queue>Stack_OnlyStartSplitsPaths}
	
	Let $G'^{\text{Q}}_k$ be the graph $G^{\text{Q}}_k$ without the starting $\AND$ gadget and the start vertices and $\alpha$ a coloring for $G'^{\text{Q}}_k$ with:
	\begin{itemize}
		\item $\alpha(v) = \alpha(v')$ for all $v,v' \in \{p_i^{j,l} \bigm| i,j \in \{1,2\}, 0 \leq l \leq k+2\}$
		\item $\alpha(v) = \alpha(v')$ for all $v,v' \in \{p_{\text{Q}}^{j,l'} \bigm| j \in \{1,2\}, 0 \leq l' \leq k\} $
		\item $\alpha(v) = \alpha(v')$ for all $v,v' \in \{b_{\text{end}}^{j,i},c_{\text{end}}^{j,i} \bigm| j \in \{1,2\}, i \in \{0,..,3\}\}$
		\item $\alpha(v) = \alpha(v')$ for all $v,v' \in \{b_{\text{Q}}^{j,l},c_{\text{Q}}^{j,l} \bigm| j \in \{1,2\}, l \in \{1,..,k\}\}$
		\item $\alpha(v)$ arbitrary for all other vertices $v$
	\end{itemize} 
	Then for the coarsest stable coloring $\alpha_\infty$ it holds that:
	\begin{itemize}
		\item $\alpha_\infty(p_1^{1,j}) = \alpha_\infty(p_1^{2,j})$ for all $0 \leq j \leq k+2$
		\item $\alpha_\infty(p_2^{1,j}) = \alpha_\infty(p_2^{2,j})$ for all $0 \leq j \leq k+2$
		\item $\alpha_\infty(p_1^{1,j}) = \alpha_\infty(p_1^{2,j})$ for all $0 \leq j \leq k$
	\end{itemize}
	
\end{lemma}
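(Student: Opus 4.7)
The plan is to exhibit an involutive automorphism $\sigma$ of $G'^{\text{Q}}_k$ such that $\sigma$ swaps $p_1^{1,l} \leftrightarrow p_1^{2,l}$, $p_2^{1,l} \leftrightarrow p_2^{2,l}$, and $p_{\text{Q}}^{1,l} \leftrightarrow p_{\text{Q}}^{2,l}$ for every admissible $l$, and such that the initial coloring $\alpha$ from the hypothesis is $\sigma$-invariant, i.e., $\alpha(v)=\alpha(\sigma(v))$ for every vertex $v$. The three identities of the lemma (the third being, by context, about $p_{\text{Q}}$) will then follow from the standard equivariance of color refinement: whenever an automorphism of the graph preserves the current coloring, it also preserves the coloring produced by one refinement step, because $N(\sigma(v))=\sigma(N(v))$ forces the multiset of neighbor-colors of $v$ and of $\sigma(v)$ to coincide. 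A trivial induction on the refinement round then gives $\alpha_\infty(v)=\alpha_\infty(\sigma(v))$ for all $v$.

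To construct $\sigma$ I would let it act nontrivially only on the three paths and on the internal $b$- and $c$-vertices of the unidirectional gadgets attached to their endpoints; everything else is fixed pointwise, including $X,\mathcal{X},\mathcal{Y},Y$, all $\AND_l$-gadgets, the $X$-paths $p_{x_i}^{\cdot}$, the out-vertices $a_{\text{end}}^{j,0},a_{\text{end}}^{j,1},a_{\text{Q}}^{l,0},a_{\text{Q}}^{l,1}$, and $p_{\text{end}}^1,p_{\text{end}}^2$. Inside each relevant unidirectional gadget $\sigma$ acts as the canonical ``swap-in-pair, fix-out-pair'' automorphism illustrated in Figure~\ref{fig:unidirectional_gadget}: on the internal vertices this is the involution $b_0\leftrightarrow b_1$, $b_2\leftrightarrow b_3$, $c_0\leftrightarrow c_1$, $c_2\leftrightarrow c_3$. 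The verification that this is a graph automorphism is a short mechanical check: the CFI sub-pattern on the $(b_i,c_j)$ edges is preserved under the simultaneous swap of indices $\{0,1\}$ and $\{2,3\}$ on both sides; the edges from $c_0,c_1$ to $a_0$ and from $c_2,c_3$ to $a_1$ are preserved because the $a$-vertices are fixed; the incoming edges from the path endpoints respect the split $\{b_0,b_2\}\leftrightarrow\{b_1,b_3\}$; and the path edges $p_j^{1,l}p_j^{1,l+1}$ (and analogously for $p_{\text{Q}}$) map to edges $p_j^{2,l}p_j^{2,l+1}$.

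Verifying that $\alpha$ is $\sigma$-invariant is immediate from the hypothesis, which stipulates that all $p$-vertices within a single path share a color, all $b_{\text{end}},c_{\text{end}}$ vertices share a color, all $b_{\text{Q}},c_{\text{Q}}$ vertices share a color, and the remaining vertices (all of which $\sigma$ fixes) may carry any colors. The main conceptual obstacle is precisely the verification for the unidirectional gadget: one must check that an in-pair swap really can be realized by an internal automorphism that leaves the out-pair pointwise fixed. This is what prevents $\sigma$ from propagating into the $X$-paths via the $a_{\text{Q}}^{l,\cdot}$ or into the main construction via the $a_{\text{end}}^{j,\cdot}$ and $p_{\text{end}}^1,p_{\text{end}}^2$, and hence what makes the whole automorphism argument go through.
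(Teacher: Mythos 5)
Your proposal is correct, but it takes a genuinely different route from the paper's argument. The paper proves this lemma by a propagation/blocking analysis: it observes that the end pair $p_i^{1,k+2},p_i^{2,k+2}$ can only be distinguished by a refinement with respect to its predecessor pair on the path, because the attached unidirectional gadget never splits its in-pair (the one-way functionality asserted in Section~\ref{sec:gadgets}); inductively, every pair on $p_1$, $p_2$ and $p_{\text{Q}}$ can only be split from the side of the starting gadget, so once the starting $\AND$ gadget and $s_1,\dots,s_4$ are deleted no split is ever triggered. You instead exhibit a color-preserving automorphism $\sigma$ of $G'^{\text{Q}}_k$ that swaps the two strands of each path and acts inside every attached unidirectional gadget by the double in-pair flip $b_0\leftrightarrow b_1$, $b_2\leftrightarrow b_3$, $c_0\leftrightarrow c_1$, $c_2\leftrightarrow c_3$ while fixing the out-pair pointwise, and then invoke equivariance of color refinement; this is sound, since flipping both in-pairs of the CFI pattern is indeed an automorphism fixing the out-pair (one can check all sixteen edges directly), the path-to-gadget edges respect the swap, and $\sigma$ only permutes vertices inside the color classes the hypothesis forces to be monochromatic, so $\alpha$ is $\sigma$-invariant and $\alpha_\infty(v)=\alpha_\infty(\sigma(v))$ follows by induction on refinement rounds. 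What your approach buys: it is self-contained and in fact supplies a rigorous proof of the ``no backward split through a unidirectional gadget'' property that the paper only asserts informally, and it dispenses with the induction along the paths. What the paper's approach buys: its step-by-step bookkeeping of which class can trigger which split is reused immediately afterwards (to argue that in the full graph $G^{\text{Q}}_k$ all these splits are initiated by the start vertices, and in the subsequent lemma that $b_{\text{start}}^0$ and $b_{\text{start}}^1$ are never distinguished), information the orbit argument alone does not provide. Minor point: as you note, the third conclusion in the statement is a typo and should concern $p_{\text{Q}}$, which is exactly what your automorphism handles.
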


\subsection*{Stack Behavior}

We now discuss the behavior of a stack based algorithm on $G^{\text{Q}}_k$ and show that the running time of any stack based algorithm has a lower bound of $\Omega(m \log(n))$. We start by considering which behaviors are initiated by splits within the starting gadget. For a better overview, we assume that after the first refinement $s_1,...,s_4$ are all individualized and all other vertices are distinguished by their degree.

The first thing to consider is the following fact: $b_{\text{start}}^0$ and $b_{\text{start}}^1$ can only be distinguished when $\{s_1\}$ or $\{s_2\}$ has been handled.
The same holds for $b_{\text{start}}^2$ and $b_{\text{start}}^3$ with the condition that either $s_3$ or $s_4$ has been handled.
Formally this means that if we remove the individualized $s_1$ and $s_2$ from $G^{\text{Q}}_k$, then in the coarsest stable coloring $b_{\text{start}}^0$ and $b_{\text{start}}^1$ will be in the same color class.
We refer to this graphs as $G'^{\text{Q}}_k$.

\begin{lemma}
	\label{lemma:Queue>Stack_b0b1NotSplitInG'}
	
	Consider $G'^{\text{Q}}_k$ with a coloring $\alpha$ such that $\alpha(s_3)=1$, $\alpha(s_4)=2$ and $\alpha(v) = 2+d(v)$ for all $v \in V(G'^{\text{Q}}_k) \setminus \{s_3,s_4\}$. For the coarsest stable coloring $\alpha_\infty$ it holds that:
	$$ \alpha_\infty(b_{\text{start}}^0) = \alpha_\infty(b_{\text{start}}^1) $$
	
\end{lemma}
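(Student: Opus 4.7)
The plan is to exhibit an involutive automorphism $\sigma$ of $G'^{\text{Q}}_k$ that preserves the initial coloring $\alpha$ and satisfies $\sigma(b_{\text{start}}^0) = b_{\text{start}}^1$. Since color refinement commutes with color-preserving automorphisms, the orbit partition of $\sigma$ is a stable refinement of $\alpha$, and hence the coarsest stable refinement $\alpha_\infty$ is at least as coarse as this orbit partition. Because $\{b_{\text{start}}^0, b_{\text{start}}^1\}$ will be an orbit of $\sigma$, this gives $\alpha_\infty(b_{\text{start}}^0) = \alpha_\infty(b_{\text{start}}^1)$. Preservation of $\alpha$ is automatic once $\sigma$ fixes $s_3, s_4$: all other vertices have color $2+d(v)$, and automorphisms preserve degrees.

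The construction of $\sigma$ will rely on two ``dual'' gadget automorphisms. First, in an $\AND_2$ gadget, the two gate pairs of in-vertices are on equal footing with respect to the out-pair, so swapping one gate pair $(b_0,b_1)$ together with the out-pair $(a_0,a_1)$ and the middle pattern $c_0 \leftrightarrow c_3, c_1 \leftrightarrow c_2$ is an automorphism that fixes the other gate pair $(b_2,b_3)$. Second, a unidirectional gadget is designed precisely so that swapping its in-pair while fixing its out-pair is an automorphism of the gadget, realized by the dual middle pattern $c_0 \leftrightarrow c_1, c_2 \leftrightarrow c_3$ inside the underlying $\AND_2$. These two properties are what makes $b_{\text{start}}^0, b_{\text{start}}^1$ genuinely ``free'' once only $s_3, s_4$ (which fix $b_{\text{start}}^2,b_{\text{start}}^3$) are individualized.

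Starting from the first automorphism applied to the starting $\AND_2$ gadget, $\sigma$ swaps $b_{\text{start}}^0 \leftrightarrow b_{\text{start}}^1$ and $a_{\text{start}}^0 \leftrightarrow a_{\text{start}}^1$, with $b_{\text{start}}^2, b_{\text{start}}^3$ fixed. The swap at the in-vertices then propagates along $p_1$ as $p_1^{1,j} \leftrightarrow p_1^{2,j}$, and the swap at the out-vertices propagates along $p_{\text{Q}}$ as $p_{\text{Q}}^{1,j} \leftrightarrow p_{\text{Q}}^{2,j}$; the path $p_2$ stays pointwise fixed because $b_{\text{start}}^2, b_{\text{start}}^3$ do.

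The main obstacle, and the reason the construction works, is to stop the propagation of $\sigma$ before it reaches the layers $X, \mathcal{X}, \mathcal{Y}, Y$, the $X$-paths, and the $\AND_l$ gadgets, which we want to fix. Each unidirectional gadget along $p_{\text{Q}}$ inherits a swap of its in-pair from $p_{\text{Q}}$; the second gadget automorphism lets $\sigma$ extend over the gadget while fixing its out-pair $a_{\text{Q}}^{l,\cdot}$, so the attached $X$-path vertex is unmoved. Analogously, the $b_{\text{end}}^{1,\cdot}$ unidirectional gadget absorbs the swap at the end of $p_1$ and fixes $a_{\text{end}}^{1,\cdot}$, consistent with $p_{\text{end}}^1, p_{\text{end}}^2$ being fixed (as forced from the other side by the pointwise-fixed $b_{\text{end}}^{2,\cdot}$ gadget at the end of $p_2$). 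What remains is a routine edge-by-edge check that the local automorphisms glue into a global $\sigma$ and that $\sigma$ is degree-preserving, both of which follow once the two gadget automorphisms have been recorded explicitly.
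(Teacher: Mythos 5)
Your proposal is correct, but it proves the lemma by a genuinely different route than the paper. The paper argues dynamically about which refinements can possibly occur: it invokes Lemma~\ref{lemma:Queue>Stack_OnlyStartSplitsPaths} to rule out splits of $a_{\text{start}}^0,a_{\text{start}}^1$ coming from $p_{\text{Q}}$, observes that $b_{\text{start}}^0,b_{\text{start}}^1$ are only attached to $p_1$ (whose far end is blocked by a unidirectional gadget) and to the middle layer of the starting $\AND$ gadget, and then appeals to the functional property of the $\AND$ gadget to conclude no split can ever be produced. You instead exhibit a color-preserving automorphism $\sigma$ of $G'^{\text{Q}}_k$ with $\sigma(b_{\text{start}}^0)=b_{\text{start}}^1$ and use the standard fact that color refinement never separates vertices in the same orbit of an $\alpha$-preserving automorphism. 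Your two local building blocks are exactly right: in the starting $\AND_2$ gadget the map swapping one in-pair and the out-pair via $c_0\leftrightarrow c_3$, $c_1\leftrightarrow c_2$ fixes the other in-pair (so $s_3,s_4,b_{\text{start}}^2,b_{\text{start}}^3$ and all of $p_2$ stay fixed), and in each unidirectional gadget the map swapping the in-pair via $c_0\leftrightarrow c_1$, $c_2\leftrightarrow c_3$ fixes the out-pair, which is precisely what absorbs the swap at the far ends of $p_1$ and at every level of $p_{\text{Q}}$ so that $p_{\text{end}}^1,p_{\text{end}}^2$, the $X$-paths, and the layers $X,\mathcal{X},\mathcal{Y},Y$ remain pointwise fixed; the gluing check you defer is indeed routine, since these interfaces are the only ones on which swapped and fixed vertices meet. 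What your approach buys is a self-contained, fully rigorous argument that does not lean on the informally stated ``known behavior'' of the gadgets, and it gives more for free: the orbit of $\sigma$ also contains the pairs $\{a_{\text{start}}^0,a_{\text{start}}^1\}$, all pairs of $p_1$, and all pairs of $p_{\text{Q}}$, so Corollary~\ref{cor:Queue>Stack_PathsNotSplitInG'} -- stated in the paper as Corollary~\ref{cor:Queue>Stack_PathsNotSplitUntilStartFullySplit} -- follows from the same automorphism without further argument. What the paper's route buys is uniformity with the refinement-dynamics style of reasoning it uses elsewhere (and reuse of Lemma~\ref{lemma:Queue>Stack_OnlyStartSplitsPaths}), at the cost of a less explicit justification of the gadget properties that your automorphism witnesses structurally.
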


\begin{proof}
	
	From Lemma~\ref{lemma:Queue>Stack_OnlyStartSplitsPaths} we know that $a_{\text{start}}^0$ and $a_{\text{start}}^1$ are never distinguished by a refinement w.r.t.\ a subclass of $p_{\text{Q}}$. So they can only be split by a refinement w.r.t\ a class within the $\AND$ gadget.
	
	With the changes made, $b_{\text{start}}^0$ and $b_{\text{start}}^1$ are only connected to the path $p_1$ (which can only be split by $b_{\text{start}}^0$ or $b_{\text{start}}^1$, due to the unidirectional gadget at the other end) and the middle layer of the $\AND$ gadget. So the only possible splits on $b_{\text{start}}^0$ and $b_{\text{start}}^1$ would have to be done by refinements w.r.t\ a class within the $\AND$ gadget. Because we know that the out-vertices are not distinguished by any other refinement, such a split would contradict the known behavior of the $\AND$ gadget, as it would mean that distinguishing only one of the pairs of in-vertices could activate the gadget.
	
	Therefore, we can conclude that $b_{\text{start}}^0$ and $b_{\text{start}}^1$ will never be distinguished.
	
\end{proof}

\noindent
With this lemma and the knowledge about the behavior of the $\AND$ gadget we can conclude that $a_{\text{start}}^0$ and $a_{\text{start}}^1$ will never be distinguished, as well as the vertex pairs on $p_1$ and $p_{\text{Q}}$.

\begin{corollary}
\label{cor:Queue>Stack_PathsNotSplitUntilStartFullySplit}

Let $G'^{\text{Q}}_k$, $\alpha$ and $\alpha_\infty$ be defined like in the previous lemma. We also know about $\alpha_\infty$:
\begin{itemize}
	\item $ \alpha_\infty(a_{\text{start}}^0) = \alpha_\infty(a_{\text{start}}^2) $
	\item $ \alpha_\infty(p_{\text{Q}}^{1,i}) = \alpha_\infty(p_{\text{Q}}^{2,i}) $ for all $ 1 \leq i \leq k$
	\item $ \alpha_\infty(p_{1}^{1,i}) = \alpha_\infty(p_{1}^{2,i}) $ for all $ 1 \leq i \leq k+2$
\end{itemize}

\end{corollary}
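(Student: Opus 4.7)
The plan is to proceed in three steps that follow the causal chain through the starting gadget. The first step is purely a restatement, the second leans on the defining property of the $\AND_2$ gadget, and the third reduces to Lemma~\ref{lemma:Queue>Stack_OnlyStartSplitsPaths}.

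First, I would invoke Lemma~\ref{lemma:Queue>Stack_b0b1NotSplitInG'} to record that $\alpha_\infty(b_{\text{start}}^0)=\alpha_\infty(b_{\text{start}}^1)$ throughout the refinement. Next, I would derive $\alpha_\infty(a_{\text{start}}^0)=\alpha_\infty(a_{\text{start}}^1)$ (the intended content of the first bullet, modulo an apparent typo $a_{\text{start}}^2$) from the defining property of the $\AND_2$ gadget recalled in Section~\ref{sec:gadgets}: the output pair is split only once every input pair has been distinguished, and since $\{b_{\text{start}}^0,b_{\text{start}}^1\}$ is never split, neither is $\{a_{\text{start}}^0,a_{\text{start}}^1\}$. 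This step is essentially the same $\AND$-gadget argument used elsewhere in the paper.

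The third step establishes monochromaticity of the path pairs on $p_{\text{Q}}$ and $p_1$ by reduction to Lemma~\ref{lemma:Queue>Stack_OnlyStartSplitsPaths}. The key structural observation is that in $G'^{\text{Q}}_k$ the only channels by which the starting component can transmit distinguishing information into $p_1$ are edges incident with $b_{\text{start}}^0$ and $b_{\text{start}}^1$, and into $p_{\text{Q}}$ are edges incident with $a_{\text{start}}^0$ and $a_{\text{start}}^1$. The far ends of $p_1$, $p_2$ and $p_{\text{Q}}$ are shielded by unidirectional gadgets which, by the property proved in Section~\ref{sec:gadgets}, do not carry splits back onto the paths. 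Since steps one and two show that these two endpoint pairs remain monochromatic forever, no refinement step performed with respect to a class meeting the starting gadget can distinguish the two vertices of any pair on $p_1$ or $p_{\text{Q}}$; hence the refinement behavior on these paths matches the behavior in the graph obtained by stripping the starting mechanism, and Lemma~\ref{lemma:Queue>Stack_OnlyStartSplitsPaths} yields the two remaining equalities.

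The main obstacle is formalizing this reduction in step three: one must verify that, step by step, classes touching the starting $\AND$ gadget induce equal neighbor counts on the two vertices of each pair of $p_1$ and $p_{\text{Q}}$. I would carry this out by induction on the refinement step, with the invariant that the coloring restricted to the paths and their unidirectional shields is coarser than, or equal to, the refinement produced in the stripped-down graph of Lemma~\ref{lemma:Queue>Stack_OnlyStartSplitsPaths}; the inductive step uses that $\{b_{\text{start}}^0,b_{\text{start}}^1\}$ and $\{a_{\text{start}}^0,a_{\text{start}}^1\}$ each lie in a single color class, so any refinement cell intersecting them contributes the same count to both vertices of any attached path pair.
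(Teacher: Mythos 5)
Your proposal is correct and follows essentially the same route as the paper, which treats the corollary as immediate from Lemma~\ref{lemma:Queue>Stack_b0b1NotSplitInG'} together with the $\AND$-gadget property (for the out-pair) and the shielding/propagation facts about the paths recorded in Lemma~\ref{lemma:Queue>Stack_OnlyStartSplitsPaths}. Your step-three induction with the coarseness invariant merely spells out in more detail what the paper leaves implicit, and you correctly identify $a_{\text{start}}^2$ as a typo for $a_{\text{start}}^1$.
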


\noindent
Now consider the execution of a stack based algorithm on the original $G^{\text{Q}}_k$.
After the initial refinement, we have the singleton sets $\{s_1\}$, $\{s_2\}$, $\{s_3\}$ and $\{s_4\}$ in the worklist as well as the classes of vertices with equal degree.

W.l.o.g.\ $\{s_4\}$ is handled first out of these four classes.
Since the other singleton sets will never be split again, they will stay at the bottom of the stack until all refinements that are induced by $\{s_4\}$ are done.
Since $s_4$ is distinguished from $s_3$, and $s_1$ and $s_2$ can be ignored, all splits that are done in the color refinement on $G'^{\text{Q}}_k$ can also be done in here.
In fact, only distinguishing $s_3$ from $s_4$ initiates the cycling behavior that we already know, which means that the refinement w.r.t.\ the class $\{s_4\}$ inevitably leads to all the splits that are done when computing the coarsest stable coloring of $G'^{\text{Q}}_k$.
Thus, a stack based color refinement on $G^{\text{Q}}_k$ first computes the coarsest stable coloring of its subgraph $G'^{\text{Q}}_k$, before handling $\{s_1\}$, $\{s_2\}$ or $\{s_3\}$, which means that its running time on $G'^{\text{Q}}_k$ is a lower bound for the running time on $G^{\text{Q}}_k$.

Now consider the costs of an arbitrary color refinement on $G'^{\text{Q}}_k$.
Corollary~\ref{cor:Queue>Stack_PathsNotSplitUntilStartFullySplit} shows us that the path $p_{\text{Q}}$ will never be spit in the computation of $\alpha_\infty^{G'^{\text{Q}}_k}$.
This implies that the levels $\{p_{x_i}^l\}$ for $1 \leq l \leq k$ are never split by refinements w.r.t.\ $\{a_{\text{Q}}^{0,l}\}$ or $\{a_{\text{Q}}^{1,l}\}$.
Thus, the splits on them are only induced by refinements w.r.t.\ subclasses of $X$ and they can therefore never become finer than $X$.

\begin{lemma}
	
	When computing the coarsest stable coloring $\alpha_\infty$ refining a coloring $\alpha$ of $G'^{\text{Q}}_k$, for each coloring $\alpha_i$ that appears during the computation we know that $\pi_{\alpha_i}[\{p_{x_i}^l\}]$ is coarser than $\pi_{\alpha_i}[X]$ for each $1 \leq l \leq k$.
	
\end{lemma}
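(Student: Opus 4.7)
The plan is to prove the claim by induction on the sequence of colorings $(\alpha_i)_{i \geq 0}$ produced while refining $(G'^{\text{Q}}_k,\alpha)$, using the correspondence $p_{x_i}^l \leftrightarrow x_i$ induced by the $X$-paths. Writing $P_l := \{p_{x_i}^l : 0 \leq i < 2^k\}$, the neighbourhood of any $p_{x_i}^l$ lies entirely in $P_{l-1}$, $P_{l+1}$ (or $X$, when $l=k$), and exactly one of the out-vertices $a_{\text{Q}}^{l,0}, a_{\text{Q}}^{l,1}$ of the attached unidirectional gadget, so refinements of $P_l$ can only come from these sources. The plain statement is not self-strengthening because a split of the $a_{\text{Q}}$-pair could in principle trigger a split of $P_l$ inconsistent with $\pi[X]$; I therefore strengthen the invariant to track this pair alongside the main claim.

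\textbf{Strengthened invariant.} For every step $i$ and every $l \in \{1,\ldots,k\}$:
\begin{enumerate}[label=(\roman*)]
\item $\pi_{\alpha_i}[P_l]$ is coarser than $\pi_{\alpha_i}[X]$ under the correspondence $p_{x_i}^l \leftrightarrow x_i$;
\item if $a_{\text{Q}}^{l,0}$ and $a_{\text{Q}}^{l,1}$ lie in distinct $\alpha_i$-classes, then $X_0^1$ and $X_1^1$ lie in distinct classes of $\pi_{\alpha_i}[X]$.
\end{enumerate}
Both parts hold in the base case, since all vertices of $\bigcup_l P_l$ share the initial color (by degree), as do $a_{\text{Q}}^{l,0}$ and $a_{\text{Q}}^{l,1}$.

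For the inductive step I would case-analyse the refinement producing $\alpha_{i+1}$. The critical case for (i) is a refinement of a class $Y\subseteq P_l$ by some class $C$: whenever $x_i$ and $x_j$ lie in the same $\pi_{\alpha_i}[X]$-class, the contributions of their $P_{l\pm 1}$- and $X$-neighbours to $|N(\cdot)\cap C|$ agree by (i) at the adjacent levels, while the $a_{\text{Q}}$-contribution agrees either because $i$ and $j$ lie in the same half, or, if they do not, because (ii) would then force $X_0^1$ and $X_1^1$ to already be separated, contradicting $x_i$ and $x_j$ being in the same $X$-class. Hence $p_{x_i}^l$ and $p_{x_j}^l$ receive the same new color and (i) is preserved. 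A refinement of a class in $X$ makes $\pi[X]$ strictly finer, preserving (i) and (ii) automatically. For (ii) the only dangerous refinement is one that splits the class of $\{a_{\text{Q}}^{l,0},a_{\text{Q}}^{l,1}\}$: if the splitter lies inside the unidirectional gadget, Corollary~\ref{cor:Queue>Stack_PathsNotSplitUntilStartFullySplit} keeps $p_{\text{Q}}^{1,l}, p_{\text{Q}}^{2,l}$ undistinguished, so by the gadget's one-way property the out-pair cannot split from inside; if instead the splitter is a class $C \subseteq P_l$, a split requires $C$ to meet the two halves of $P_l$ with different sizes, and pulling this back via (i) forces some $\pi_{\alpha_i}[X]$-class to meet the halves of $X$ asymmetrically, which together with the auxiliary fact that $\pi[X]$ in $G'^{\text{Q}}_k$ evolves only along the binary-block structure (driven by $p_{\text{end}}^{1/2}$ and the $\AND_l$ gadgets) compels $X_0^1$ and $X_1^1$ to already be separated.

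The main obstacle is exactly this last step: establishing the auxiliary structural fact that every class of $\pi_{\alpha_i}[X]$ encountered during the run on $G'^{\text{Q}}_k$ is either entirely contained in one half or is a union of binary blocks covering both halves symmetrically. This fact is what breaks the circularity between (i) and (ii). Once it is added to the induction and verified by inspecting the short list of classes that can ever refine $X$, the joint induction goes through cleanly and the lemma follows.
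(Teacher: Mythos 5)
Your high-level plan (induction over the refinement steps, with an extra invariant controlling when the out-pair $a_{\text{Q}}^{l,0},a_{\text{Q}}^{l,1}$ can become distinguished) is in the right spirit, and it is even more cautious than the paper's own argument, which is a short direct one: by Corollary~\ref{cor:Queue>Stack_PathsNotSplitUntilStartFullySplit} the pairs of $p_{\text{Q}}$ are never split in $G'^{\text{Q}}_k$, so the unidirectional gadgets are never activated and the classes $\{a_{\text{Q}}^{l,0}\},\{a_{\text{Q}}^{l,1}\}$ never split the levels $\{p_{x_i}^l\}$; hence the only splitters of a level are its path neighbours, ultimately $X$ itself, and splits propagated along the one-to-one $X$-paths can never make a level finer than $\pi[X]$.

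There is, however, a genuine gap in your write-up, in two related places. First, your invariant (ii) is calibrated at the wrong granularity: the out-vertices of the $l$-th gadget are attached to $P_l$ along the \emph{even/odd level-$l$ block} dichotomy, which for $l\ge 2$ straddles both halves $X_0^1,X_1^1$; only for $l=1$ does it coincide with the halves. Since in $G'^{\text{Q}}_k$ the halves of $X$ do get distinguished early (via $s_3,s_4$, the path $p_2$, the end gadget and $p_{\text{end}}^1,p_{\text{end}}^2$), your implication (ii) becomes vacuous from that point on and gives no protection: it would allow $a_{\text{Q}}^{l,0},a_{\text{Q}}^{l,1}$ with $l\ge 2$ to be distinguished while $\pi[X]$ still consists of level-$j$ blocks with $j<l$, and refining $P_l$ with respect to such a singleton would then cut every class along the even/odd level-$l$ pattern, i.e.\ strictly below $\pi[X]$, destroying (i). The statement you actually need is the level-$l$ one: the pair $a_{\text{Q}}^{l,0},a_{\text{Q}}^{l,1}$ can only be distinguished once every class of $\pi[X]$ lies inside a single even or odd level-$l$ side (equivalently, once $X$ is refined to level-$l$ blocks), after which the out-singletons cannot cut any pulled-back $X$-class. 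Your case analysis ``same half, or (ii) forces the halves apart'' therefore does not cover the dangerous case. Second, you explicitly leave the ``auxiliary structural fact'' about $\pi[X]$ evolving along binary blocks unproven and call it the main obstacle; but that fact is precisely the load-bearing part of the argument (it is the analogue of Lemma~\ref{lemma:n_A}: in $G'^{\text{Q}}_k$ the only classes that ever refine $X$ are $\{p_{\text{end}}^1\},\{p_{\text{end}}^2\}$ and the out-vertices of the $\AND_l$ gadgets, so the classes of $\pi[X]$ are always binary blocks), and even your formulation of it is again phrased in terms of halves rather than level-$l$ blocks. As it stands the joint induction does not close for $l\ge 2$, so the proof is incomplete.
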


\noindent
From here we can conclude that $X$ is split only by two other classes of vertices, like in Section~\ref{sec:competitiveRatio}: the singleton classes of the out-layer in the $\AND_l$ gadgets ($\{a_l^0\}$ or $\{a_l^1\}$ for $1 \leq l \leq k-1$), and the end vertices of the starting gadget ($\{p_{\text{end}}^0\}$ or $\{p_{\text{end}}^1\}$).
This allows us to use the same arguments as in the lower bounds paper by Berkholz et. al~\cite{tightLowerBound}, so any color refinement on $G'^{\text{Q}}_k$ has costs of at least $\Omega(m \log(n))$.
Therefore, we have a lower bound on the costs of any stack based color refinement on $G^{\text{Q}}_k$.

\subsection*{Queue Behavior}

In this section, we show the linear running time for a queue based algorithm. We start with formulating the concept of ``rounds'' mentioned before and then use it to show that the algorithm behaves as desired.

Let the depth of a class of vertices in the worklist be defined as follows: the first element in the queue ($V(G)$) has depth $0$. If a new class $C_{\text{new}}$ is pushed to the worklist while handling a class of depth $i$, we define the depth of $C_{\text{new}}$ to be $i+1$. All classes of depth $i$ are handled consecutively and before classes of depth $i+1$. We call the iterations in which all classes of depth $i$ are handled the $i$-th round of the color refinement.

For $G^{\text{Q}}_k$, the sets of depth $1$ are $\{s_1\}$, $\{s_2\}$, $\{s_3\}$, $\{s_4\}$ and the sets $\{ v \in V(G^{\text{Q}}_k) \bigm| \deg(v) = d \}$ for all appearing degrees $d$. $\mathcal{X} \cup \mathcal{Y}$ is one of those classes and is split into $\mathcal{X}$ and $\mathcal{Y}$ in round~$1$ by a refinement w.r.t.\ $X$ or $Y$. One of these sets is handled in round~$2$, the other one is dropped as the class $\mathcal{X} \cup \mathcal{Y}$ was split in halves. Afterwards they are never split and therefore never handled until $X$ is split. Let the first round in which $X$ is split be $i_X$.

Without the edges of $\mathcal{X}$ and $\mathcal{Y}$ $G^{\text{Q}}_k$ has only $\mathcal{O}(2^k \cdot k)$ other edges and vertices. So the total costs of all rounds~$i$ with $2 < i < i_X$ together are at most $\mathcal{O}(2^k \cdot k^2) = \mathcal{O}(2^k \cdot k \cdot \log(2^k \cdot k))$. Note that we just apply the known upper bound of $\mathcal{O}(m \cdot \log(n))$ here. Since we know that the first three rounds also have costs of $\mathcal{O}(2^k \cdot k^2)$, the total running time until round~$i_X$ is in $\mathcal{O}(2^k \cdot k^2)$. Next we show that after round~$i_X$ is completed, $X$ will be discrete. From there we can easily conclude that the total costs are $\mathcal{O}(2^k \cdot k^2) = \mathcal{O}(m)$.

Now consider the first rounds on $G^{\text{Q}}_k$. Round~$0$ refines w.r.t.\ $V(G^{\text{Q}}_k)$ and splits the vertices into the classes $\{s_1\}$, $\{s_2\}$, $\{s_3\}$, $\{s_4\}$ and the sets $\{ v \in V(G^{\text{Q}}_k) \bigm| \deg(v) = d \}$ for all appearing degrees $d$, as already mentioned. In round~$1$, the refinements w.r.t.\ $\{s_j\}$, $1 \leq j \leq 4$, distinguish the vertices $b_{\text{start}}^0$,...,$b_{\text{start}}^3$. These new singleton sets then distinguish $p_1^{1,1}$ from $p_1^{2,1}$ and $p_2^{1,1}$ from $p_2^{2,1}$ as well as the four vertices $c_{\text{start}}^0$,...,$c_{\text{start}}^3$ in round~$2$.

In round~$3$ $a_{\text{start}}^0$,$a_{\text{start}}^1$, $p_1^{1,2}$, $p_1^{2,2}$, $p_2^{1,2}$ and $p_2^{2,2}$ are distinguished. This leads to a sequential split of the three paths, so for $i \in \{1,...,k\}$ round~$i+3$ distinguishes $p_{\text{Q}}^{1,i}$ from $p_{\text{Q}}^{2,i}$, $p_1^{1,i+2}$ from $p_1^{2,i+2}$ and $p_2^{1,i+2}$ from $p_2^{2,i+2}$.

Also round~$i+4$ splits $b_{\text{Q}}^{i,0}$,...,$b_{\text{Q}}^{i,3}$, therefore in round~$i+5$ $c_{\text{Q}}^{i,0}$,...,$c_{\text{Q}}^{i,3}$ are split and round~$i+6$ distinguishes $a_{\text{Q}}^{i,0}$ from $a_{\text{Q}}^{i,1}$. This gives us that in round~$8$ the set $\{p_{x_j}^1 \bigm| j \in \mathcal{B}_k\}$ is split into the blocks of level~$1$ and for $i \in \{2,...,k\}$ in round~$i+7$ the level~$i-1$ blocks of $\{p_{x_j}^{i-1} \bigm| j \in \mathcal{B}_k\}$ and $a_{\text{Q}}^{i,0}$ or $a_{\text{Q}}^{i,1}$ will split $\{p_{x_j}^i \bigm| j \in \mathcal{B}_k\}$ into the blocks of level~$i$. This means that after round~$k+7$ the set $\{p_{x_j}^k \bigm| j \in \mathcal{B}_k\}$ is split into the blocks of level $k$ and therefore discrete.

Also in round~$k+3$ the vertices $p_1^{1,k+2}$ and $p_1^{2,k+2}$ are distinguished as well as $p_2^{1,k+2}$ and $p_2^{2,k+2}$. Thus round~$k+4$ splits $b_{\text{end}}^{1,0}$,...,$b_{\text{end}}^{1,3}$ and $b_{\text{end}}^{2,0}$,...,$b_{\text{end}}^{2,3}$, which means that round~$k+5$ splits $c_{\text{end}}^{1,0}$,...,$c_{\text{end}}^{1,3}$ and $c_{\text{end}}^{2,0}$,...,$c_{\text{end}}^{2,3}$. In round~$k+6$ $a_{\text{end}}^{1,0}$ is distinguished from $a_{\text{end}}^{1,1}$ and $a_{\text{end}}^{2,0}$ from $a_{\text{end}}^{2,1}$. Therefore, $p_{\text{end}}^1$ and $p_{\text{end}}^2$ are distinguished in round~$k+7$, at the same time that $\{p_{x_j}^k \bigm| j \in \mathcal{B}_k\}$ is split.

So in the next round the singleton blocks of $\{p_{x_j}^k \bigm| j \in \mathcal{B}_k\}$ as well as the sets $\{p_{\text{end}}^1\}$ and $\{p_{\text{end}}^2\}$ split $X$. This means that after round~$i_X=k+8$ is done, $X$ is discrete. As each of those $\mathcal{O}(k)$ round can have costs at most $\mathcal{O}(2^k \cdot k)$ the total costs until $X$ is discrete are in $\mathcal{O}(2^k \cdot k^2)$.

\begin{lemma}
	\label{lemma:Queue>Stack_runtime_part1}
	
	When applying a queue based version of Algorithm~\ref{alg:refine} with preservation or with removal on $G^{\text{Q}}_k$, the following holds:
	\begin{itemize}
		\item In the first $k+8$ rounds, $X$ and $Y$ will never be split.
		\item After the $k+8$-th round, $X$ is discrete.
		\item Executing the first $k+9$ rounds takes time $\mathcal{O}(2^k \cdot k^2)$
	\end{itemize}
	
\end{lemma}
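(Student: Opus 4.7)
The plan is to formalize the informal round-by-round analysis that precedes the lemma. I would introduce the depth function on worklist entries exactly as in the discussion (the initial entry has depth $0$, and any class produced while processing depth~$i$ has depth~$i+1$) and then track, by induction on the round number, which color classes exist and which have been processed. The three claims will follow from a careful bookkeeping argument: claim~(1) is purely a reachability/non-splitability statement about $X$ and $Y$, claim~(2) follows from the concrete identification of the classes that appear in round $k+8$, and claim~(3) is a straightforward per-round cost count.

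My induction schedule would match the rounds described. In round~$0$ the initial monochromatic class refines everything to degree classes plus the four individualized~$s_i$. In round~$1$ the singletons~$\{s_i\}$ split off~$b_{\text{start}}^0,\dots,b_{\text{start}}^3$. From round~$2$ onward I would prove by induction that in round~$i+2$ (for $1\le i\le k$) the $i$-th pair of~$p_1$ and~$p_2$ is distinguished, in round~$i+3$ the $i$-th pair of~$p_{\text{Q}}$ is distinguished, and in round~$i+6$ (for $1\le i\le k$) the $i$-th pair of the~$X$-paths is distinguished through the unidirectional gadget connecting~$p_{\text{Q}}$ to the~$X$-paths. A parallel induction handles the unidirectional gadgets at the ends of~$p_1$ and~$p_2$, producing singletons~$\{p_{\text{end}}^1\},\{p_{\text{end}}^2\}$ in round~$k+7$. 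Round~$k+8$ then refines~$X$ with respect both to the discrete endpoints of the~$X$-paths and to~$\{p_{\text{end}}^1\},\{p_{\text{end}}^2\}$, discretizing~$X$. The key technical point in each induction step is simply that when a pair of classes becomes distinguished, both singletons are enqueued with the proper depth, and the only edges of the newly refined class lead to the next class asserted by the induction hypothesis.

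The main obstacle is claim~(1): showing that $X$ and $Y$ truly remain untouched for all of the first $k+8$ rounds. The subtlety is that $X$ is connected not only to the long side-paths but also indirectly to the~$\AND_l$ gadgets via~$Y$, and $Y$ is connected to~$\mathcal{Y}$ and to the input vertices of all the~$\AND_l$ gadgets. To handle this I would argue that, up to round~$k+8$, no class in the worklist that has an edge into~$X$ or~$Y$ is ever a proper subclass of its degree-class counterpart: the class~$\mathcal{X}\cup\mathcal{Y}$ is split in round~$1$ and its two parts are placed on the queue with depth~$2$, but because $X$ and $Y$ are each handled as a single degree class until they split, neither~$\mathcal{X}$ nor~$\mathcal{Y}$ can induce a split until~$X$ or~$Y$ is itself split. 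Similarly, every $\AND_l$ gadget requires distinguishing \emph{all} its input pairs before its output pair can be split, and since~$Y$ is monochromatic, no input pair is distinguished. Thus the only classes that can ever split~$X$ in the time window considered are~$\{p_{\text{end}}^1\},\{p_{\text{end}}^2\}$ and the singletons of the terminal level of the~$X$-paths, neither of which appear before round~$k+7$.

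For claim~(3), I would observe that each of the $k+9$ rounds processes a collection of classes whose incident edges together sum to at most $\mathcal{O}(2^k\cdot k)$: the~$X$-paths and the~$\mathcal{X},\mathcal{Y}$ layer each contribute $\mathcal{O}(2^k\cdot k)$ edges, and all gadgets, paths, and starting vertices together contribute only $\mathcal{O}(2^k\cdot k)$ further. Multiplying by $k+9=\Theta(k)$ gives the desired $\mathcal{O}(2^k\cdot k^2)$ bound. Combining all three claims proves the lemma.
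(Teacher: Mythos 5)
Your overall strategy---defining rounds via the depth of classes in the queue and tracking, by induction on the round number, exactly which classes are created and processed in each round---is the same as the paper's, and your treatment of claims (1) and (2) matches the paper's reasoning: the only classes that can split $X$ before round $k+8$ are the level-$k$ singletons of the $X$-paths and $\{p_{\text{end}}^1\},\{p_{\text{end}}^2\}$, all of which appear only in round $k+7$, while the $\AND_l$ out-vertices stay undistinguished because $Y$ stays monochromatic. Your schedule has harmless off-by-one slips (in the paper the $i$-th layer of the $X$-paths is split in round $i+7$, not $i+6$, and the $i$-th pair of $p_1,p_2$ in round $i+1$), but the decisive counts---$p_{\text{end}}^1,p_{\text{end}}^2$ distinguished in round $k+7$ and $X$ discrete after round $k+8$---agree with the paper.

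The genuine gap is in your justification of claim (3). You assert that each of the $k+9$ rounds costs $\mathcal{O}(2^k\cdot k)$ because ``the $\mathcal{X},\mathcal{Y}$ layer contributes $\mathcal{O}(2^k\cdot k)$ edges.'' That edge count is wrong: the complete bipartite connections between the $k$ vertices $x_i^j$ and the $k$ vertices $y_i^{j'}$ contribute $k^2$ edges for each index $i$, so this layer carries $\Theta(2^k\cdot k^2)$ edges; indeed $m=\Theta(2^k\cdot k^2)$ is exactly what makes the construction meaningful. Consequently the uniform per-round bound fails: round $0$ (processing $V(G^{\text{Q}}_k)$), round $1$ (processing the degree class $\mathcal{X}\cup\mathcal{Y}$) and round $2$ (processing $\mathcal{X}$ or $\mathcal{Y}$) each cost $\Theta(2^k\cdot k^2)$, and a per-round bound of that magnitude multiplied by $\Theta(k)$ rounds yields only $\mathcal{O}(2^k\cdot k^3)=\mathcal{O}(m\log n)$, which is useless here. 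The missing observation, made explicitly in the paper, is that after round $2$ neither $\mathcal{X}$ nor $\mathcal{Y}$ is split again---they can only be refined by proper subclasses of $X$ resp.\ $Y$, which by your own claim (1) do not yet exist---hence neither is ever handled again before $X$ splits in round $k+8$. Thus only a constant number of rounds touch the expensive middle layer, each costing $\mathcal{O}(2^k\cdot k^2)$, while all remaining rounds process classes whose incident edges lie in the rest of the graph, of total size $\mathcal{O}(2^k\cdot k)$; summing gives the claimed $\mathcal{O}(2^k\cdot k^2)$. With this repair your argument goes through and coincides with the paper's proof.
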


\noindent
The next round splits $\mathcal{X}$ into the blocks of level~$k$, the one thereafter does the same with $\mathcal{Y}$ and after round~$k+11$ all four layers $X$, $\mathcal{X}$, $\mathcal{Y}$ and $Y$ are split into the blocks level~$k$ equaling their partition of $\alpha_\infty$. Each of these three rounds looks at $\mathcal{O}(2^k \cdot k^2)$ edges, so the total running time from the previous lemma is not changed asymptotically.

\begin{lemma}
	\label{lemma:Queue>Stack_runtime_part1&2}
	
	When applying a queue based version of Algorithm~\ref{alg:refine} on $G^{\text{Q}}_k$, the following holds:
	\begin{itemize}
		\item For coloring $\alpha'$ after the $k+11$-th round it holds that $X_q^k,Y_q^k,\mathcal{X}_q^k,\mathcal{Y}_q^k \in \pi_{\alpha'}$
		\item Executing the first $k+12$ rounds takes time $\mathcal{O}(2^k \cdot k^2)$
	\end{itemize}
	
\end{lemma}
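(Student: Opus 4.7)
The plan is to extend the previous lemma by three additional rounds, tracing how the discrete partition of $X$ propagates through $\mathcal{X}$, $\mathcal{Y}$, and then $Y$. From the previous lemma I may assume that after round $k+8$ the set $X$ is fully discrete and the total cost so far lies in $\mathcal{O}(2^k \cdot k^2)$. Since $X$ is not split before round $k+8$, the singletons $\{x_i\}$ all enter the worklist at depth $k+9$.

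In round $k+9$ each $\{x_i\}$ is processed. Because $x_i$'s only neighbors in the middle layers are $x_i^1, \ldots, x_i^k$, and because $\mathcal{X}$ has remained a single class since round~$2$ (the only classes that could split it, namely subsets of $X$ and of $\mathcal{Y}$, were still too coarse), these refinements progressively partition $\mathcal{X}$ into the level-$k$ blocks $\mathcal{X}_i^k = \{x_i^1, \ldots, x_i^k\}$. In round $k+10$ the blocks $\mathcal{X}_i^k$ are processed; because the middle-layer edges $x_i^j \sim y_{i'}^{j'}$ exist precisely for $i = i'$ (and then for all $j, j'$), refining $\mathcal{Y}$ with respect to $\mathcal{X}_i^k$ isolates exactly $\mathcal{Y}_i^k = \{y_i^1, \ldots, y_i^k\}$, so $\mathcal{Y}$ splits into its level-$k$ blocks. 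In round $k+11$ each $\mathcal{Y}_i^k$ is processed; its only edges into $Y$ go to $y_i$, so refining $Y$ isolates $\{y_i\}$ and $Y$ becomes discrete. Combined with $X$ being discrete after round $k+8$, this establishes the first bullet.

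For the cost bound, each of rounds $k+9, k+10, k+11, k+12$ processes $\mathcal{O}(2^k)$ classes drawn from the middle layers. In round $k+9$ each singleton $\{x_i\}$ has degree $\mathcal{O}(k)$, contributing $\mathcal{O}(2^k \cdot k)$ in total; in rounds $k+10$ and $k+11$ each processed block contains $k$ vertices of degree $\mathcal{O}(k)$, contributing $\mathcal{O}(2^k \cdot k^2)$ per round. Singletons of $X$ additionally refine the out-vertex classes $\{a_l^0, a_l^1\}$ of the $\AND_l$ gadgets, potentially triggering cascades inside these gadgets; but the union of all $\AND_l$ gadgets contains only $\mathcal{O}(2^k)$ vertices and edges, so such cascades contribute at most $\mathcal{O}(2^k \cdot k)$ additional work. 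Adding everything to the $\mathcal{O}(2^k \cdot k^2)$ bound inherited from the previous lemma yields the claimed running time. The main subtlety I anticipate is confirming that the $\AND$-gadget cascades triggered by the singletons of $X$ are benign; this is handled cleanly by the global size bound on the $\AND$ gadgets, which ensures any such work is subsumed in the $\mathcal{O}(2^k \cdot k^2)$ estimate.
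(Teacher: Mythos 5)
Your argument is correct and follows essentially the same route as the paper: after $X$ becomes discrete in round $k+8$, the discreteness propagates to $\mathcal{X}$ in round $k+9$, to $\mathcal{Y}$ in round $k+10$, and to $Y$ in round $k+11$, and each of these rounds touches only $\mathcal{O}(2^k\cdot k^2)$ edges, so the bound from the previous lemma is unchanged. Your extra remarks on the $\AND$-gadget cascades and worklist bookkeeping are finer-grained than the paper's treatment but do not change the argument.
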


\noindent
Finishing the color refinement does also take at most $\mathcal{O}(2^k \cdot k^2)$ time because the subclasses of $\mathcal{X}$ and $\mathcal{Y}$ are no longer handled and therefore the subgraph considered by the algorithm has a size of $\mathcal{O}(2^k \cdot k)$, meaning that $\mathcal{O}(2^k \cdot k^2)$ is an upper bound for the computation of the coarsest stable coloring.

\begin{lemma}
	\label{lemma:Queue>Stack_runtime_part3}
	
	Let $\alpha'$ be a coloring for $G^{\text{Q}}_k$ with $X_q^k,Y_q^k,\mathcal{X}_q^k,\mathcal{Y}_q^k \in \pi_{\alpha'}$. Computing the coarsest stable coloring refining $\alpha'$ with a queue based color refinement takes time $\mathcal{O}(2^k \cdot k^2)$.
	
\end{lemma}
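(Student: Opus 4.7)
The plan is to charge the remaining cost separately to (i) the final blocks $\mathcal{X}_q^k,\mathcal{Y}_q^k$ themselves and (ii) the rest of the graph, bounding each by $\mathcal{O}(2^k\cdot k^2)$. First I would observe that under $\alpha'$ the $k$ vertices of every block $\mathcal{X}_q^k$ are mutual twins: each $x_q^j$ has in $G^{\text{Q}}_k$ exactly the neighborhood $\{x_q\}\cup\mathcal{Y}_q^k$, independent of $j$, and analogously for $\mathcal{Y}_q^k$. Since identically colored twins are never distinguished by color refinement, the partition of $\mathcal{X}\cup\mathcal{Y}$ induced by $\alpha'$ is already stable, and no class contained in $\mathcal{X}\cup\mathcal{Y}$ ever splits during the remainder of the execution.

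For contribution (i), the ``drop largest'' rule of Algorithm~\ref{alg:refine} guarantees that a non-splitting class enters the worklist at most once, so each of the $2\cdot 2^k$ blocks $\mathcal{X}_q^k,\mathcal{Y}_q^k$ is extracted at most once. Every vertex in such a block has degree $k+1$, so each extraction costs $k(k+1)=\mathcal{O}(k^2)$; summing over all $2\cdot 2^k$ blocks gives $\mathcal{O}(2^k\cdot k^2)$. For contribution (ii), every other extracted class is a subset of $V(G^{\text{Q}}_k)\setminus(\mathcal{X}\cup\mathcal{Y})$. A direct count of the remaining parts of $G^{\text{Q}}_k$ (the sets $X$, $Y$, the $X$-paths, the $\AND$-gadgets at each level together with their $X$- and $Y$-connections, the three pre-gadget paths $p_1,p_2,p_{\text{Q}}$, the starting gadget, and the unidirectional connectors) shows that this residual carries only $\mathcal{O}(2^k\cdot k)$ vertices and $\mathcal{O}(2^k\cdot k)$ edges, including the boundary edges to $\mathcal{X}\cup\mathcal{Y}$. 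Applying the standard $\mathcal{O}(m'\log n')$ upper bound on the residual then yields $\mathcal{O}(2^k\cdot k\cdot\log(2^k\cdot k))=\mathcal{O}(2^k\cdot k^2)$, and adding (i) and (ii) gives the claim.

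The main obstacle I expect is the bookkeeping that separates the two contributions cleanly: one has to verify that the generic $\mathcal{O}(m'\log n')$ argument can be instantiated on the residual without double-counting the edges between residual vertices and the (already paid-for) blocks $\mathcal{X}_q^k,\mathcal{Y}_q^k$. This reduces, however, to the observation that once these blocks are known not to split, they appear in the worklist only as refinement sources once each (charged under (i)); when a residual class is extracted it scans all its incident edges, including those into $\mathcal{X}\cup\mathcal{Y}$, which is exactly what the residual edge count $\mathcal{O}(2^k\cdot k)$ already includes. Hence the standard analysis applies verbatim to the residual instance.
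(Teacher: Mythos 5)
Your proposal is correct and follows essentially the same route as the paper: the level-$k$ blocks of $\mathcal{X}$ and $\mathcal{Y}$ are stable (twin classes) and are effectively handled only a bounded number of times, so the algorithm works on a residual of size $\mathcal{O}(2^k\cdot k)$ to which the generic $\mathcal{O}(m\log n)$ bound gives $\mathcal{O}(2^k\cdot k^2)$. Your version merely makes the paper's terse statement (``the subclasses of $\mathcal{X}$ and $\mathcal{Y}$ are no longer handled'') more precise by explicitly charging each such block once at cost $\mathcal{O}(k^2)$.
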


\noindent
With these three parts of the algorithm execution taking each time $\mathcal{O}(2^k \cdot k^2)$, the total running time is also $\mathcal{O}(2^k \cdot k^2)$.

\begin{theorem}
	\label{lemma:Queue>Stack_runtime_queue}
	
	Algorithm~\ref{alg:refine}, using a queue as worklist, takes time $\mathcal{O}(2^k \cdot k^2)$ on $G^{\text{Q}}_k$.
	
\end{theorem}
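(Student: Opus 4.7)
The plan is to derive the theorem directly by composing the three preceding lemmas of this subsection, which partition the execution of the queue-based Algorithm~\ref{alg:refine} on $G^{\text{Q}}_k$ into two phases, each of cost $\mathcal{O}(2^k \cdot k^2)$.

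First I would invoke Lemma~\ref{lemma:Queue>Stack_runtime_part1&2} to bound the cost of rounds $0$ through $k+11$ by $\mathcal{O}(2^k \cdot k^2)$ and to extract the structural conclusion that after round $k+11$ the current coloring $\alpha'$ contains $X_q^k$, $Y_q^k$, $\mathcal{X}_q^k$, $\mathcal{Y}_q^k$ as color classes for every $0 \le q \le 2^k-1$. Then I would invoke Lemma~\ref{lemma:Queue>Stack_runtime_part3}, which provides an $\mathcal{O}(2^k \cdot k^2)$ bound for the work of refining any such coloring to the coarsest stable coloring. Adding the two upper bounds yields the claimed $\mathcal{O}(2^k \cdot k^2)$ total running time, which is exactly $\mathcal{O}(m)$ since $G^{\text{Q}}_k$ has $\mathcal{O}(2^k \cdot k^2)$ edges.

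The one point that requires care is that Lemma~\ref{lemma:Queue>Stack_runtime_part3} must be applied to the continuation of the already-running execution rather than to a fresh restart from $\alpha'$. I would verify this by observing that after round $k+11$ every cell still sitting in the worklist is either one of the sets $X_q^k$, $Y_q^k$, $\mathcal{X}_q^k$, $\mathcal{Y}_q^k$ or a subset of the residual small substructures (the paths $p_1, p_2, p_{\text{Q}}$, the starting and ending gadgets, the $A$-layer and the $X$-paths). All subsequent refinements therefore only consider edges incident to these residual parts, whose total edge count is $\mathcal{O}(2^k \cdot k^2)$. Since a fresh queue-based refinement from $\alpha'$ does at least as much work as the continuation (it reprocesses every cell of $\alpha'$ from scratch), the bound of Lemma~\ref{lemma:Queue>Stack_runtime_part3} is in fact an upper bound on the remaining work of the original execution, which completes the proof.
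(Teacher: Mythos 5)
Your proposal is correct and matches the paper's own argument, which likewise proves the theorem by simply combining Lemma~\ref{lemma:Queue>Stack_runtime_part1&2} (cost of the first $k+12$ rounds plus the structure of the coloring $\alpha'$) with Lemma~\ref{lemma:Queue>Stack_runtime_part3} (cost of finishing from any such $\alpha'$). Your extra remark about applying the second lemma to the continuation of the running execution rather than a fresh restart is a reasonable clarification of a point the paper leaves implicit, but it does not change the approach.
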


\begin{proof}
	
	Follows directly from Lemmas~\ref{lemma:Queue>Stack_runtime_part1}~,~\ref{lemma:Queue>Stack_runtime_part1&2}~and~\ref{lemma:Queue>Stack_runtime_part3}.
	
\end{proof}

\section{Priority Queues}
\label{sec:priorityQueues}

For the sake of completeness, we briefly discuss priority queues as worklists.
In particular, we consider a worklist which always chooses the smallest or largest class w.r.t.\ which the algorithm has not refined yet.
We include this since the use (some form) of a priority is employed by the 
state-of-the-art tool \emph{Traces}~\cite{practicaliso2}. In fact that tool combines a priority queue with a stack based strategy.
Specifically, a constant amount of classes are taken from the top of the stack and from those the smallest one is chosen to continue with.
If there is no unique smallest class, the one which appears first on the stack is chosen.

For both options regarding priority queues, we construct a graph class, on which the refinement has costs of $\Omega(m \log(n))$, whereas a linear color refinement is possible with another strategy for the worklist.
Both graph classes are based on the simplified graphs from Section~\ref{sec:Stack>Queue}.
A priority queue that always gives the largest or smallest class can be easily forced into the slow behavior by changing the sizes of the inner layers $X,\mathcal{X},\mathcal{Y},Y$, such that on each level all of $\mathcal{X}$ is handled before continuing with the next level.

For a maximum priority queue, it is enough to duplicate each vertex in $X$.
Then, after refining w.r.t.\ a level~$l$ block of $X$, the refinements w.r.t.\ the corresponding blocks of $\mathcal{X}$ and $\mathcal{Y}$ are done, but the remaining blocks of $X$ are preferred over the new $Y$ block, as they have double the size.
As this happens for all blocks of $X$, $\Theta(2^l)$ blocks of $\mathcal{X}$ are handled before any block of $Y$, and we get worst case behavior.

For minimum priority queues, we conversely change the set $\mathcal{Y}$ such that each block of it is larger than the corresponding block of $\mathcal{X}$.
Since in this case, all blocks of $X$ are immediately handled for each level, the worklist can now choose from the $\mathcal{X}$ blocks.
Each $X$ block distinguishes the corresponding $\mathcal{Y}$ block, but the new $\mathcal{Y}$ blocks are never chosen before all blocks of $\mathcal{X}$ have been handled. Therefore, we again get worst case costs.

\end{document}